\newcounter{notes}%
\newcommand{\marginnote}[1]{
\refstepcounter{notes}  
\nolinebreak
$\hspace{-5pt}{}^{\text{\tiny \rm \arabic{notes}}}$
\marginpar{\tiny \arabic{notes}) #1}
}
\newtheorem{cor}{Corollary}[section]
\newtheorem{theorem}[cor]{Theorem}
\newtheorem{lemma}[cor]{Lemma}
\newtheorem{Theorem}{Theorem}[section]
\newtheorem{Lemma}[cor]{Lemma}
\newtheorem{Proposition}[cor]{Proposition}
\newtheorem{introthm}{Theorem}
\theoremstyle{definition}
 \newtheorem{Definition}[cor]{Definition}
\newtheorem{Remark}[cor]{Remark}
\newtheorem{remark}[cor]{Remark}
\def\co{\colon\thinspace} 
\newcommand{\Del}{\mathrm{Del}}
\newcommand{\aff}{\mathrm{Aff}}
\newcommand{\CH}{\mathrm{CH}}
\newcommand{\D}{\mathrm{d}}
\newcommand{\cC}{{\mathcal C}}
\newcommand{\HH}{{\mathbb H}}
\newcommand{\HP}{{\mathbb{HP}}}
\newcommand{\R}{{\mathbb R}}
\newcommand{\X}{{\mathbb X}}
\newcommand{\AdS}{\mathbb{A}\mathrm{d}\mathbb{S}}
\def\co{\colon\thinspace}
\newcommand{\RR}{\mathbb R}
\newcommand{\RP}{\mathbb{RP}}
\newcommand{\OO}{\operatorname{O}}
\newcommand{\PO}{\operatorname{PO}}
\newcommand{\Vor}{\mathrm{Vor}}
\newcommand{\Isom}{\operatorname{Isom}}
\let\oldtocsection=\tocsection
\let\oldtocsubsection=\tocsubsection
\let\oldtocsubsubsection=\tocsubsubsection
\renewcommand{\tocsection}[2]{\hspace{0em}\oldtocsection{#1}{#2}}
\renewcommand{\tocsubsection}[2]{\hspace{1em}\oldtocsubsection{#1}{#2}}
\renewcommand{\tocsubsubsection}[2]{\hspace{2em}\oldtocsubsubsection{#1}{#2}}
\begin{document}

\title{Higher signature Delaunay decompositions}

\author{Jeffrey Danciger}
\address{Department of Mathematics, University of Texas at Austin}
\email{jdanciger@math.utexas.edu}
\urladdr{www.ma.utexas.edu/users/jdanciger}

\author{Sara Maloni}
\address{Department of Mathematics, Brown University}
\email{sara\_maloni@brown.edu}
\urladdr{http://www.math.brown.edu/$\sim$maloni}

\author{Jean-Marc Schlenker}
\address{Department of mathematics, University of Luxembourg}
\email{jean-marc.schlenker@uni.lu}
\urladdr{math.uni.lu/schlenker}

\thanks{Danciger was partially supported by grant DMS-1510254 and Maloni was partially supported by grant DMS-1506920, from the National Science Foundation. The authors also acknowledge support from U.S. National Science Foundation grants DMS-1107452, 1107263, 1107367 ``RNMS: GEometric structures And Representation varieties'' (the GEAR Network).}

\date{\today (v0)}

\begin{abstract}
A Delaunay decomposition is a cell decomposition in $\RR^d$ for which each cell is inscribed in a Euclidean ball which is empty of all other vertices. 
This article introduces a generalization of the Delaunay decomposition in which the Euclidean balls in the empty ball condition are replaced by other families of regions bounded by certain quadratic hypersurfaces. This generalized notion is adaptable to geometric contexts in which the natural space from which the point set is sampled is not Euclidean, but rather some other flat semi-Riemannian geometry, possibly with degenerate directions. 
We prove the existence and uniqueness of the decomposition and discuss some of its basic properties. In the case of dimension $d = 2$, we study the extent to which some of the well-known optimality properties of the Euclidean Delaunay triangulation generalize to the higher signature setting. In particular, we describe a higher signature generalization of a well-known description of Delaunay decompositions in terms of the intersection angles between the circumscribed circles.

\end{abstract}

\maketitle

\tableofcontents

\section{Introduction}
Let $X$ be a finite set of points in Euclidean space $\RR^d$ of dimension $d \geq 2$. A cell decomposition of the convex hull $\CH(X)$ of $X$ is called a \emph{Delaunay decomposition} for $X$ if it satisfies the \emph{empty ball condition:} each cell of the decomposition is inscribed in a unique round ball which does not contain any points of $X$ in its interior. A classical theorem of Delaunay \cite{del_sur} states that if $X$ is in generic position (meaning that no $d+1$ points lie in an affine hyperplane and no $d+2$ points lie on any sphere), then there exists a unique Delaunay decomposition for $X$, all of the cells of which are simplices. 

This article introduces a generalization of the Delaunay decomposition in which the balls in the empty ball condition above are replaced by other families of regions bounded by certain quadratic hypersurfaces. This generalized notion of Delaunay decomposition is adaptable to geometric contexts in which the natural space from which the point set $X$ is sampled is not Euclidean, but rather some other flat semi-Riemannian geometry, possibly with degenerate directions. For example, in the Minkowski plane $\RR^{1,1}$, we replace Euclidean circles with the hyperbolas of constant Minkowski distance from a point. In the degenerate plane $\RR^{1,0,1}$, thought of as a rescaled limit of the Euclidean plane in which one direction has become infinitesimal, we replace Euclidean circles with their rescaled limits, namely parabolas.

A quadratic form $Q$ on $\RR^d$ defines a natural geometric structure on $\RR^d$.  In the case that $Q$ is non-degenerate, 
we simply use $Q$ in place of the Euclidean norm to measure distances (possibly imaginary), angles, and other geometric quantities. The points within a constant $Q$-distance of a fixed point are called $Q$-balls and we substitute these in place of Euclidean balls to define a natural notion of a $Q$--Delaunay triangulation. In the case that $Q$ has some degenerate directions, we study a natural geometry which treats the degenerate directions as infinitesimal and define an appropriate notion of $Q$-ball which is slightly more subtle than in the non-degenerate case. Again we define a Delaunay decomposition in terms of an appropriate empty ball condition. We stress that in the case that $Q$ is not positive (or negative) definite, $Q$-Delaunay decompositions are not in general obtained via projective transformations from Euclidean ones.

In Section~\ref{classic} we extend a classical proof of the existence and uniqueness of Delaunay decompositions in the Euclidean case to the current setting under the necessary assumption that the point set is in {\em spacelike position} with respect to $Q$, meaning that any two points are positive real distance apart.

\begin{introthm}\label{thm:Del-gen}
Let $Q$ be a quadratic form on $\RR^d$ and let $X\subset \R^d$ be a finite set in spacelike position with respect to $Q$ and in generic position (meaning that no $d+1$ points lie in an affine hyperplane and no $d+2$ points lie on any $Q$--sphere). Then there exists a unique Delaunay decomposition for $X$. We will denote this decomposition by $\Del_{Q}(X)$
\end{introthm}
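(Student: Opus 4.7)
The plan is to adapt the classical paraboloid lifting argument to the $Q$-setting. Let $B_Q$ denote the symmetric bilinear form polarizing $Q$, so $Q(x) = B_Q(x,x)$, and consider the lift $\phi\colon \R^d \to \R^{d+1}$ given by $\phi(x) = (x, Q(x))$. The key algebraic observation is that $Q$-spheres in $\R^d$ correspond bijectively to affine hyperplanes in $\R^{d+1}$ via $\phi$: the $Q$-sphere of squared radius $r^2$ centered at $c$ is cut out by $Q(x-c) = r^2$, which rearranges to $Q(x) = 2 B_Q(x,c) + r^2 - Q(c)$; that is, $\phi(x)$ lies on the affine hyperplane $\{z = 2 B_Q(x,c) + r^2 - Q(c)\}$ of $\R^{d+1}$. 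Moreover, a point $y \in \R^d$ lies strictly inside the $Q$-ball if and only if $\phi(y)$ lies strictly below this hyperplane in the last coordinate.

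With this dictionary, the Delaunay empty ball condition for a cell $\sigma = \mathrm{conv}(x_0, \ldots, x_k) \subset \R^d$ translates to the statement that $\phi(x_0), \ldots, \phi(x_k)$ span a face of $\mathrm{conv}(\phi(X))$ supported by a hyperplane above which every other lifted point $\phi(y)$ lies---in other words, a face of the \emph{lower envelope} of $\mathrm{conv}(\phi(X))$ in $\R^{d+1}$. I would therefore define the candidate for $\Del_Q(X)$ as the image under the vertical projection $\pi(x,z) = x$ of the collection of lower faces of $\mathrm{conv}(\phi(X))$. Since $\mathrm{conv}(\phi(X))$ is a bounded polytope in $\R^{d+1}$ whose projection to $\R^d$ is $\CH(X)$, its lower envelope is the graph of a piecewise affine convex function over $\CH(X)$ and therefore projects bijectively, partitioning $\CH(X)$ into cells.

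The spacelike hypothesis enters through the elementary identity
\[
Q(x) + Q(y) - 2 Q\bigl(\tfrac{x+y}{2}\bigr) = \tfrac{1}{2} Q(x-y),
\]
which says that the lifted paraboloid is strictly convex along any secant joining two points in spacelike position. This guarantees that each $\phi(x)$ for $x \in X$ is genuinely a vertex of the lower envelope of $\mathrm{conv}(\phi(X))$, so the vertex set of the candidate decomposition is exactly $X$---no point is absorbed into the interior of a lower face or pushed onto the upper envelope. The generic position assumption translates to the statement that no $d+1$ of the points $\phi(x)$ are affinely dependent in $\R^{d+1}$ and no $d+2$ lie on a common affine hyperplane, so the lower envelope is canonically triangulated by simplices; existence and uniqueness of $\Del_Q(X)$ follow simultaneously.

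The main obstacle is the degenerate case. When $Q$ has null directions, the $Q$-ball is the subtle object introduced earlier in the paper, and one must verify that the above correspondence between $Q$-spheres and affine hyperplanes in $\R^{d+1}$ remains a bijection with the correct interpretation---the extra parameters in a hyperplane encoding precisely the infinitesimal data in the degenerate directions---so that ``below'' in $\R^{d+1}$ continues to capture the empty-ball condition faithfully. Once this is confirmed, the lower-envelope construction applies uniformly across all signatures, with the spacelike hypothesis ensuring strict convexity along every secant between points of $X$ and hence a well-defined simplicial lower envelope projecting onto the desired Delaunay decomposition of $\CH(X)$.
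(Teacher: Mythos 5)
Your construction is the same as the paper's: lift $X$ to the graph of $Q$ in $\R^{d+1}$, take the convex hull, and project the lower faces back to $\CH(X)$. The genericity and spacelike hypotheses are invoked in essentially the right places, and your identity $Q(x)+Q(y)-2Q\bigl(\tfrac{x+y}{2}\bigr)=\tfrac12 Q(x-y)$ is precisely the algebraic heart of the matter. However, there is one genuine gap, and it is exactly the point the argument must not skip when $Q$ is indefinite or degenerate. The definition of a $Q$--Delaunay cell requires the cell to be \emph{inscribed} in its empty $Q$--ball, i.e.\ actually \emph{contained} in it with vertices on the boundary. Your ``dictionary'' only records that the vertices lie on the $Q$--sphere and that the other points of $X$ lie outside the ball; it does not show that the simplex $\mathrm{conv}(x_0,\dots,x_k)$ itself sits inside the ball. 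For positive definite $Q$ this is automatic because $Q$--balls are convex, but for indefinite or degenerate $Q$ the region $\{Q(x)\le\varphi(x)+D'\}$ is in general \emph{not} convex (e.g.\ $x_1^2-x_2^2\le 1$ is the region between two branches of a hyperbola), so a simplex with vertices on the $Q$--sphere can easily poke outside the ball. You instead spend the spacelike hypothesis on showing that each $\phi(x)$ is a genuine vertex of the lower envelope --- true and worth saying, but not the critical check. The fix is a short extension of your own identity: for a convex combination $y=\sum t_i x_i$ of the cell's vertices, expanding $Q(y)$ bilinearly and replacing each cross term $2B_Q(x_j,x_k)$ by $Q(x_j)+Q(x_k)-Q(x_j-x_k)<Q(x_j)+Q(x_k)$ (spacelike position) gives $Q(y)\le\sum t_i Q(x_i)=\varphi(y)+D'$, i.e.\ the graph of $Q$ over the cell lies below the supporting hyperplane, which is exactly containment in the ball.

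One further remark: your ``main obstacle'' in the degenerate case is not actually an obstacle. The paper defines a $Q$--ball for arbitrary (possibly degenerate) $Q$ directly by the inequality $Q(x)\le\varphi(x)+D'$ rather than by a center and radius, so the correspondence between $Q$--spheres and affine hyperplanes in $\R^{d+1}$ is a tautology in every signature; the center-based description you start from is only available, and only needed, when $Q$ is non-degenerate. With that definition in hand, the lower-envelope argument (including the inscribed-ness check above) runs uniformly.
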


In Section \ref{quad} we will review some details on quadratic forms on $\R^d$ and their associated geometry. In the case that the quadratic form $Q$ is not degenerate, $\Del_{Q}(X)$ is naturally dual to a generalized Voronoi decomposition, the vertices of which are the centers of the empty $Q$--balls of $\Del_Q(X)$ (Proposition \ref{Q-Voronoi}).
The case that $Q$ is degenerate is more delicate. In this case it is natural to split $\RR^d = \RR^m \oplus \RR^{d-m}$ as the sum of a non-degenerate subspace $\RR^m$ complementary to the degenerate subspace $\RR^{d-m}$. The associated geometry that we study treats the degenerate directions as infinitesimal so that a point $(p_0,v)$ in this geometry is thought as a point $p_0$ in the non-degenerate subspace $\RR^m$ plus infinitesimal information $v \in \RR^{d-m}$ describing the normal derivative of a path $p_t$ leaving the subspace $\RR^m$ going into $\RR^d$. We show in Theorem \ref{thm:rescaled} that the Delaunay decomposition $\Del_Q(X)$ in this case predicts for short time $t > 0$ the combinatorics of the path of Delaunay decompositions $\Del_{Q'}(X_t)$ for points $X_t$ whose first order data agrees with that determined by $X$ and with respect to a non-degenerate quadratic form $Q'$ obtained by making the infinitesimal directions finite.
 
In Section \ref{d2}, we study the extent to which some of the well-known optimality properties of the Euclidean Delaunay triangulation generalize to the non-Euclidean setting in the case of dimension $d = 2$. There are essentially two non-Euclidean quadratic forms $Q$, one Lorentzian (signature $(1,1)$) and one degenerate. In the first case, the $Q$-circles are hyperbolas and in the second case they are parabolas. 
As in the Euclidean case, in either of these geometries weights may be assigned to the edges of a triangulation by summing the angles at opposite vertices, or alternatively by measuring the angle of intersection of the circumscribed $Q$--circles containing a given edge. Using recent results~\cite{dan_pol} about three-dimensional ideal polyhedra in constant negative curvature geometries, in Theorem \ref{thm:prescribe-angles} we characterize the possible edge weights that may occur for a $Q$--Delaunay triangulation in either case. In addition, using a generalized Thales' Theorem, Theorem \ref{thm:angle-optimization} proves that $Q$--Delaunay triangulations are optimally fat: they maximize the angle sequence (ordered lexicographically) over all triangulations as in the Euclidean case. 
 
In Section \ref{sc:5}, we briefly outline some possible applications of the higher signature 
Delaunay decompositions developed here, and mention a few open questions.

\section{Classical Delaunay decomposition and its generalizations}\label{classic}

Recall that a ball is defined in terms of the Euclidean norm, $\|x\|^2 = x^Tx$. Specifically, the ball of radius-squared $D\in \RR$ and center $p \in \RR^d$ is the set of points $x \in \RR^d$ satisfying the inequality:
\begin{align}
\|x - p\|^2 &\leq D \label{eqn:ball1}.
\end{align}
This may be expressed as the equivalent condition that:
\begin{align}
\|x\|^2 &\leq \varphi(x) + D', \label{eqn:ball2}
\end{align}
where $\varphi: \RR^d \to \RR$ is the linear functional defined by $\varphi(x) = 2p^Tx$ and the constant $D' = D - \|p\|^2$. In other words, a ball is exactly the set of points in $\RR^d$ whose image in $\RR^{d+1}$ on the graph of the function $\|\cdot \|^2$ lies below some affine hyperplane.
The norm-squared function $x \mapsto \|x\|^2 = x^Tx$ is one example of a \emph{quadratic form}, which we define here as any function $Q: \RR^d \to \RR$ of the form $$Q(x) = x^T A x,$$ where $A$ is some symmetric $d \times d$ matrix. If the matrix $A$ is non-singular, then we call $Q$ {\it non-degenerate}, and if the matrix $A$ is positive definite (respectively indefinite), then we call $Q$ {\it positive
definite} (respectively {\it indefinite}). Up to transforming $\RR^d$ by a linear isomorphism, a quadratic form $Q$ is determined entirely by the number of positive, negative, and zero eigenvalues of~$A$. This data is referred to as the signature of $Q$.

Let $Q$ be a non-degenerate quadratic form, possibly indefinite, on $\RR^d$. The term \emph{$Q$--ball} refers to any region defined by the inequality 
\begin{align}\label{eqn:Qball1}
Q(x-p) \leq D,
\end{align} 
where $p \in \RR^d$ and $D \in \RR$ (possibly negative). Equivalently, a $Q$--ball is any region defined by the inequality
\begin{align}\label{eqn:Qball2}
Q(x) \leq \varphi(x) + D',
\end{align}
where $\varphi: \RR^d \to \RR$ is any linear functional and $D' \in \RR$ is any constant.

Of course, Inequality~\eqref{eqn:Qball1} may be described in the form~\eqref{eqn:Qball2} by simply taking $\varphi(x) = 2p^TAx$ and $D' = D + Q(p)$. However, solving for $p, D$ in terms of $\varphi, D'$ in order to deduce inequality~\eqref{eqn:Qball1} from~\eqref{eqn:Qball2} 
requires the quadratic form $Q$ (that is, the matrix $A$) to be non-degenerate. Hence, Inequalities~\eqref{eqn:Qball1} and~\eqref{eqn:Qball2} define different families of regions in the case that $Q$ is degenerate. It turns out that the regions defined by Inequality~\eqref{eqn:Qball2} define a much more interesting theory of Delaunay decomposition in this case.

\begin{Definition}[$Q$--ball]\label{def:Qball}
Let $Q$ be any quadratic form (possibly degenerate) on $\RR^d$. The term \emph{$Q$--ball} refers to the region defined by Inequality~\eqref{eqn:Qball2}, where $\varphi \co \RR^d \to \RR$ is a linear functional and $D' \in \RR$ is a constant. The boundary of a $Q$--ball is called a \emph{$Q$--sphere}.
\end{Definition}

\noindent We will see in Section~\ref{sec:degen} that a degenerate form $Q$ defines a special geometry in which the degenerate directions of $Q$ are treated as infinitesimal and we think of $Q$ as a degeneration of a family of non-degenerate quadratic forms $Q_t'$ as $t \to 0$. Under that interpretation, $Q$--balls may be thought of as limits of $Q'_t$-balls with center escaping to infinity as $t \to 0$.

In order to obtain a sensible theory of Delaunay decompositions using a semi-definite quadratic form $Q$, we must restrict the point sets under consideration. We make the following definition:

\begin{Definition}
Let $Q$ be a quadratic on $\RR^d$ (possibly indefinite and/or degenerate). A finite set of points $X$ in $\RR^d$ is said to be in \emph{spacelike position} with respect to $Q$ if, for any distinct points $x,y \in X$, the positivity condition $Q(x - y) > 0$ holds.
\end{Definition}

Replacing the notion of ball with that of $Q$--ball in the definition of the classical Delaunay decomposition yields the following generalization.

\begin{Definition}[$Q$--Delaunay decomposition]\label{def:Del-gen}
Let $Q$ be any quadratic form on $\RR^d$ and let $X$ be a finite set of points in $\RR^d$ in spacelike position with respect to $Q$. A cell decomposition of the convex hull $\CH(X)$ of $X$ with $0$-skeleton $X$ is called a \emph{$Q$--Delaunay decomposition} if it satisfies the \emph{empty $Q$--ball condition:}  each cell of the decomposition is inscribed in a unique $Q$--ball which does not contain any points of $X$ in its interior. We will call the boundary of such a $Q$--ball an {\em empty $Q$--sphere}.
\end{Definition}

\begin{figure}
\includegraphics[width = 5.0in]{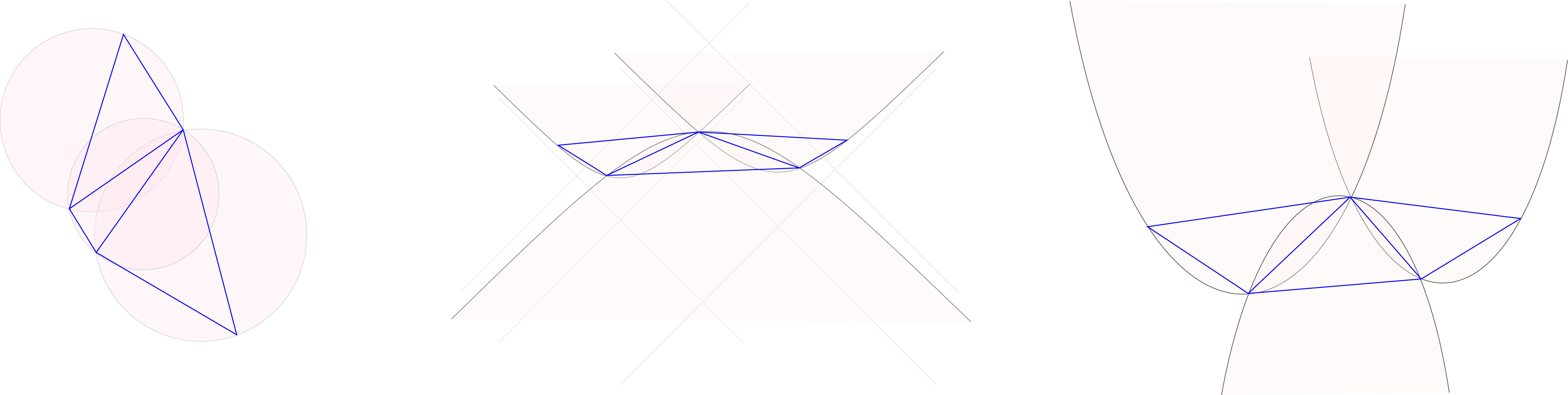}
\captionsetup{singlelinecheck=off}\caption[An example of the $Q$--Delaunay decomposition of five points in spacelike general position in the case that:]{An example of the $Q$--Delaunay decomposition of five points in spacelike general position in the case that:
\begin{itemize}
  \item (left) $Q(x_1,x_2) = x_1^2 + x_2^2$ is the standard Euclidean norm, and so the $Q$--balls are bounded by circles;
  \item (middle) $Q(x_1, x_2) = x_1^2 -x_2^2$ is the standard Minkowski norm, and so the $Q$--balls are bounded by hyperbolas;
  \item (right) $Q(x_1,x_2) = x_1^2$ is a degenerate norm, in which case the $Q$--balls are bounded by parabolas.
\end{itemize}
}
\end{figure}

\begin{figure}
\centering
\begin{minipage}{0.45\textwidth}
\centering
\includegraphics[width = 7cm]{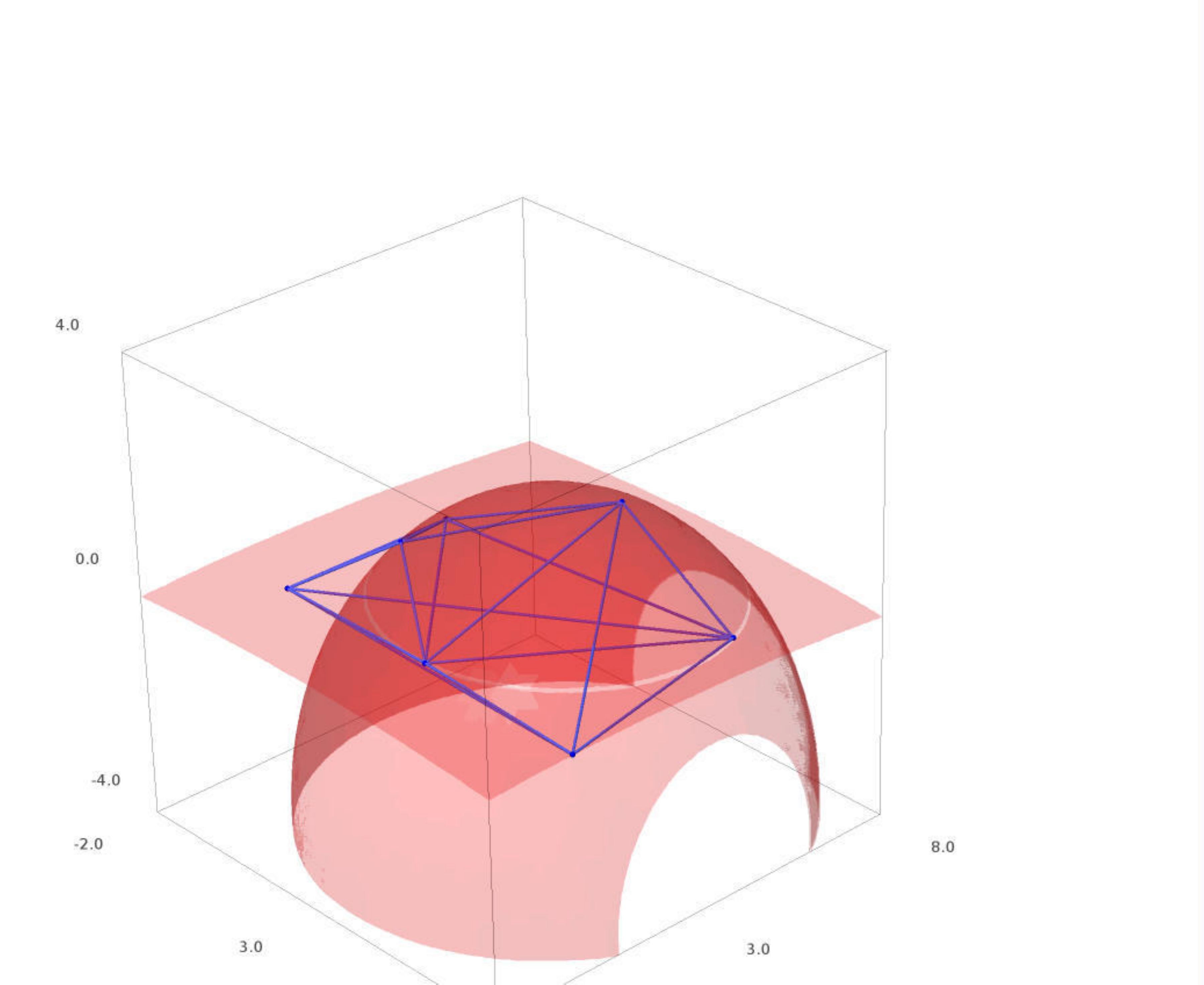}
\end{minipage}\hfill
\begin{minipage}{0.45\textwidth}
\centering
\includegraphics[width = 7cm]{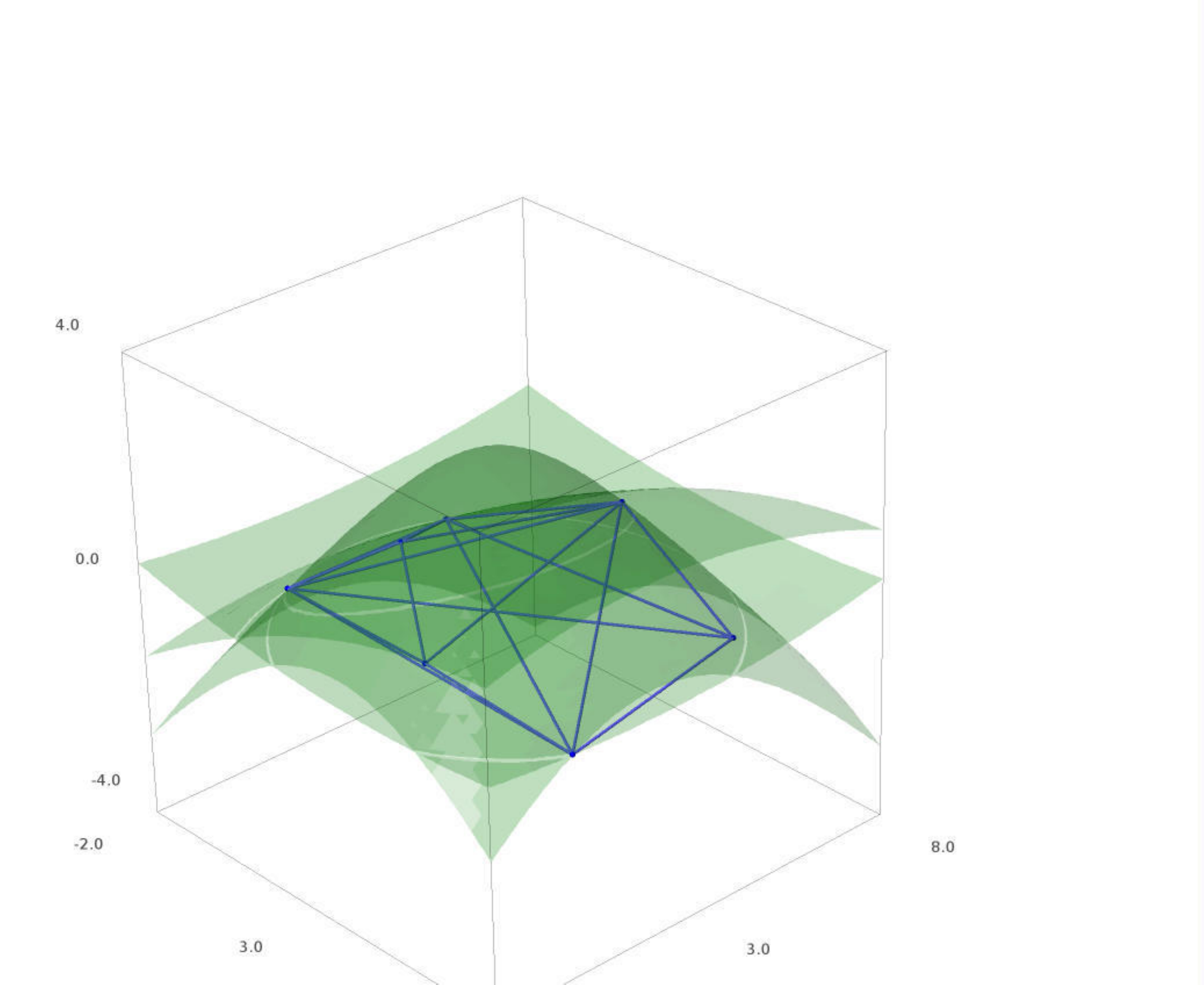}
\end{minipage}
\caption{The $Q$-Delaunay decompositions of the same 7 points when $Q$ is the Euclidean
(left) and the Minkowski (right) bilinear form on $\R^3$. Only some of the $Q$-balls 
have been drawn.}
\end{figure}

We now prove Theorem \ref{thm:Del-gen}. The proof is adapted from a well-known proof in the case that $Q$ is the standard Euclidean norm, see \cite{bro_vor}.

\begin{proof}[Proof of Theorem \ref{thm:Del-gen}]
  Given the quadratic form $Q$ on $\R^d$, we consider the quadratic form $\widehat Q$ on $\RR^{d+2}$ defined by $$\widehat Q(x_1, \ldots, x_d, x_{d+1}, x_{d+2}) := Q(x_1, \ldots, x_d) -x_{d+1}x_{d+2}.$$
  Associate to $\widehat Q$, we define the open subset $\mathbb X \subset \mathbb{RP}^{d+1}$ as follows:
  $$\mathbb X = \{x \in \mathbb{R}^{d+2}\setminus\{0\} \mid \widehat Q(x) < 0\}/\R^*,$$ 
  that is the set of negative lines in $\RR^{d+2}$ with respect to $\widehat Q$. Then $\mathbb X$ is a model for semi-Riemannian geometry, possibly with degenerate directions, associated with $\widehat Q$ with constant negative sectional curvature. In the affine chart $x_{d+2} =1$, the ideal boundary of $\mathbb X$ is described by the equation $$x_{d+1} = Q(x_1, \ldots, x_d),$$ i.e. the graph $\mathscr P$ in $\RR^{d+1}$ of $Q$. The projection map $\varpi: \RR^{d+1} \to \RR^d$ onto the first $d$-coordinates, defined by $$\varpi(x, x_{d+1}) = x,$$ restricts to a homeomorphism $\mathscr P \to \RR^d$, which, in the case that $Q$ is the usual Euclidean norm-squared, is usually called \emph{stereographic projection}.

Now, consider the convex hull $\mathscr C$ of the inverse images $X' = \varpi^{-1}(X)$ of $X$ on~$\mathscr P$. $\mathscr C$ may be regarded as a convex polyhedron in $\mathbb X$ whose vertices lie on the ideal boundary $\partial \mathbb X$. By the genericity assumption, $\mathscr C$ is a $d+1$-dimensional polyhedron, whose boundary is a union of $d$-dimensional simplices which come in two types, called \emph{bottom} and \emph{top}. A face $f$ on the boundary $\partial \mathscr C$ of $\mathscr C$ is a bottom face if its unit normal, directed toward the interior of $\mathscr C$, has positive $x_{d+1}$ component. A face $f$ is a top face if its inward directed unit normal has negative $x_{d+1}$ component. The unit normal to a face never has zero $x_{d+1}$ component by the genericity assumption.

\begin{figure}[h]
{
\centering
\def\svgwidth{6.0cm}
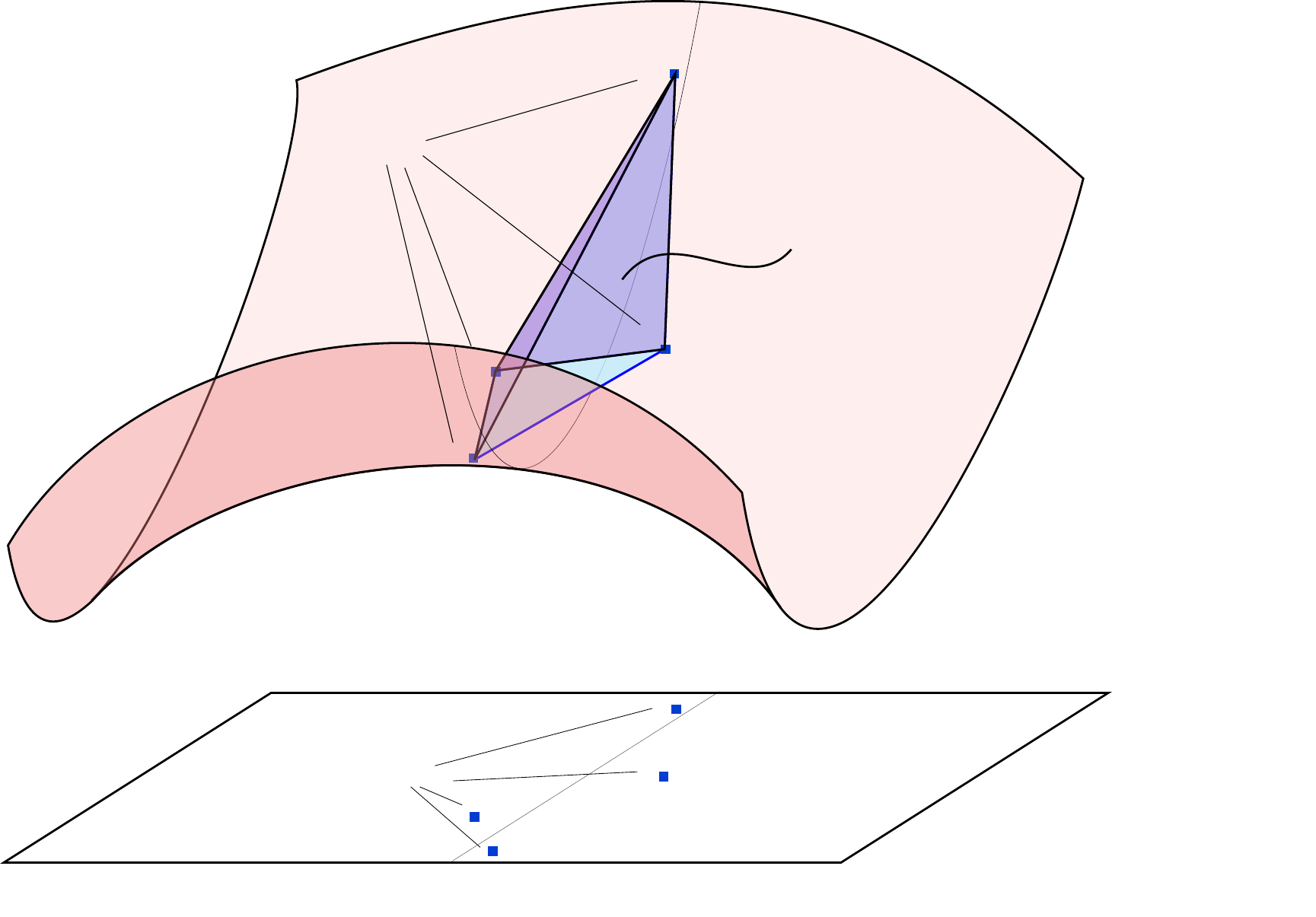
\def\svgwidth{6.0cm}
\input{projection-to-DelQ.pdf_tex}
}
\caption{To find the $Q$--Delaunay decomposition $\Del_Q(X)$, first lift $X$ to its image $X'$ on the graph of $Q$ and take the convex hull $\mathscr C$ of $X$ (left panel). Then project the bottom faces $\partial \mathscr C_-$ down to $\RR^d$ (right panel).}
\end{figure}

We claim that $\varpi$ restricts to a homeomorphism from the union $\partial \mathscr C_-$ of the bottom faces of $\partial \mathscr C$ to the convex hull $\CH(X)$ of $X$ in $\RR^d$ and that $\varpi$ takes the faces of $\partial \mathscr C_-$ to the cells of the unique $Q$--Delaunay decomposition for $X$. To see this, observe that the condition that all points $(x, Q(x))$ of $X'$ lie (weakly) above the affine hyperplane spanned by some face $f$ of $\partial \mathscr C_-$ is described by an inequality of the form
$$Q(x) \geq \varphi(x) + D'$$
for some linear functional $\varphi$ and constant $D'$, with equality exactly when $(x, Q(x))$ is a vertex of $f$. Hence, the cell decomposition of $\RR^d$ defined by projection of the cell structure of $\partial \mathscr C_-$ satisfies that the vertices of any face $\varpi(f)$ lie on a $Q$--sphere $S_f$ which bounds a $Q$--ball $B_f$ whose interior does not contain points of $X$. 
In the case that $Q$ is positive definite the proof, which up to this point is well-known, is complete.
There is, however, one subtle point to check if $Q$ is not positive definite, which is the case of interest in this article. Definition~\ref{def:Del-gen} requires that the simplex $\varpi(f)$ be \emph{inscribed} in $B_f$: so we must check that $\varpi(f)$ is actually contained in~$B_f$. This will follow from the assumption that $X$ is in spacelike position. Let $\langle \cdot, \cdot \rangle_Q $ denote the inner product associated to $Q$, defined by $\langle x, y\rangle_Q = x^T Ay$ for all $x,y \in \RR^d$. 
Then 
\begin{equation}\label{eqn:useful}
Q(x - y) > 0 \iff 2\langle x, y\rangle_Q < Q(x) +Q(y).
\end{equation}
 Let $x_1',\ldots,x_m'$ be vertices of the face $f$, where $x_k' = (x_k, Q(x_k)),$ for each $k \in \{1, \ldots, m\}$. A typical point of $f$ is a convex combination of the form $y = t_1x_1 + \cdots + t_mx_m$, where $t_1, \ldots, t_m$ are positive coefficients such that $t_1 + \cdots + t_m = 1$.
Then
\begin{align*}
Q(y) &= \left\langle \sum_{i=1}^m t_i x_i, \sum_{j=1}^m t_jx_j\right\rangle_Q \\
&= \sum_{i=1}^m t_i^2 Q(x_i) + 2\sum_{j < k} t_jt_k\langle x_j, x_k\rangle_Q\\
&\leq \sum_{i=1}^m t_i^2 Q(x_i) + \sum_{j < k} t_jt_k(Q(x_j) + Q(x_k))\\
& = \sum_{i=1}^m t_i(t_1 + \cdots +t_m) Q(x_i) = \sum_{i=1}^m t_i Q(x_i)\\
&= \sum_{i=1}^m t_i (\varphi(x_i) + D)  = \varphi(y) + D\\
&\implies Q( y) \leq \varphi(y) + D.
\end{align*}
The hypotheses that the points $x_1, \ldots, x_m$ are in spacelike position was applied (via~\eqref{eqn:useful}) in the third line. Note that equality holds if and only if $y = x_k$ for some $k \in \{1, \ldots, m\}$. It follows that $\varpi(f) \subset B_f$ with $\varpi(f) \cap S_f = \{x_1, \ldots, x_m\}$, that is $\varpi(f)$ is inscribed in $B_f$. The same ideas can be used to prove the uniqueness.
\end{proof}

\begin{remark}\label{full}
  Note that $\varpi$ restricts to a homeomorphism from the union $\partial \mathscr C_+$ of the top faces of $\partial \mathscr C$ to the convex hull $\CH(X)$ of $X$ in $\RR^d$ and that $\varpi$ takes the faces of $\partial \mathscr C_+$ to the cells of the unique decomposition for $X$ where each $Q$--ball satisfies the {\it full $Q$--ball condition}: each cell of the decomposition is inscribed in a unique $Q$--ball such that it contains all the points of $X$ in its closure. We will call the boundary of such a $Q$--ball a {\em full $Q$--sphere}.
\end{remark}

\begin{remark}
In the familiar case that $Q$ is the Euclidean norm, the surface $\mathscr P$ in the proof above may be naturally interpreted as the ideal boundary of the hyperbolic space $\HH^{d+1}$ with one point removed. The convex hull $\mathscr C$ of $X'$ is an ideal polyhedron in $\HH^{d+1}$, the geometry of which gives information about the geometry of the Delaunay decomposition of $X$. In the general case a similar interpretation is possible. If $Q$ is non-degenerate of signature $(p,q)$, then $\mathscr P$ identifies with (a dense subset of) the ideal boundary of the negatively curved semi-Riemannian space $\HH^{p+1,q}$ of signature $(p+1,q)$ and constant negative curvature. If $Q$ is degenerate, then $\mathscr P$ may be interpreted as (a dense subset of) the boundary of a degenerate geometry obtained from some $\HH^{p+1,q}$ as a limit. See Section~\ref{sec:Hpq}.
\end{remark}

\section{The geometry associated to a quadratic form}\label{quad}

\subsection{Non-degenerate quadratic forms}\label{sec:non-degen}

\subsubsection{Flat semi-Riemannian geometry}

Throughout this section, we assume $Q$ is a non-degenerate quadratic form on $\RR^d$. The numbers $p$ and $q$ of positive and negative  eigenvalues of the matrix $A$ of $Q$ 
satisfy that $p+q =d$. The pair $(p,q)$ is called the signature of $Q$. 
After changing coordinates by a linear transformation, we may assume that $Q$ is 
the standard quadratic form of signature $(p,q)$, meaning its matrix $A = I_p \oplus -I_q$ 
is simply the diagonal matrix with $p$ $(+1)$'s followed by $q$ $(-1)$'s on the diagonal. 
The inner product $\langle \cdot, \cdot \rangle_Q$ induced by $Q$ is given by the simple 
expression 
$$\langle x , y \rangle_Q = x_1y_1 + \cdots + x_p y_p - x_{p+1} y_{p+1} - \cdots -x_{d} y_{d}~.$$
We assume that $p,q>0$, since the case $q = 0$ is the classical Euclidean case and the case $p = 0$ is un-interesting (there are no point sets $X$ in spacelike position when $|X| > 1$). 

The inner product $\langle \cdot, \cdot \rangle_Q$ defines, by parallel translation, a flat semi-Riemannian metric on the affine space of $\RR^d$. This space and metric together is denoted by $\RR^{p,q}$. It is the simply connected model for flat semi-Riemannian geometry of signature $(p,q)$. The isometry group of $\RR^{p,q}$ is precisely $$\Isom(\RR^{p,q}) = \operatorname{Aff}(\OO(Q)) = \OO(p,q) \ltimes \RR^d,$$
where $\OO(Q) = \OO(p,q)$ denotes the linear transformations of $\RR^d$ preserving $Q$. 
Of course, any natural construction in flat semi-Riemannian geometry of signature $(p,q)$ should be invariant under this group. 

\begin{Proposition}\label{invariance1}
The set of $Q$--balls is invariant under $\Isom(\RR^{p,q})$, hence so is the notion of $Q$--Delaunay decomposition: if $f \in \Isom(\RR^{p,q})$ and $X \subset \RR^{p,q}$ is in spacelike and generic position, then $\Del_Q(f(X))=f(\Del_Q(X))$.
\end{Proposition}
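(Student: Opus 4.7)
The plan is to verify first that $Q$-balls are preserved setwise by the action of $\Isom(\RR^{p,q})$, and then deduce the equivariance of the Delaunay decomposition from the uniqueness statement in Theorem~\ref{thm:Del-gen}.

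An element $f\in \Isom(\RR^{p,q})$ has the form $f(x)=Lx+b$ with $L\in \OO(p,q)$ and $b\in\RR^d$. The defining property $L^TAL=A$ is equivalent to $Q(Lx)=Q(x)$ for all $x$. Given a $Q$-ball $B=\{x\in\RR^d \mid Q(x-p)\leq D\}$ in the form~\eqref{eqn:Qball1}, I would substitute $y=f(x)$ and compute
\[
Q(x-p)=Q\bigl(L^{-1}(y-b)-p\bigr)=Q\bigl(L^{-1}(y-f(p))\bigr)=Q(y-f(p)),
\]
the last equality using that $L^{-1}\in\OO(p,q)$. Hence $f(B)=\{y\mid Q(y-f(p))\leq D\}$ is again a $Q$-ball (with center $f(p)$ and the same radius-squared $D$). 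This is really the only computational step.

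Next I would check that $f(X)$ satisfies the hypotheses of Theorem~\ref{thm:Del-gen}. The spacelike condition is immediate: $Q(f(x)-f(y))=Q(L(x-y))=Q(x-y)>0$ for distinct $x,y\in X$. Generic position is preserved because $f$ is affine, so it sends affine hyperplanes to affine hyperplanes, and because $f$ sends $Q$-spheres to $Q$-spheres by the previous paragraph.

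Finally, I would apply $f$ to the decomposition $\Del_Q(X)$. Since $f$ is a homeomorphism of $\RR^d$ that sends $\CH(X)$ onto $\CH(f(X))=f(\CH(X))$ and $X$ onto $f(X)$, the image $f(\Del_Q(X))$ is a cell decomposition of $\CH(f(X))$ with $0$-skeleton $f(X)$. For each cell $c$ of $\Del_Q(X)$ inscribed in the empty $Q$-ball $B_c$, the image $f(c)$ is inscribed in $f(B_c)$, which is a $Q$-ball, and its interior contains no point of $f(X)$ since the interior of $B_c$ contained no point of $X$. Thus $f(\Del_Q(X))$ is a $Q$-Delaunay decomposition for $f(X)$, and the uniqueness part of Theorem~\ref{thm:Del-gen} yields $f(\Del_Q(X))=\Del_Q(f(X))$. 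There is no real obstacle here; the only point worth being careful about is that equivariance of $Q$-balls uses the non-degeneracy of $Q$ (so that~\eqref{eqn:Qball1} and~\eqref{eqn:Qball2} agree), which is part of the standing assumption of this subsection.
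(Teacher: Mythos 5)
Your proof is correct and is exactly the standard argument the paper has in mind (the paper in fact states Proposition~\ref{invariance1} without proof, treating it as immediate): $\OO(p,q)\ltimes\RR^d$ carries $Q$-balls to $Q$-balls, preserves spacelike and generic position, and uniqueness from Theorem~\ref{thm:Del-gen} then forces $f(\Del_Q(X))=\Del_Q(f(X))$. Your closing remark that non-degeneracy of $Q$ is what makes the centered form~\eqref{eqn:Qball1} equivalent to~\eqref{eqn:Qball2} is a worthwhile precision, consistent with the standing assumption of that subsection.
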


\noindent In Section \ref{mobius}, we will show that in fact the $Q$--Delaunay decomposition is invariant under a larger set of transformations, specifically a natural subset (depending on $X$) of M\"obius transformations (Theorem~\ref{Mob}).

The quadratic form $Q$ may be used to measure lengths, just as in the Euclidean setting, with some minor differences.
First, note that the $Q$--distance-squared $Q(x-y)= \langle x-y, x-y \rangle_Q$ between two distinct points $x, y \in \RR^d$ may be positive, negative, or zero. Borrowing terminology from the Lorentzian setting ($p = d-1, q=1$), we call the displacement between $x$ and $y$ \emph{spacelike} if $Q(x - y) > 0$, \emph{timelike} if $Q(x-y) < 0$, or \emph{lightlike} if $Q(x-y) = 0$. When $Q(x-y) < 0$, one may think of the distance between $x$ and $y$ as an imaginary number, however it is more convenient to simply work with $Q(x-y)$ rather than its square root.

\subsubsection{$Q$--Voronoi decompositions}

As in the classical setting, the $Q$--Delaunay decomposition is dual to an analogue of the Voronoi decomposition, which we call the $Q$--Voronoi decomposition.

\begin{Definition}
Let $X \subset \RR^d$ be finite and let $Q$ be a non-degenerate quadratic form. Then for each $x \in X$, we define the Voronoi region of $x$ to be $$\Vor_Q(x) := \{y \in \RR^d : Q(x-y) \leq Q(x'-y) \text{ for all } x' \in X\}.$$
\end{Definition}

\begin{Proposition}\label{Q-Voronoi}
Suppose $X \subset \RR^d$ is a finite set in spacelike and generic position with respect to $Q$. Then 
\begin{enumerate}
\item For each $x \in X$, $\Vor_Q(x)$ is a (possibly unbounded) convex polyhedron,
with timelike codimension 1 faces, containing $x$ in its interior.
\item The collection of $\Vor_Q(x)$ for all $x \in X$ gives a polyhedral decomposition of $\RR^d$ which we call the \emph{$Q$--Voronoi decomposition} with respect to $X$. It will be denoted by $\Vor_Q(X)$.
\item For each facet $f$ of codimension $k$ of $\Vor_Q(X)$, there exists $k+1$ unique points $p_0, p_1, \ldots, p_k$, distinct from one another, such that $$f = \Vor_Q(p_0) \cap \Vor_Q(p_1) \cap \cdots \cap \Vor_Q(p_k).$$ 
\item The vertices of $\Vor_Q(X)$ are the centers of the empty $Q$--balls of $\Del_Q(X)$. 
\end{enumerate}
\end{Proposition}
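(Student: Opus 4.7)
The guiding observation is that, for each pair $x, x' \in X$, the defining inequality $Q(x-y) \leq Q(x'-y)$ expands, using bilinearity, to the affine-linear inequality
$$2\langle x-x', y\rangle_Q \geq Q(x) - Q(x')$$
in $y$. Consequently $\Vor_Q(x)$ is the intersection of $|X|-1$ closed half-spaces, hence a convex polyhedron, establishing the polyhedral part of (1). Each bounding hyperplane has $Q$-normal direction $x - x'$, which is spacelike by hypothesis, so its $Q$-orthogonal complement carries an induced form of signature $(p-1, q)$; since $q \geq 1$, this face is timelike. The point $x$ itself lies in the interior of $\Vor_Q(x)$ because spacelike position gives $0 = Q(x-x) < Q(x - x')$ strictly for every $x' \neq x$, so $x$ satisfies every defining inequality strictly.

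For the covering claim in (2), any $y \in \RR^d$ belongs to $\Vor_Q(x_0)$ whenever $x_0 \in X$ minimizes $x \mapsto Q(x-y)$; since $X$ is finite such a minimizer exists. Interiors are disjoint: if $y$ lay in the interior of $\Vor_Q(x)$ with $x \neq x_0$, the strict inequalities $Q(x - y) < Q(x'-y)$ for all $x' \neq x$ would force $Q(x-y) < Q(x_0 - y)$, contradicting the choice of $x_0$. For (3), a codimension-$k$ face of $\Vor_Q(p_0)$ arises by saturating exactly $k$ linearly independent inequalities, i.e.\ by selecting additional points $p_1, \ldots, p_k$ with $Q(p_i - y) = Q(p_0 - y)$ for $i \le k$ and $Q(x' - y) > Q(p_0 - y)$ for every other $x' \in X$; this identifies the face with $\bigcap_{i=0}^k \Vor_Q(p_i)$. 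The genericity hypothesis (no $d+2$ points $Q$-cospherical, no $d+1$ affinely dependent) ensures that the list $\{p_0, \ldots, p_k\}$ is uniquely determined by the face, giving the desired uniqueness statement.

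For (4), a vertex $v$ of $\Vor_Q(X)$ corresponds via (3) to a collection of $d+1$ points $p_0, \ldots, p_d$ with $Q(p_i - v) = R$ independent of $i$, and $Q(x - v) > R$ for every $x \in X \setminus \{p_0, \ldots, p_d\}$. Non-degeneracy of $Q$ means the equation $Q(x - v) = R$ defines a unique $Q$-sphere through $p_0, \ldots, p_d$, bounding a $Q$-ball whose interior is disjoint from $X$. By Theorem~\ref{thm:Del-gen}, this is precisely the empty circumscribed $Q$-ball of the top-dimensional simplex of $\Del_Q(X)$ with vertices $p_0, \ldots, p_d$; conversely, the center of any empty $Q$-ball of $\Del_Q(X)$ satisfies the equidistance-and-strict-dominance conditions above, hence is a vertex of $\Vor_Q(X)$.

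The main subtlety will be upgrading (2) and (3) from a mere cover to an honest polyhedral decomposition, ensuring that incidences of lower-dimensional faces across different Voronoi regions match consistently. This is where the genericity assumption does genuine work, ruling out the coincidences that would otherwise disrupt the expected dimension counts. The other places where the proof departs from the classical Euclidean argument are purely to accommodate the signature: spacelike position is used to ensure the separating hyperplanes are well-defined and timelike (so that $x \in \operatorname{int} \Vor_Q(x)$), and non-degeneracy of $Q$ is used in (4) to recover the center $v$ uniquely from the equidistance data.
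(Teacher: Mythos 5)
Your proof is correct and follows essentially the same route as the paper's: each defining inequality is a half-space bounded by a timelike hyperplane (using spacelike position), genericity controls which inequalities can be simultaneously saturated in a codimension-$k$ facet, and vertices are identified with centers of empty $Q$-balls via the equidistance-plus-strict-dominance characterization. You simply spell out the affine-linear expansion and the correspondence with the circumscribed empty balls in more detail than the paper does.
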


\begin{remark}
  It is also natural to define the {\em inverse Voronoi region} as follows: for each $x \in X$, consider the set $$\Vor^{-1}_Q(x) = \{y \in \RR^d : Q(x-y) \geq Q(x'-y) \text{ for all } x' \in X\}.$$ Then, denoting by $\Vor^{-1}_Q(X)$ the collection of $\Vor^{-1}_Q(x)$ for all $x \in X$, the vertices of $\Vor^{-1}_Q(X)$ are the centers of the full $Q$--balls, as defined in Remark \ref{full}.
\end{remark}

\begin{proof}
Since $X$ is in spacelike position, for all distinct $x, p \in X$,
the set of points $y\in \RR^d$ such that $Q(x-y) \leq Q(p-y)$ is a half-space, bounded by a 
timelike hyperplane, containing $x$ in its interior. Therefore, $\Vor_Q(x)$ is the union of finitely many
half-spaces (each bounded by a timelike hyperplane) and it is therefore a convex polyhedron
(which might however be unbounded). Moreover $\Vor_Q(x)$ contains $x$ in its interior, and
the first point is proved.

The second point is clear since, by definition, each point $y\in \RR^d$
is contained in at least one of the $\Vor_Q(x)$ for some $x \in X$. 

The third point follows from the first point and from the hypothesis that 
$X$ is in generic position: given a codimension $k$ facet $f$ of $\Vor_Q(X)$ in the
boundary of $\Vor_Q(p_0)$, for some $p_0 \in X$, there are $k$
other pairwise distinct points $p_1, \cdots, p_k\in X\setminus\{p_0\}$ such that 
for all $y\in f$ and for all $j\in \{ 0,\cdots, k\}$, we have $Q(y-p_i)=Q(p'_j-y)$.
Moreover there cannot be more than $k$ such points, by the genericity 
hypothesis. 

The last point follows from the third point because a vertex $v$ is a codimension $d$ facet. Hence, $v$ is $Q$-equidistant to $d+1$ points of $X$ and is further away from the other points. 
\end{proof}

\begin{Remark}
Note that it is necessary to suppose that $X$ is in spacelike position to have $x \in \Vor_Q(x)$.
  The hypothesis  that $X$ is in generic position, however, is not as essential. 
  If it is not satisfied, the same convex hull construction naturally defines a $Q$--Delaunay decomposition some of whose cells may not be simplices. The codimension $0$
  non-simplicial cells of the decomposition then correspond to empty spheres containing 
  more than $d+1$ points of $X$. The same phenomenon  appears for the usual 
  Delaunay decomposition in Euclidean space.
\end{Remark}
 
\subsection{Degenerate quadratic forms}\label{sec:degen}

Consider a time-dependent finite point set $X_t$ in $\RR^d$. That is, for each $t \geq 0$, $X_t = \{p_1(t), \ldots, p_n(t)\}$, where
for each $k \in \{1, \ldots, n\}$, $p_k(t)$ is a smooth path. We assume that at time $t = 0$, all of the points lie in a proper subspace $V \subset \RR^d$ and are in generic position within that subspace. We are interested in the time-dependent Delaunay decomposition of $X_t$, with respect to some non-degenerate quadratic form $Q$ on $\RR^d$ (perhaps the standard Euclidean norm, or perhaps an indefinite form defining a flat semi-Riemannian structure as in the previous section). Of course, at time $t = 0$, the points $X_0$ are not in generic position and the $ Q$-Delaunay decomposition is not defined (degenerate). However, we wish to use infinitesimal data at time $t = 0$ to predict the combinatorics of the Delaunay decomposition for $t > 0$.

Assume henceforth that the subspace $V$ which contains the points at time $t=0$ is non-degenerate with respect to $Q$, and let $U$ denote the $Q$-orthogonal complement. Then, after changing coordinates appropriately, we may assume that $V = \RR^m$ is the coordinate hyperplane corresponding to the first $m$ coordinates and that $U = \RR^{d-m}$ is the coordinate hyperplane corresponding to the remaining $d-m$ coordinates, so that the direct sum decomposition $\RR^d = \RR^m \oplus \RR^{d-m}$ is $Q$--orthogonal.\marginnote{S: Is it ok?\\ JD: modified} In these coordinates we write $p_k(t) = (y_k(t), z_k(t))$ for all $1 \leq k \leq n$.
Consider the set $$X = \left\{ q_k = (y_k(0), z_k'(0)) \mid 1 \leq k \leq n \right\},$$ where $z_k'(0)$ denotes the derivative of $z_k(t)$ at $t = 0$. Let $\mathscr Q$ denote the degenerate quadratic form defined by 
$$\mathscr Q (y,z) = Q(y,0)~. $$

\begin{theorem}\label{thm:rescaled}
With notation as above, assume that $ X$ is in spacelike, generic position for~$ \mathscr Q$. Then for all $t> 0$ sufficiently small, the natural map $X_t \to X$ induces a cell-wise homeomorphism taking $\Del_{Q}(X_t)$ to $\Del_{\mathscr Q}( X)$. In other words, for short time, the combinatorics of the $Q$--Delaunay decomposition of $X_t$ agrees with the combinatorics of the  $\mathscr Q$-Delaunay decomposition of $X$.
\end{theorem}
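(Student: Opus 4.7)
The plan is to exploit the lift-and-project characterization of Delaunay decompositions from the proof of Theorem~\ref{thm:Del-gen}, combined with a linear rescaling in the degenerate directions. For each $t \geq 0$, let $X_t' = \{(p_k(t), Q(p_k(t))) : 1 \leq k \leq n\} \subset \RR^{d+1}$, so that $\Del_Q(X_t)$ is obtained (when defined) by projecting down the bottom faces of the convex hull $\mathscr C_t = \mathrm{conv}(X_t')$. Analogously, $\Del_{\mathscr Q}(X)$ comes from the bottom faces of $\mathrm{conv}(X'')$, where $X'' = \{(q_k, \mathscr Q(q_k))\}$.

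Consider now the linear automorphism $\Phi_t \colon \RR^{d+1} \to \RR^{d+1}$ defined by $\Phi_t(y, z, w) = (y, z/t, w)$. Since $\Phi_t$ preserves the vertical $w$-coordinate, it carries bottom faces to bottom faces, so the combinatorial type of the lower envelope of $\mathscr C_t$ coincides with that of $\mathrm{conv}(\Phi_t(X_t'))$. The crucial computation uses the $Q$-orthogonality of $V$ and $U$, which gives $Q(y,z) = Q(y,0) + Q(0,z)$; combined with $z_k(t) = t z_k'(0) + O(t^2)$, this yields
\[
Q(p_k(t)) \;=\; Q(y_k(t),0) + Q(0, z_k(t)) \;=\; \mathscr Q(q_k) + O(t).
\]
Together with $(y_k(t), z_k(t)/t) \to (y_k(0), z_k'(0)) = q_k$, this shows $\Phi_t(X_t') \to X''$ as $t \to 0^+$.

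Since $X$ is in spacelike and generic position for $\mathscr Q$, the lower envelope of $\mathrm{conv}(X'')$ is simplicial. Simpliciality of the lower envelope of a finite point configuration in $\RR^{d+1}$ is an open condition in the $C^0$ topology, so for all sufficiently small $t > 0$ the lower envelopes of $\mathrm{conv}(\Phi_t(X_t'))$ and $\mathrm{conv}(X'')$ are combinatorially identified via the natural bijection $p_k(t) \leftrightarrow q_k$. A parallel continuity argument, using $Q(p_k(t) - p_j(t)) = \mathscr Q(q_k - q_j) + O(t) > 0$, shows that $X_t$ itself lies in spacelike and generic position for $Q$ at small $t$, so both Delaunay decompositions are honest simplicial complexes. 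Pulling back through $\Phi_t$ and projecting to $\RR^d$ then produces the desired cell-wise homeomorphism, which can be taken to be affine on each simplex extending the bijection on vertices.

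The main obstacle is the combinatorial stability step: each bottom face of $\mathrm{conv}(X'')$ must be shown to correspond, under small perturbation of the point configuration, to a unique bottom face of $\mathrm{conv}(\Phi_t(X_t'))$ with the same vertex set, and no new bottom faces may appear. This is a standard genericity argument—each bottom face is supported by an affine inequality $w \geq \varphi(y,z) + D'$ which is strict off its vertex set and hence persists under small perturbation, while dually no additional bottom face can emerge when the lower envelope is already simplicial—but some care is needed since the perturbation $X_t' \to \Phi_t(X_t')$ moves the base coordinates and the lifted coordinate simultaneously.
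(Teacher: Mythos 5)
Your argument is correct, and it reaches the same destination as the paper's proof by a noticeably different (though closely related) route. The paper also rescales the degenerate directions by $1/t$, but it stays in $\RR^d$: it first proves Lemma~\ref{lem:converging-balls}, asserting that if the defining data $(\varphi_t\circ L_t^{-1}, D_t)$ of a family of $Q$--balls converges then the rescaled balls $L_tB_t$ converge, and then, for each empty $\mathscr Q$--ball $B^I$ of $\Del_{\mathscr Q}(X)$, it exhibits the unique $Q$--ball $B^I_t$ through the corresponding points of $X_t$ and checks by direct computation that $L_tB^I_t\to B^I$, whence emptiness persists for small $t$. You instead lift everything to $\RR^{d+1}$ and apply the rescaling $\Phi_t=L_t\oplus\mathrm{id}$ there; your computation ($Q(p_k(t))=\mathscr Q(q_k)+O(t)$ and $z_k(t)/t\to z_k'(0)$) is exactly the one the paper performs when verifying $\varphi^I_t\circ L_t^{-1}\to\varphi^I$ and $D^I_t\to D^I$, since a supporting hyperplane of $\Phi_t(X_t')$ carries the same data as a circumscribing $Q$--ball of the rescaled points. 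What your packaging buys is a cleaner treatment of the converse inclusion---that $\Del_Q(X_t)$ acquires \emph{no} cells beyond those predicted---via the standard combinatorial stability of simplicial lower envelopes, whereas the paper explicitly argues only that each predicted ball stays empty and leaves the absence of extra cells implicit. Two points you state tersely but correctly deserve emphasis: genericity of $X_t$ for $Q$ cannot be obtained by naive continuity from $t=0$ (where the configuration is degenerate) but must be read off from $\Phi_t(X_t')\to X''$, exactly as you do for the envelope; and spacelikeness of $X_t$, which you get by continuity of $Q(p_k(t)-p_j(t))$, is genuinely needed so that Theorem~\ref{thm:Del-gen} applies and the projected cells are actually inscribed in their empty balls.
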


The theorem will follow from the next lemma which gives a natural interpretation of $\mathscr Q$-balls as rescaled limits of $Q$--balls under the rescaling maps $L_t: \RR^d \to \RR^d$ defined by $$L_t(y,z) = \left(y, \frac{z}{t}\right).$$ The proof, omitted, is a simple computation.

\begin{lemma}\label{lem:converging-balls}
Let $B$ denote the $ \mathscr Q$-ball defined by the equation $ \mathscr Q(y,z) \leq \varphi(y,z) + D$ for some linear functional $\varphi$ and constant $D$. Let $B_t$ be a family of $Q$--balls, defined for $t > 0$, by the equations $Q(y,z) \leq \varphi_t(y,z) + D_t$ where
\begin{align*}
\lim_{t\to 0} \ \varphi_t\circ L_t^{-1} &= \varphi,\\ \lim_{t \to 0} D_t &= D.
\end{align*}
Then $L_t B_t \to B$ as $t \to 0$.
\end{lemma}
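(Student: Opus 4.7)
The plan is essentially to unwind both sides of the defining inequalities and exploit the $Q$-orthogonality of the splitting $\RR^d = \RR^m \oplus \RR^{d-m}$, after which the lemma reduces to a pointwise (and in fact locally uniform) limit of defining functions.

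First I would observe that by the $Q$-orthogonality of the splitting, $Q$ restricts to a quadratic form on each factor, so that one has the identity $Q(y,z) = Q(y,0) + Q(0,z) = \mathscr Q(y,z) + Q(0,z)$. In particular, for any $s \in \RR$,
\begin{equation*}
Q(y, sz) = \mathscr Q(y,z) + s^2\, Q(0,z).
\end{equation*}

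Next I would characterize $L_t B_t$ directly. Since $L_t^{-1}(y,z) = (y, tz)$, a point $(y,z)$ lies in $L_t B_t$ if and only if $(y,tz) \in B_t$, that is, if and only if
\begin{equation*}
\mathscr Q(y,z) + t^2 Q(0,z) \;\leq\; \varphi_t(y, tz) + D_t \;=\; \bigl(\varphi_t \circ L_t^{-1}\bigr)(y,z) + D_t.
\end{equation*}
Similarly, $(y,z) \in B$ if and only if $\mathscr Q(y,z) \leq \varphi(y,z) + D$.

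Finally, I would let $t \to 0$. By hypothesis $\varphi_t \circ L_t^{-1} \to \varphi$ (as linear functionals on $\RR^d$, hence uniformly on compact sets) and $D_t \to D$; and clearly $t^2 Q(0,z) \to 0$ uniformly on compact sets. Thus the defining functions for $L_t B_t$ converge locally uniformly to the defining function for $B$. This gives convergence $L_t B_t \to B$ in whichever sense the lemma requires: for instance, in the Hausdorff sense on every compact subset of $\RR^d$, or equivalently in the Chabauty topology on closed subsets, since a strict inequality in the defining inequality for $B$ persists for small $t$, and an approximate equality must converge to equality.

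There is no real obstacle here — the only subtlety is to make sure the splitting $\RR^d = \RR^m \oplus \RR^{d-m}$ is $Q$-orthogonal so that the cross term $2\langle (y,0),(0,tz)\rangle_Q$ vanishes, which is exactly the assumption made just before the theorem statement. This is why the proof is described as ``a simple computation.''
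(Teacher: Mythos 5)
Your proof is correct and is precisely the ``simple computation'' that the paper omits: pulling back the defining inequality of $B_t$ by $L_t$ and using the $Q$-orthogonality of the splitting gives $\mathscr Q(y,z) + t^2 Q(0,z) \leq (\varphi_t\circ L_t^{-1})(y,z) + D_t$ as the defining inequality of $L_tB_t$, whose data converge to that of $B$. Since the paper gives no proof of this lemma, there is nothing to compare beyond noting that your argument is the intended one.
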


And now the proof of Theorem~\ref{thm:rescaled}. 

\begin{proof}[Proof of Theorem~\ref{thm:rescaled}]
Let $I$ denote the $d+1$ indices of the points of $X$ lying on the boundary of an empty $\mathscr Q$-ball $$B^I = \{(y,z) \mid \varphi^I(y_i,z_i) + D^I\}$$
of $\Del_{{\mathscr Q}}(X)$, where the (unique) linear functional $\varphi^I$ and constant $D^I$ are defined by $$\mathscr Q(y_i,z_i') = \varphi^I(y_i,z_i') + D^I \;\; \forall i \in I.$$ For every sufficiently small $t > 0$, let $\varphi^I_t$ and $D^I_t$ be the unique linear functional and constant defining the $Q$--ball $B^I_t$ containing $(y_i(t), z_i(t))$ in its boundary for all $i \in I$. Then
$$ Q(y_i(t),z_i(t)) = \varphi^I_t(y_i(t), z_i(t)) + D^I_t $$
and therefore
\begin{align*}
Q\left(y_i(t), t \frac{z_i(t)}{t}\right) & = \varphi^I_t\left(y_i(t), t\frac{z_i(t)}{t}\right) + D^I_t \\ &= \varphi^I_t \circ L_t^{-1}\left(y_i(t), \frac{z_i(t)}{t} \right) + D^I_t~.
\end{align*}
Taking the limit as $t \to 0$ we find that
$$ Q(y_i(0), 0) = \lim_{t \to 0} \left(\varphi^I_t \circ L_t^{-1}\right)(y_i(0), z'_i(0)) + \lim_{t \to 0} D^I_t~, $$
so that 
$$ \mathscr Q(y_i(0), z_i'(0)) = \lim_{t \to 0} \left(\varphi^I_t \circ L_t^{-1}\right)(y_i(0), z'_i(0)) + \lim_{t \to 0} D^I_t
$$
for all $i \in I$. It now follows that $\varphi^I_t \circ L_t^{-1} \to \varphi$ and $D^I_t \to D^I$. We now apply Lemma \ref{lem:converging-balls} to obtain that $L_t B^I_t \to B^I$ as $t \to 0$. Observing that $L_t X_t \to  X$, it now follows that for sufficiently small $t > 0$, the rescaled balls $L_t B^I_t$ are empty of the other points of the rescaled point set $L_t X_t$. Hence, for sufficiently small $t > 0$, the $\mathscr Q$--balls $B^I_t$ are the empty balls defining the $\mathscr Q$--Delaunay triangulation of $X_t$.
\end{proof}

\subsubsection{The geometry of a degenerate quadratic form.}\label{sec:degen-geom}

Let $\mathscr Q$ be a degenerate quadratic form as in the previous subsection. Parallel translation of the inner product $\langle \cdot, \cdot \rangle_{\mathscr Q}$ determines a flat semi-Riemannian metric on $\RR^d$, which in this case is degenerate. The isometries of such a metric form an infinite-dimensional group, indicating that this degenerate geometry does not have very much structure. We may impose more structure by thinking of the degenerate directions as having infinitesimal length rather than zero length. Such structure is best described by a group $G$ acting on $\RR^d$ (the ``symmetries of the structure") rather than a metric. We define $G$ as a limit of the isometry groups of non-degenerate quadratic forms $Q_t$ which are degenerating to $ \mathscr Q$. 
Specifically, for each $t > 0$, define $Q_t$ by $Q_t(y,z) = Q(y,0) + tQ(0,z)$,
and let $G_t = \Isom(Q_t)$. Define $G$ to be the limit of $G_t$ as $t \to 0$ in the Chabauty topology on the space of subgroups of the Lie group $\aff(\RR^d)$ of affine transformations of $\RR^d$: an element $g \in \aff(\RR^d)$ is in $G$ if and only if there exists $g_t \in G_t$ for each $t > 0$ such that $g_t \to g$ as $t \to 0$.

\begin{Proposition} \label{pr:group}
The Lie subgroup $G \subset \aff(\RR^d)$ has the following properties.
\begin{enumerate}
\item $\dim G = \dim G_t$ for all $t > 0$. 
\item Each $g \in G$ preserves the flat degenerate metric determined by $ \mathscr Q$.
\item Each $g \in G$ preserves the fibration determined by the projection $\pi : \RR^d = \RR^m \oplus \RR^{d-m} \to \RR^m$. If $Q_V$ denotes the restriction of $Q$ to $\RR^m$, a non-degenerate form, then $g$ acts as a $Q_V$-isometry, denoted $\varpi(g)$, on $\RR^m$. The map $\varpi : G \to \Isom(Q_V)$ is a surjective homomorphism satisfying the equivariance property: $\pi(gx) = \varpi(g) \pi(x)$.\label{item:base-metric}
\item Let $Q_U$ denote the restriction of $Q$ to the $\RR^{d-m}$ factor, a non-degenerate quadratic form. Then $Q_U$ determines, by parallel translation, a flat semi-Riemannian metric on each fiber $\pi^{-1}(y) = \{(y, z): z \in \RR^{d-m}\}$. These semi-Riemannian metrics are also preserved by $G$, meaning that $g \in G$ takes the metric on $\pi^{-1}(y)$ to the metric on $g \pi^{-1}(y) = \pi^{-1}(\varpi(g)y)$.\label{item:fiber-metric}
\item $G$ is the unique subgroup of $\aff(\RR^d)$ satisfying \eqref{item:base-metric} and \eqref{item:fiber-metric}.
\end{enumerate}
\end{Proposition}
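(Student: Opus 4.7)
The plan is to make the Chabauty limit explicit by diagonalizing the degeneration. Let $L_t' := \operatorname{diag}(I_m, \sqrt t\, I_{d-m})$. Since $Q_t = Q \circ L_t'$, conjugation by $L_t'$ gives an isomorphism $\OO(Q_t) = (L_t')^{-1} \OO(Q) L_t'$. Writing $A \in \OO(Q)$ in $V \oplus U$ block form, its conjugate is
\begin{align*}
A_t = (L_t')^{-1} A L_t' = \begin{pmatrix} A_{VV} & \sqrt t\, A_{VU} \\ A_{UV}/\sqrt t & A_{UU} \end{pmatrix},
\end{align*}
which has a limit as $t \to 0$ only if $A_{UV} = O(\sqrt t)$. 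The orthogonality relations $A^T(Q_V \oplus Q_U)A = Q_V \oplus Q_U$ read blockwise then force $A_{VU} \to 0$, $A_{VV} \to A_{VV}^{(0)} \in \OO(Q_V)$, and $A_{UU} \to A_{UU}^{(0)} \in \OO(Q_U)$, while $C := \lim A_{UV}/\sqrt t$ is a priori unconstrained. Conversely, a direct computation in $\mathfrak{o}(Q)$ followed by exponentiation (using $\exp((L_t')^{-1} X L_t') = (L_t')^{-1} \exp(X) L_t'$) produces a neighborhood of the identity in the prescribed block lower-triangular group; block-diagonal elements of $\OO(Q_V)\times\OO(Q_U)$ lie in $\OO(Q_t)$ for every $t$ and contribute the disconnected components. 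Thus the linear part of $G$ is
\begin{align*}
\Lambda = \left\{ \begin{pmatrix} A_{VV}^{(0)} & 0 \\ C & A_{UU}^{(0)} \end{pmatrix} : A_{VV}^{(0)} \in \OO(Q_V),\ A_{UU}^{(0)} \in \OO(Q_U),\ C \in \operatorname{Mat}_{(d-m)\times m}(\RR) \right\},
\end{align*}
of dimension $\binom{m}{2} + \binom{d-m}{2} + m(d-m) = \binom{d}{2}$. Translations pass through the limit unaffected, so $G = \Lambda \ltimes \RR^d$ and $\dim G = \binom{d}{2} + d = \dim G_t$, proving (1).

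Assertions (2)--(4) are then routine verifications on this normal form. For $g \in G$ with linear part $A^{(0)}$ as above and translation $b = (b_V, b_U)$: the $V$-component of $g(x)$ is $A_{VV}^{(0)}\pi(x) + b_V$, depending only on $\pi(x)$, so $g$ preserves the fibration with $\varpi(g) := (A_{VV}^{(0)}, b_V) \in \Isom(Q_V)$ satisfying $\pi \circ g = \varpi(g) \circ \pi$; this gives (3), with surjectivity of the homomorphism $\varpi$ immediate from the block-diagonal subgroup $\OO(Q_V) \ltimes \RR^m \hookrightarrow G$. Property (2) then follows since $\mathscr Q(x-x') = Q_V(\pi(x-x'))$ and $A_{VV}^{(0)} \in \OO(Q_V)$. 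For (4), the restriction of $g$ to the fiber $\pi^{-1}(y)$ is the affine map $z \mapsto A_{UU}^{(0)} z + (Cy + b_U)$, a $Q_U$-isometry from $\pi^{-1}(y)$ onto $\pi^{-1}(\varpi(g)y)$.

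For (5), which I read as asserting that $G$ is the largest subgroup of $\aff(\RR^d)$ whose elements all satisfy (3) and (4), the argument is as follows. Preservation of the fibration forces the linear part of any such element to send $U$ into itself, hence to be block lower-triangular $\begin{pmatrix} A_{VV} & 0 \\ A_{UV} & A_{UU} \end{pmatrix}$. The condition that $\varpi(g)$ be a $Q_V$-isometry forces $A_{VV} \in \OO(Q_V)$, and the condition that $g$ act isometrically on each fiber forces $A_{UU} \in \OO(Q_U)$; neither $A_{UV}$ nor the translation receives any further constraint, since they contribute only fiberwise translations, which are automatically $Q_U$-isometries. Hence every admissible element already lies in $G$, and the converse has already been verified. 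The main technical obstacle throughout is the careful bookkeeping of the $\sqrt t$-scaling under conjugation by $L_t'$ -- verifying both that convergence forces the stated block structure and that every prescribed block limit is actually attained in the Chabauty limit. Everything else amounts to linear algebra.
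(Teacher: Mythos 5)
Your argument is correct and complete. The paper itself offers no proof of this proposition --- it merely states, immediately after it, the explicit block description of $G$ (arbitrary translational part, linear part $\left(\begin{smallmatrix} A & 0 \\ B & C \end{smallmatrix}\right)$ with $A \in \OO(Q_V)$, $C \in \OO(Q_U)$, $B$ arbitrary), which is precisely what your conjugation-by-$L_t'$ computation derives, and from which (1)--(5) follow by the linear algebra you carry out; your reading of (5) as a maximality statement is the only sensible one, and your verification of it is right.
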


In coordinates respecting the splitting $\RR^d = \RR^m \oplus \RR^{d-m}$, we may describe the group $G$ as the collection of all affine transformations with any translational part and whose linear part has the form 
$$\begin{pmatrix} A & 0 \\ B & C \end{pmatrix}$$
where $A \in \OO(Q_V), C \in \OO(Q_U)$ and $B$ is any $(d-m) \times m$ matrix.

The geometry defined by the group $G$ is much more rigid then the geometry defined by just the degenerate flat metric associated to $\mathscr Q$. This more rigid geometry appears naturally in the study of geometric structures transitioning from flat semi-Riemannian geometry of one signature to another, see~\cite{coo_lim}. While the notion of Delaunay decomposition with respect to $\mathscr Q$ is \emph{not} preserved by $\Isom(\mathscr Q)$, it is preserved by $G$. As in the non-degenerate case, see Section~\ref{mobius} for a stronger result about the invariance of the $\mathscr Q$--Delaunay decompositions. 

\begin{Proposition}\label{invariance2}
The set of $\mathscr Q$-balls is invariant under $G$. Hence so is the notion of $\mathscr Q$-Delaunay decomposition: if $g \in G$ and $X \subset \RR^{d}$ is in spacelike and generic position with respect to $\mathscr Q$, then 
$\Del_{{\mathscr Q}}(g(X)) = g(\Del_{{\mathscr Q}}(X))$.
\end{Proposition}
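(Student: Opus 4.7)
The plan is to use the explicit coordinate description of $G$ given in Proposition \ref{pr:group}: every $g \in G$ is an affine map whose linear part takes the form $\begin{pmatrix} A & 0 \\ B & C\end{pmatrix}$, with $A \in \OO(Q_V)$, $C \in \OO(Q_U)$, and $B$ an arbitrary $(d-m) \times m$ matrix. Writing $g(y,z) = (Ay + y_0,\, By + Cz + z_0)$ and noting that $\mathscr Q(y,z) = Q_V(y)$ depends only on the first factor while $A$ is a $Q_V$-isometry, a direct expansion yields
\begin{equation*}
\mathscr Q(g(y,z)) \;=\; Q_V(Ay + y_0) \;=\; Q_V(y) + 2\langle Ay, y_0\rangle_{Q_V} + Q_V(y_0) \;=\; \mathscr Q(y,z) + \alpha(y,z) + c,
\end{equation*}
for some linear functional $\alpha$ on $\RR^d$ and constant $c$. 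Since $G$ is a group, applying this identity to $g^{-1} \in G$ shows that $\mathscr Q \circ g^{-1}$ equals $\mathscr Q$ plus an affine function of its argument.

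The first step is then to deduce that $G$ preserves the family of $\mathscr Q$-balls. Given a $\mathscr Q$-ball $B = \{x : \mathscr Q(x) \leq \varphi(x) + D\}$, its image is $g(B) = \{x' : \mathscr Q(g^{-1}x') \leq \varphi(g^{-1}x') + D\}$. Substituting the identity of the previous paragraph into the left-hand side and using that $\varphi \circ g^{-1}$ is affine in $x'$, one checks that both sides are a sum of $\mathscr Q(x')$ (on the left) or $0$ (on the right) plus an affine functional of $x'$; after rearranging, the inequality takes the form $\mathscr Q(x') \leq \varphi'(x') + D'$. Hence $g(B)$ is again a $\mathscr Q$-ball, and in particular $g$ sends $\mathscr Q$-spheres to $\mathscr Q$-spheres.

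The second step is to verify that $g(X)$ remains in spacelike and generic position with respect to $\mathscr Q$, so that $\Del_{\mathscr Q}(g(X))$ is defined by Theorem \ref{thm:Del-gen}. For spacelikeness, writing $x - y = (u, v)$ and letting $L$ be the linear part of $g$, one has $\mathscr Q(L(x-y)) = Q_V(A u) = Q_V(u) = \mathscr Q(x-y) > 0$ for distinct $x,y \in X$, since $A \in \OO(Q_V)$. Generic position transfers because $g$ is an affine bijection: no $d+1$ image points lie on an affine hyperplane (else their preimages would), and no $d+2$ image points lie on a common $\mathscr Q$-sphere (else their preimages would, using that $g^{-1}$ also sends $\mathscr Q$-spheres to $\mathscr Q$-spheres by the first step).

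Finally, each cell $C$ of $\Del_{\mathscr Q}(X)$ inscribed in an empty $\mathscr Q$-ball $B_C$ maps under $g$ to the cell $g(C)$, inscribed in the $\mathscr Q$-ball $g(B_C)$, whose interior contains no point of $g(X)$ because $g$ is a bijection and $B_C$ was empty of $X$. Thus $g(\Del_{\mathscr Q}(X))$ is a cell decomposition of $\CH(g(X)) = g(\CH(X))$ satisfying the empty $\mathscr Q$-ball condition for $g(X)$, and the uniqueness part of Theorem \ref{thm:Del-gen} forces $g(\Del_{\mathscr Q}(X)) = \Del_{\mathscr Q}(g(X))$. The only delicate point is the bookkeeping of the affine offset appearing in the first identity and verifying that this offset preserves the structural form of the defining inequality of a $\mathscr Q$-ball; I do not anticipate any serious obstacle.
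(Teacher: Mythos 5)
Your proof is correct, but it takes a genuinely different route from the paper's. The paper argues by passing to the limit: it observes that the rescaling maps $L_t$ carry $Q$-balls to $Q_{t^2}$-balls, so by Lemma \ref{lem:converging-balls} every $\mathscr Q$-ball is a limit of $Q_t$-balls, and since each $G_t$ preserves the family of $Q_t$-balls, the Chabauty limit $G$ must preserve the set of limits of such balls. You instead work directly from the explicit matrix description of $G$ recorded after Proposition \ref{pr:group} and establish the transformation law $\mathscr Q \circ g = \mathscr Q + (\text{affine})$, from which invariance of the family of $\mathscr Q$-balls is immediate. Your computation is more elementary and self-contained; in particular it shows that $g(B)$ is \emph{exactly} a $\mathscr Q$-ball, whereas the paper's limiting argument, taken literally, only places $g(B)$ in the a priori larger set of limits of $Q_t$-balls. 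You also spell out the deduction of $\Del_{\mathscr Q}(g(X)) = g(\Del_{\mathscr Q}(X))$ --- preservation of spacelike and generic position, of inscription, and of emptiness, followed by the appeal to uniqueness in Theorem \ref{thm:Del-gen} --- which the paper compresses into the word ``hence''. The one dependency worth flagging is that your argument takes the coordinate description of $G$ as given (the paper asserts it without proof); the paper's own proof uses only the definition of $G$ as a Chabauty limit of the groups $G_t$, which is why it is phrased as a limiting argument. Both routes are valid; yours buys concreteness and completeness, the paper's buys independence from the explicit form of $G$ and consistency with the ``infinitesimal geometry'' viewpoint of Section \ref{sec:degen}.
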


\begin{proof}
We use the notations introduced in Lemma \ref{lem:converging-balls} and Proposition \ref{pr:group} above.
For any $t > 0$, a simple calculation shows that the map $L_t$ takes $Q$-balls to $Q_{t^2}$-balls (bijectively).
Hence Lemma \ref{lem:converging-balls} implies that every $\mathscr Q$-ball is a limit of $Q_{t}$-balls.  The result follows because $G_t$ preserves the set of all $Q_t$ balls so its limit group $G$ must preserve the set of limits of $Q_t$ balls which includes the $\mathscr Q$ balls.
\end{proof}

\begin{Remark}
There does not seem to be any reasonable `geometric' notion of Voronoi decomposition with respect to a degenerate quadratic form $\mathscr Q$, because the $\mathscr Q$-balls do not have a center: one should think of the center as being at infinity. However, we note that by Theorem \ref{thm:rescaled}, the combinatorics of $\Vor_{Q}(X_t)$ is constant, so it may be natural to define a combinatorial $\mathscr Q$--Voronoi decomposition to agree with the combinatorics of $\Vor_{Q}(X_t)$.
\end{Remark}

\subsection{The convex hull construction revisited}\label{sec:Hpq}

Let us now return to the convex hull construction used in the proof of Theorem~\ref{thm:Del-gen}.
First, let $Q$ be a non-degenerate quadratic form on $\RR^{d}$ of signature $(p,q)$. Consider the quadratic form $\widehat Q$ on $\RR^{d+2}$ defined by $$\widehat Q(x_1, \ldots, x_d, x_{d+1}, x_{d+2}) := Q(x_1, \ldots, x_d) -x_{d+1}x_{d+2}.$$
Then $\widehat Q$ has signature $(p+1, q+1)$, and the open subset $\mathbb X \subset \mathbb{RP}^{d+1}$ consisting of negative lines in $\RR^{d+2}$ with respect to $\widehat Q$, that is
$$\mathbb X = \{x \in \mathbb{R}^{d+2}\setminus\{0\} \mid \widehat Q(x) < 0\}/\R^*,$$ 
is a model for semi-Riemannian geometry of signature $(p+1, q)$ with constant negative curvature. The group $\PO(\widehat Q)$ of linear transformations of $\mathbb{RP}^{d+1}$ preserving $\widehat Q$ is the isometry group of a homogeneous semi-Riemannian metric of signature $(p+1,q)$ and constant negative sectional curvature. 
Now, consider the affine chart $x_{d+2} =1$. In this chart, the ideal boundary of $\mathbb X$ is described by the equation $$x_{d+1} = Q(x_1, \ldots, x_d),$$ i.e. the graph in $\RR^{d+1}$ of $Q$. Hence, via the map $x \mapsto (x, Q(x))$, $\RR^{d}$ may be regarded as an open (and dense) set on the ideal boundary $\partial \mathbb X \subset \mathbb{RP}^{d+1}$. Indeed, $\partial \mathbb X$ is the natural conformal compactification of the flat semi-Riemmanian metric on $\RR^d$ defined by $Q$, and the affine isometries $\OO(Q) \ltimes \RR^d$ of that flat metric on $\RR^d$ are naturally a subgroup of the full conformal group $\widehat G = \PO(\widehat Q)$ of $\partial \mathbb X$. Hence in the proof of Theorem~\ref{thm:Del-gen}, the convex hull $\mathscr C$ of the lifts $X'$ of $X$ to the graph of $Q$ may be regarded as a convex polyhedron in $\mathbb X$ whose vertices lie on the ideal boundary $\partial \mathbb X$. Such a polyhedron $\mathscr C$ is called an \emph{ideal polyedron}.

Next, suppose $\RR^d = \RR^m \oplus \RR^{d-m}$ is a $Q$--orthogonal decomposition, with $Q = Q_V \oplus Q_U$ and, as in Section~\ref{sec:degen}, consider the degenerate quadratic form $\mathscr{Q}$ on $\RR^d$ defined by $\mathscr{Q}(y,z) = Q_V(y)$ for all $y \in \RR^m, z \in \RR^{d-m}$. In Section~\ref{sec:degen}, we defined a subgroup $G$ of the group of affine transformations of $\RR^d$ which captures the geometry of a quadratic form thought of as having finite part the degenerate quadratic form $\mathscr{Q}$ and infinitesimal part described by $Q_U$ in the degenerate directions $\RR^{d-m}$. The group $G$ was defined to be the limit as $t \to 0$ of the isometry groups of the flat metrics defined by quadratic forms $Q_t = Q_V \oplus t Q_U$. We may similarly consider the quadratic forms $$\widehat Q_t(x_1, \ldots, x_{d+2}) = Q_V(x_1, \ldots, x_m) + t Q_U(x_{m+1}, \ldots, x_{d}) - x_{d+1}x_{d+2}.$$
We denote the limit as $t \to 0$ of the projective orthogonal groups $\PO(\widehat Q_t)$ by $\widehat G$. Then $\widehat G$ acts on the space $\mathbb X \subset \RP^{d+1}$ of lines of negative signature with respect to $\widehat{\mathscr{Q}} = \mathscr{Q} - x_{d+1}x_{d+2}$ preserving a degenerate semi-Riemannian metric defined by $\mathscr{Q}$ and also preserving a quadratic form naturally isomorphic to $Q_U$ on the degenerate subspace, a copy of $\RR^{d-m}$, of each tangent space. Note that $G$ is naturally the subgroup of $\widehat G$ that preserves the affine chart $x_{d+2} = 1$. As in the non-degenerate case above, the subset of the ideal boundary $\partial \mathbb X$ that lies in the affine chart $x_{d+2} = 1$ naturally identifies with $\RR^d$ via the map $x \mapsto (x, \mathscr{Q}(x))$, and the full ideal boundary $\partial \mathbb X$ is thought of as a conformal compactification of the degenerate geometry of $\RR^d$ described by $G$. Again, the points of $X$ in Theorem~\ref{thm:Del-gen} are thought of as points on $\partial \mathbb X$ and the convex hull $\mathscr C$ constructed in the proof of the theorem is an ideal polyhedron in $\mathbb X$. The space $\mathbb X$ and its symmetry group $\widehat G$ describe constant curvature semi-Riemannian geometry which is infinitesimal in some directions. See~\cite{coo_lim} for more on constant curvature semi-Riemannian geometries and their limits.

In dimension $d = 2$, consider the Euclidean quadratic form $Q(x_1, x_2) = x_1^2 + x_2^2$. Then $\widehat Q(x_1, x_2, x_3, x_4) = x_1^2 + x_2^2 - x_3x_4$ defines a copy $\mathbb X$ of the three-dimensional hyperbolic space $\mathbb H^3$. The Euclidean plane $\RR^2$ naturally identifies with the subset of the ideal boundary $\partial \mathbb X$ lying in the affine patch $x_{4} =1$. A set of points $X$ in spacelike and general position with respect to $Q$ determines a convex ideal polyhedron $\mathscr C$ in $\HH^3$ and conversely. 

Similarly, consider the standard Lorentzian quadratic form $Q(x_1, x_2) = x_1^2 - x_2^2$. Then $\widehat Q(x_1, x_2, x_3, x_4) = x_1^2 - x_2^2 - x_3x_4$ defines a copy $\mathbb X$ of the $2+1$ dimensional \emph{anti de Sitter space} $\AdS^3$, a model for constant negative curvature Lorentzian geometry in dimension $2+1$. The Minkowski space $\RR^{1,1}$ naturally identifies with the subset of the ideal boundary $\partial \mathbb X$ lying in the affine patch $x_{4} =1$. Again, a set of points $X$ in spacelike and general position with respect to $Q$ determines a convex ideal polyhedron $\mathscr C$ in $\AdS^3$ and conversely. 

Next, define the degenerate quadratic form $\mathscr{Q}$ from $Q$ being either the Euclidean quadratic form or the Lorentzian quadratic form above using the coordinate splitting $\RR^2 = \RR^1 \oplus \RR^1$ and let $\widehat G$ and $\mathbb X$ be defined as above. Then the degenerate plane $\RR^{1,0,1}$ naturally identifies with the subset of the ideal boundary $\partial \mathbb X$ lying in the affine chart $x_{4} =1$. Here $\mathbb X$ and its symmetry group $\widehat G$ gives what is known as three-dimensional \emph{half-pipe geometry} $\HP^3$, defined in~\cite{dan_age}. It may be regarded as the geometry of an infinitesimal neighborhood of a hyperbolic $2$-plane in either the three-dimensional hyperbolic space $\HH^3$ or the three-dimensional anti de Sitter space $\AdS^3$. Again, a set of points $X$ in spacelike and general position with respect to $\mathscr{Q}$ determines a convex ideal polyhedron $\mathscr C$ in $\HP^3$ and conversely.

In Section~\ref{sec:dihedral}, we will relate the geometry of the Delaunay triangulation of $X$ with the geometry of the corresponding ideal polyhedron $\mathscr C$ in each of the three geometric contexts above. We then characterize Delaunay triangulations in $\RR^{1,1}$ and $\RR^{1,0,1}$ by applying a recent characterization of ideal polyhedra that we obtained in~\cite{dan_pol}. The following theorem extends to $\AdS^3$ and $\HP^3$ a famous theorem of Rivin~\cite{riv_ach} about ideal polyhedra in hyperbolic three-space.

\begin{theorem}[\cite{dan_pol}, Theorem 1.3 and 1.7] \label{thm:DMS}
Let $\mathbb X$ be $\AdS^3$ or $\HP^3$. Let $\Gamma'$ be a triangulated graph on the sphere and let $w: E(\Gamma') \to \RR \setminus\{0\}$ be a weight function on the edges $E(\Gamma')$ of $\Gamma'$.
 Then there exists a convex ideal polyhedron $\mathscr C$ in $\mathbb X$ and an isomorphism of $\Gamma'$ with the one-skeleton of $\mathscr C$ which takes the weights $w$ to the dihedral angles of $\mathscr C$ if and only if the following conditions hold:
\begin{enumerate}
\item[(i)] For each vertex $v$ of $\Gamma'$, the vertex sum $\sum_{e \sim v} w(e) = 0$.
\item[(ii)] The edges $e$ for which $w(e) < 0$ form a Hamiltonian cycle in $\Gamma'$.
\item[(iii)] If $c$ is a Jordan curve on the sphere transverse to $\Gamma'$ which crosses exactly two edges of negative weight, then $$\sum_{e \in E_c} w(e)  \geq 0$$
where $E_c$ are the edges of $\Gamma'$ crossed by $c$. Further, equality occurs if and only if $c$ encircles a single vertex.
\end{enumerate}
\end{theorem}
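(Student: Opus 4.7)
The plan is to follow a continuity/degree method, in the spirit of Rivin's variational characterization of ideal hyperbolic polyhedra by dihedral angles. Let $\mathcal P_{\Gamma'}$ denote the space of congruence classes of convex ideal polyhedra $\mathscr C$ in $\mathbb X$ whose $1$-skeleton is combinatorially isomorphic to $\Gamma'$, and let $\mathcal W_{\Gamma'}$ denote the set of weight functions on $E(\Gamma')$ satisfying conditions (i)--(iii). The strategy is to show that the dihedral angle map
$$\Phi : \mathcal P_{\Gamma'} \longrightarrow \mathcal W_{\Gamma'},$$
which sends $\mathscr C$ to its tuple of dihedral angles, is a bijection. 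Before starting, I would also note that the $\HP^3$ case should be reducible to the $\AdS^3$ case by a rescaled-limit argument along the lines of Theorem~\ref{thm:rescaled} and the discussion in Section~\ref{sec:Hpq}, viewing a convex ideal $\HP^3$ polyhedron as a first order datum along a degenerating family of $\AdS^3$ polyhedra.

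For necessity I would verify each condition from the intrinsic geometry of an ideal polyhedron. Condition (i) should come from analyzing a horospherical cross-section at each ideal vertex: this cross-section is a flat polygon in a Minkowski plane (in $\AdS^3$) or in the degenerate plane $\RR^{1,0,1}$ (in $\HP^3$), and the dihedral angles at the edges through that vertex appear as the interior angles of this polygon. Since the ambient cross-sectional geometry is flat but Lorentzian/degenerate, the analytic continuation of the Euclidean angle-sum identity yields $0$ rather than $2\pi$. For (ii), the sign of the dihedral angle along an edge is read off from the causal type of the $2$-plane spanned by the adjacent face normals, and the $\AdS$/$\HP$ causal structure at infinity, together with convexity, forces the set of negative-weight edges to form a single cycle visiting every ideal vertex exactly once. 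Condition (iii) would follow by applying (i) to a sub-polyhedron cut out by a totally geodesic disc whose ideal boundary crosses $c$, essentially a Gauss--Bonnet type balance.

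For sufficiency the plan has three parts. First, I would show $\Phi$ is a local homeomorphism by an infinitesimal rigidity argument for ideal polyhedra in $\mathbb X$, obtained from a Schl\"afli-type formula adapted to $\AdS^3$ and $\HP^3$ and a linear algebra computation identifying the kernel and cokernel of the linearization. Second, I would establish that $\mathcal P_{\Gamma'}$ and $\mathcal W_{\Gamma'}$ are manifolds of the same dimension, and that $\mathcal W_{\Gamma'}$ is connected, essentially by exhibiting explicit deformations preserving (i)--(iii). Third, I would prove that $\Phi$ is proper: a sequence $\mathscr C_n \in \mathcal P_{\Gamma'}$ whose angles converge in $\mathcal W_{\Gamma'}$ must subconverge (modulo the isometry group) to some $\mathscr C_\infty \in \mathcal P_{\Gamma'}$. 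Combining these three ingredients, $\Phi$ is a proper local homeomorphism onto a union of connected components of $\mathcal W_{\Gamma'}$, hence surjective and then (by a degree computation, or by exhibiting at least one preimage in each component) bijective.

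The main obstacle I anticipate is the properness step. In contrast with $\HH^3$, degenerations in $\AdS^3$ are subtle because of the Lorentzian signature: two ideal vertices can collide along a common light ray of $\partial\AdS^3$ without the polyhedron becoming non-convex in any Euclidean sense, faces can flatten onto a totally geodesic plane through the change of causal type of their supporting plane, and the Hamiltonian cycle of negative-weight edges can reorganize combinatorially at the limit. I would attempt to rule out each such behavior by showing it forces a violation of one of (i)--(iii) in the limit; in particular the Jordan curve inequality (iii), together with its equality case characterizing when $c$ encircles a single vertex, should exactly be tailored to obstruct the ``pinching'' degenerations. Once properness is handled in $\AdS^3$, the $\HP^3$ statement should drop out of the rescaled-limit construction, since both the weight space and the polyhedron space admit compatible one-parameter families linking the two geometries.
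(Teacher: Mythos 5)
This statement is not proved in the paper at all: it is quoted verbatim from the reference \cite{dan_pol} (Theorems 1.3 and 1.7 there), and the present article uses it as a black box in the proof of Theorem~\ref{thm:prescribe-angles}. So there is no internal proof to compare your proposal against; what you have written has to be measured against the actual content of \cite{dan_pol}. At the level of architecture your sketch is a faithful reconstruction of the strategy used there: necessity of (i) via the flat (Lorentzian or degenerate) vertex links, necessity of (ii) from the causal structure of $\partial\AdS^3$ together with convexity, necessity of (iii) from a transversal/Gauss--Bonnet balance, and sufficiency by showing the dihedral angle map is a proper local homeomorphism between spaces of the same dimension and running a degree argument, with the $\HP^3$ case tied to the $\AdS^3$ case through a rescaled limit in the spirit of Section~\ref{sec:Hpq}. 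You have also correctly identified properness as the genuinely hard step, and the degenerations you list (vertices colliding along light rays, faces changing causal type) are exactly the ones that must be excluded.

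That said, as a proof this is only a plan: every decisive ingredient is announced rather than established. Concretely, (a) the infinitesimal rigidity of convex ideal polyhedra in $\AdS^3$ is not a routine Schl\"afli computation --- in \cite{dan_pol} it is a substantial argument in its own right, and without it the ``local homeomorphism'' step is empty; (b) your framing fixes the combinatorial type $\Gamma'$ and asserts that the image is a union of connected components of $\mathcal W_{\Gamma'}$, but you give no argument that $\mathcal W_{\Gamma'}$ is connected for fixed $\Gamma'$, nor that $\mathcal P_{\Gamma'}$ is nonempty whenever $\mathcal W_{\Gamma'}$ is; the cited work sidesteps this by organizing the degree argument over the whole space of ideal polyhedra with $N$ vertices (all combinatorial types at once) rather than one triangulation at a time, and your version would need either a separate connectedness proof per $\Gamma'$ or the same global reorganization; (c) the claim that condition (iii) with its equality case ``should exactly be tailored to obstruct pinching'' is the conclusion of the properness analysis, not a proof of it. So the proposal is a correct outline of the known proof, but it does not close any of the gaps that make the theorem nontrivial, and it could not be substituted for the citation.
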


\subsection{M\"obius invariance of $Q$--Delaunay decompositions}\label{mobius}

Propositions~\ref{invariance1} and~\ref{invariance2} state that $Q$--Delaunay decompositions are invariant under the action of the isometry group of the flat geometry associated to $Q$. In this section, we search for a larger collection of transformations leaving the $Q$--Delaunay decomposition invariant. To this end, recall that the construction of the $Q$--Delaunay decomposition for $X$ in the proof of Theorem~\ref{thm:Del-gen}, interpreted according to the observations of Section~\ref{sec:Hpq}, involves forming the convex ideal polyhedron $\mathscr C$ in the negatively curved geometry $\mathbb X$ associated to $\widehat Q$ whose vertices are the points $X$, thought of as lying in the ideal boundary $\partial \mathbb X$ using the standard chart $\RR^d \hookrightarrow \partial \mathbb X$.
The convex hull is a construction invariant under the full group of $Q$-M\"obius transformations $\widehat G = \OO(\widehat Q)$. However, the second half of the construction of $\Del_Q(X)$ involves projecting the \emph{bottom faces} $\partial \mathscr C_-$ back down to $\RR^d$ and the notion of bottom and top faces of $\mathscr C$ depends on a choice of a point $\infty \in \partial \mathbb X$ to project from. So, the $Q$--Delaunay decomposition of $X$ should be invariant under any M\"obius transformation $f \in \widehat G$ for which the bottom faces of $f(\mathscr C)$ relative to $\infty$ are precisely the image of the bottom faces of $\mathscr C$ relative to $\infty$. Let us now make this more precise.

First, we note that strictly speaking, even for small $f \in \widehat G$, $f(\Del_Q(X))$ is rarely a \emph{linear} cellulation; indeed, the image of an affine plane under $f$ is typically some $Q$-sphere (of the same dimension). However, as long as the image of a cell of $\Del_Q(X)$ is contained (and compact) in $\RR^d$, we may \emph{isotop} it relative to its vertices so that it becomes a linear cell (i.e. pull it tight). Note that even if the image of each cell of $\Del_Q(X)$ lies in $\RR^d$, the straightening of $f(\Del_Q(X))$ may fail to be a cellulation of the convex hull of $f(X)$ (and even if it is a cellulation of the convex hull, it might not be $\Del_Q(f(X))$). Henceforth we will write $f(\Del_Q(X))\sim \Del_Q(f(X))$ to mean that $\Del_Q(f(X))$ is the straightening of $f(\Del_Q(X))$. 

As a warm-up to the general case, we first discuss the Euclidean case, where the precise statement of the M\"obius
invariance (below) is presumably well-known to specialists. 
Given $X$, let $\mathcal{C}(X)$ denote the finite collection of all full and empty spheres 
(see Definition \ref{def:Del-gen} and Remark \ref{full}).

\begin{Lemma}\label{Mob_Euclid}
Let $X$ be a finite set of points in $\RR^d$, let $Q$ be the standard Euclidean quadratic form
and let $f \in \widehat G = \PO(d,1)$. Then $f(\Del_Q(X))$ is isotopic to $\Del_Q(f(X))$ if and only if
$f^{-1}(\infty)$ lies outside each sphere of the collection $\mathcal C(X)$ of empty and full spheres associated to $X$.
\end{Lemma}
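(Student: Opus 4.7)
The plan is to return to the convex-hull construction of Theorem~\ref{thm:Del-gen}, reinterpreted hyperbolically as in Section~\ref{sec:Hpq}. Via the standard chart $\RR^d \hookrightarrow \partial \mathbb{X}$, $x \mapsto (x, Q(x))$, I would view $X$ as a set of ideal vertices of $\mathbb{X} = \HH^{d+1}$ and form the ideal convex polyhedron $\mathscr{C}$. The key observation is that $\mathscr{C}$ is intrinsic: for any $f \in \widehat G = \PO(d,1)$, the ideal polyhedron associated to $f(X)$ is simply $f(\mathscr{C})$. The only ingredient in the definition of $\Del_Q(X)$ which is not automatically M\"obius invariant is the partition of the faces of $\mathscr{C}$ into bottom and top faces, and this partition depends only on the choice of the ideal point $\infty$ from which we project. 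Bottom faces project to the cells of $\Del_Q(X)$, top faces project to the cells of the full-ball decomposition of Remark~\ref{full}, and the collection of circumscribing spheres of all faces of $\mathscr{C}$ is exactly $\mathcal{C}(X)$.

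The heart of the plan is a local analysis at each face. For a face $F$ of $\mathscr{C}$ with circumscribing sphere $S_F \in \mathcal{C}(X)$, let $D_{\mathrm{int}}(F)$ and $D_\infty(F)$ denote the two open disks into which $S_F$ divides $\partial \mathbb{X}$, namely the Euclidean interior and the exterior (containing $\infty$) in the standard chart. From the proof of Theorem~\ref{thm:Del-gen} and Remark~\ref{full}, $F$ is a bottom face iff $D_{\mathrm{int}}(F) \cap X = \emptyset$, and a top face iff $D_\infty(F) \cap X = \emptyset$. I would then check that $f(F)$ is a bottom face of $f(\mathscr{C})$, with respect to the same ambient $\infty$, iff the disk of $\partial \mathbb{X} \setminus f(S_F)$ not containing $\infty$ is empty of $f(X)$, which via $f^{-1}$ is the condition that the disk of $\partial \mathbb{X} \setminus S_F$ not containing $f^{-1}(\infty)$ is empty of $X$. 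This yields a clean dichotomy: if $f^{-1}(\infty) \in D_\infty(F)$ then $f(F)$ is bottom iff $F$ is bottom, while if $f^{-1}(\infty) \in D_{\mathrm{int}}(F)$ then $f(F)$ is bottom iff $F$ is top.

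From here the lemma will follow by a direct case check. For the forward implication, if $f^{-1}(\infty)$ lies strictly outside every sphere in $\mathcal{C}(X)$ then the first alternative of the dichotomy applies at every face; the bottom faces of $f(\mathscr{C})$ are precisely the $f$-images of the bottom faces of $\mathscr{C}$, and projecting down yields the isotopy $f(\Del_Q(X)) \sim \Del_Q(f(X))$. Conversely, if $f^{-1}(\infty)$ lies inside some empty sphere then the corresponding bottom face becomes top in $f(\mathscr{C})$, while if it lies inside some full sphere a top face becomes bottom; either way the combinatorics of $\Del_Q(f(X))$ differs from that of $f(\Del_Q(X))$. The one point to handle with genuine care is the borderline configuration $f^{-1}(\infty) \in S_F$: then $f(S_F)$ passes through $\infty$ and $f(F)$ becomes unbounded in the affine chart, so the straightening underlying the relation $\sim$ cannot be formed. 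This is exactly what the strict ``lies outside'' hypothesis excludes, and isolating this limiting case is the main obstacle; apart from it, the remainder is routine bookkeeping on the two sides of each sphere in $\mathcal{C}(X)$ and how they are swapped by the M\"obius action.
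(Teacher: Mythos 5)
Your proposal is correct and follows essentially the same route as the paper: both exploit the $\widehat G$--invariance of the ideal convex hull $\mathscr C$ in $\HH^{d+1}$ and reduce the question to whether $f^{-1}(\infty)$ lies inside or outside each empty and full sphere, which governs whether each face of $f(\mathscr C)$ is bottom or top relative to $\infty$. Your bottom/top dichotomy at each face is just a slightly more systematic phrasing of the paper's observation that one must check the full spheres as well as the empty ones, since a former top face could otherwise contribute an extra cell to $\Del_Q(f(X))$.
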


\begin{Remark}
Note that the conclusion of Lemma~\ref{Mob_Euclid} may fail if the condition 
is only required to hold for the empty circles.  There are easy examples of this even with $|X| = 4$.
\end{Remark}

\begin{proof}[Outline of the proof]
Let $f \in \widehat G$ and assume that $f(X)$ lies in $\RR^d$ (no point of $X$ is sent to $\infty$). Consider a sphere $S$ in $\RR^d$. 
The condition that $S$ be empty of points of $X$ means that in the conformal compactification $\partial \mathbb X= \RR^d \cup \{\infty\}$, $\infty$ lies on one side of $S$ while $X$ lies on the other (allowing for some points of $X$ to be on $S$). Hence if $S$ is an empty sphere, then $f(S)$ is empty of points of $f(X)$ if and only if $f^{-1} (\infty)$ lies in the component of the complement of $S$ containing $\infty$, i.e. $f^{-1} (\infty)$ is outside of $S$. Hence, if $f^{-1} (\infty)$ lies outside all of the empty spheres associated to $X$, then the cells of $f(\Del(X))$ straighten to cells of $\Del(f(X))$. However, $\Del(f(X))$ could have additional top-dimensional cells, if there are additional empty spheres. Since the top-dimensional cells of $\Del(f(X))$ are projections of faces of the convex hull $f(\mathscr C)$ of the points $f(X) \subset \partial \mathbb X$ (using here that the convex hull construction in $\mathbb X$ is invariant under $\widehat G$), we need only check now that the images of the full spheres associated to $X$ are not empty for $f(X)$. This is precisely the condition that $f^{-1} (\infty)$ lies outside of every full sphere associated to $X$.
\end{proof}

The condition in Lemma \ref{Mob_Euclid} does not generalize well to the case of
non-positive definite quadratic forms, since in general a $Q$--sphere in $\partial\mathbb{X}$ does not separate 
$\partial\mathbb{X}$ making it non-sensical to ask for a point of $\partial \mathbb X$ to lie inside or outside of a $Q$-sphere. For the sake of generalizing, let us reformulate Lemma \ref{Mob_Euclid} as follows.

\begin{Lemma}\label{Mob_Euclid2}
Let $X$ be a finite set of points in $\RR^d$, let $Q$ be the standard Euclidean quadratic form, 
and let $f \in \widehat G = \PO(d,1)$. Then $f(\Del_Q(X))$ is isotopic to $\Del_Q(f(X))$ if and only if
$f^{-1}(\infty)$ is in the same connected component as $\infty$ in the complement of the union of the spheres in $\mathcal C(X)$ in the conformal compactification $\partial \mathbb X$ of $(\RR^d,Q)$.
\end{Lemma}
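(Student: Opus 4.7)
The plan is to show the condition in this lemma is equivalent to that in Lemma \ref{Mob_Euclid}, then invoke that lemma. One direction is immediate: if $f^{-1}(\infty)$ lies in the connected component of $\partial \mathbb X \setminus \bigcup_{S \in \mathcal C(X)} S$ containing $\infty$, then a path joining them avoids every sphere, so they lie on the same side of each $S \in \mathcal C(X)$. Since each such sphere bounds a closed ball that is either disjoint from $X$ in its interior or contains $X$ in its closure, $\infty$ lies in the outside component; hence $f^{-1}(\infty)$ does too, yielding the hypothesis of Lemma \ref{Mob_Euclid}.

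The reverse implication is the substantive content. Assuming $f^{-1}(\infty) \notin B_S$ for every $S \in \mathcal C(X)$, I must construct a path in $\partial \mathbb X \setminus \bigcup_S S$ from $f^{-1}(\infty)$ to $\infty$. I would translate this to the lifted picture used in the proof of Theorem \ref{thm:Del-gen}: a point $p \in \RR^d$ is outside every sphere of $\mathcal C(X)$ if and only if the paraboloid lift $(p, Q(p))$ lies strictly above every supporting hyperplane $H_f$ of a face of the convex hull $\mathscr C$; equivalently, $Q(p) > F(p)$, where $F(y) = \max_f(\varphi_f(y) + D_f)$ is the piecewise-affine convex upper envelope of the face hyperplanes. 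Let $U \subset \RR^d$ denote this set of points.

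For $\|p\|$ sufficiently large, $Q(p) = \|p\|^2$ dominates the linear-growth $F(p)$, so $U$ contains the exterior of some large Euclidean ball; together with $\infty$, this yields a connected neighborhood of $\infty$ in $\partial \mathbb X$. It then remains to show that every bounded component of $\RR^d \setminus \bigcup_{S \in \mathcal C(X)} B_S$ is empty, i.e., that $\bigcup_{S \in \mathcal C(X)} B_S$ has no holes in $\RR^d$. The key geometric input is a star-like structure relative to $\CH(X)$: every full ball contains all of $\CH(X)$, while every empty ball $B_f$ contains a whole Delaunay cell $\pi(f) \subset \CH(X)$. From any interior point of $\CH(X)$ (which always lies in the interior of some empty ball and also in every full ball) one should be able to reach any point of $\bigcup_S B_S$ by a polygonal path remaining inside the union.

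The hard part will be executing this no-hole argument cleanly, since in principle the union $\bigcup_S B_S$ could be intricate, with empty balls protruding well beyond $\CH(X)$ in thin slivers. The argument must use the combinatorics of the ideal polyhedron $\mathscr C$ in $\mathbb X$ (whose faces correspond to the spheres of $\mathcal C(X)$) to ensure that every outer portion of the union is reachable from $\CH(X)$ without a detour through the complement.
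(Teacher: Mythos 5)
Your reduction to Lemma~\ref{Mob_Euclid} is the natural route (the paper offers no separate argument here, presenting this lemma as a mere reformulation of Lemma~\ref{Mob_Euclid}), and your forward direction is correct: a path from $\infty$ to $f^{-1}(\infty)$ avoiding every sphere of $\mathcal C(X)$ forces $f^{-1}(\infty)$ to lie outside every ball, and Lemma~\ref{Mob_Euclid} applies. The gap is in the reverse implication, and the ``key geometric input'' you propose does not address it. What has to be shown is that the set of points lying outside \emph{every} ball of $\mathcal C(X)$, together with $\infty$, is connected, i.e.\ that $\bigcup_S B_S$ has no bounded complementary component. Your star-likeness observation (every full ball contains $\CH(X)$, every empty ball meets it) proves that the \emph{union} of the balls is path-connected; connectivity of a compact set says nothing about connectivity of its complement. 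Three pairwise overlapping disks with empty triple intersection already form a connected union enclosing a hole. In $d=2$ you would need the union to be simply connected; in $d\geq 3$ you need (via Alexander duality) the vanishing of $H^{d-1}$ of the union, or, more usefully, an explicit path to $\infty$ from each point of the complement. None of this follows from the structure you describe.

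The missing step is genuinely delicate, not a routine cleanup. The obvious escape routes fail: the ray from a point $p_0$ of the complement in the direction away from the nearest point of $\CH(X)$ can re-enter a ball of $\mathcal C(X)$ whose center lies ahead of $p_0$; and writing the complement as $\{Q>F\}$ with $F$ the convex upper envelope of the facet hyperplanes does not help by itself, since the strict super-level set of (strictly convex) minus (piecewise-affine convex) can perfectly well have bounded components --- one must use that the $\varphi_f(\cdot)+D_f$ are precisely the facet hyperplanes of a polytope inscribed in the paraboloid, i.e.\ the combinatorics of $\mathscr C$ that you allude to at the end. A workable route is to show that the power function $p\mapsto\min_f\bigl(Q(p)-\varphi_f(p)-D_f\bigr)$ has no positive local maximum, which pins down exactly what must be ruled out (a point outside all balls lying in the interior of the convex hull of the centers of its nearest balls in power distance); but you have not done this, and until it is supplied only the ``if'' direction of the lemma is established.
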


In the general setting, the condition above needs to be slightly expanded as follows. 
Recall from Section~\ref{sec:Hpq} that, for a general quadratic form $Q$, the conformal compactification $\partial \mathbb X$ contains many points at infinity, indeed $\partial \mathbb X = \RR^d \cup \mathscr L(\infty)$ is the disjoint union of $\RR^d$ with the light cone of a point $\infty$ at infinity. Here the light cone  $\mathscr L(x)$ of a point $x \in \partial \mathbb X$ is the union of all $y \in \partial \mathbb X$ such that $\langle x, y\rangle_{\hat Q} = 0$. If $x \in \RR^d$, then $\mathscr L(x) \cap \RR^d$ is precisely the points of $\RR^d$ of null displacement from $x$ in the $Q$ norm (this only looks like a standard cone if $Q$ is Lorentzian). The stereographic 
projection used in the proof of Theorem~\ref{thm:Del-gen} then conformally identifies the complement of $\mathscr L(\infty)$ in $\partial \mathbb X$ with $(\RR^d,Q)$.

\begin{theorem}\label{Mob}
Let $Q$ be any quadratic form on $\RR^d$ and let $X$ be a finite set of points in $\RR^d$ 
in space-like position with respect to $Q$. Let $f \in \widehat G$. Then $f(\Del_Q(X))$
is isotopic to $\Del_Q(f(X))$ if 
$f^{-1}(\infty)$ lies in the same connected component as $\infty$ in the complement in 
the conformal compactification $\partial \mathbb X$ of $(\RR^d, Q)$ of the union of 
\begin{enumerate}
\item the collection $\cC$ of all the empty and full $Q$-spheres,
\item the light cones $\mathscr L(x)$ of the points of $X$.
\end{enumerate}
\end{theorem}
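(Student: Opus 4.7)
The plan is to use a continuous deformation argument based on the observation from Section~\ref{sec:Hpq} that the construction of $\Del_Q(X)$ factors through the $\widehat G$-equivariant formation of the ideal convex hull $\mathscr C \subset \mathbb X$ of the lifts $X' \subset \partial\mathbb X$. Recall that a face $\sigma$ of $\mathscr C$ is bottom with respect to an auxiliary ideal point $p \in \partial \mathbb X$ precisely when $p$ lies on the opposite side of the supporting projective hyperplane $H_\sigma$ from the interior of $\mathscr C$, and that $\Del_Q(X)$ is obtained by projecting the faces of $\mathscr C$ that are bottom with respect to $\infty$. The equivariance $f(\mathscr C) = \mathscr C(f(X'))$ implies that, for any $f_t \in \widehat G$, the bottom faces of $f_t(\mathscr C)$ relative to $\infty$ correspond to the faces of $\mathscr C$ that are bottom with respect to $\gamma(t) := f_t^{-1}(\infty) \in \partial \mathbb X$. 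It thus suffices to connect $\mathrm{id}$ to $f$ by a continuous path $f_t \in \widehat G$ along which the top/bottom partition of the faces of $\mathscr C$ relative to $\gamma(t)$ is constant.

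By hypothesis, there exists a continuous path $\gamma\colon [0,1] \to \partial \mathbb X$ from $\infty$ to $f^{-1}(\infty)$ contained in the connected component of $\infty$ in $\partial \mathbb X \setminus \bigl( \cC \cup \bigcup_{x \in X} \mathscr L(x) \bigr)$. Using the transitivity of the $\widehat G$-action on $\partial \mathbb X$ together with the connectedness of the stabilizer of $\infty$, lift $\gamma$ to a continuous path $f_t \in \widehat G$ with $f_0 = \mathrm{id}$, $f_1 = f$, and $f_t^{-1}(\infty) = \gamma(t)$. Along this path, $\gamma(t) \notin \mathscr L(x)$ combined with $\widehat G$-invariance of $\widehat Q$ forces $f_t(x) \notin \mathscr L(\infty)$, so $f_t(X) \subset \RR^d$; and the identity $\widehat Q(u', v') = -\tfrac{1}{2} Q(u - v)$ for the standard lifts yields continuity and non-vanishing of $Q(f_t(x) - f_t(y))$, which is positive at $t = 0$, so $f_t(X)$ remains in spacelike position throughout.

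The heart of the proof is combinatorial stability: one must show that $\gamma(t)$ never crosses a wall $H_\sigma \cap \partial \mathbb X$ for any face $\sigma$ of $\mathscr C$. In the standard chart $\RR^d \subset \partial\mathbb X$, the trace $H_\sigma \cap \RR^d$ is exactly the empty or full $Q$-sphere $S_\sigma$ inscribing $\sigma$, which lies in $\cC$. The remainder of the trace lies at infinity, in $\mathscr L(\infty)$, and the role of the barriers $\mathscr L(x)$ for $x \in X$ in the hypothesis is to prevent $\gamma$ from making such asymptotic crossings of the walls. The main technical obstacle is precisely this verification: one must check that every possible asymptotic crossing of some $H_\sigma$ is blocked by the light cone of a point of $X$. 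This is a question about the intersection of projective hyperplanes spanned by ideal points of $\partial \mathbb X$ with the tangent hyperplanes of the quadric $\partial \mathbb X$ at those points, and can be reduced, via the formula $\widehat Q(x_i', x_j') = -\tfrac{1}{2} Q(x_i - x_j)$, to a linear algebra computation on the coefficients of a point $y \in H_\sigma$ expressed in terms of the vertex lifts. Granting this, the top/bottom partition of faces of $\mathscr C$ relative to $\gamma(t)$ is constant along the path; since $f_0(\Del_Q(X)) = \Del_Q(f_0(X))$ trivially, constancy propagates to $t = 1$, yielding $f(\Del_Q(X)) \sim \Del_Q(f(X))$.
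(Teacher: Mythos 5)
Your overall strategy is the same as the paper's: deform $\mathrm{id}$ to $f$ along a path $f_t$ whose inverse images of $\infty$ trace the given path $\gamma$, use $\widehat G$-equivariance of the ideal convex hull, check that spacelike position is preserved (your argument via $\widehat Q(x',y') = -\tfrac12 Q(x-y)$ and the light-cone barrier is fine, essentially equivalent to the paper's), and then argue that the combinatorial data is constant along the path. But the last step --- the heart of the theorem --- is exactly where you stop and write ``Granting this.'' You have not proved the statement; you have restated it as a ``main technical obstacle'' and asserted that it reduces to a linear algebra computation you do not perform. That is a genuine gap, and moreover the reduction you propose is not quite the right one.

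The correct transition locus is not ``$\gamma(t)$ crosses $H_\sigma \cap \partial\mathbb X$, possibly asymptotically.'' What must be tracked, for each empty or full sphere $S = H_\sigma \cap \partial\mathbb X$ and each vertex $x_i \in X$ with $x_i \notin S$, is whether $x_i$ lies inside or outside $S$, transported by $f_t$. The paper's Lemma~\ref{lem:Mob} converts this to a projective criterion: $x_i$ is inside $S$ iff the line through $x_i$ and $\infty$ meets the hyperplane $H_\sigma$ at a point where $\widehat Q < 0$. Pulling back by $f_t$, the relevant line is the one through $x_i$ and $\gamma(t) = f_t^{-1}(\infty)$, and the inside/outside status can only change at a time $t$ when this line meets $H_\sigma$ on the quadric $\partial\mathbb X$, i.e.\ on $S$. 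Since $x_i$, $\gamma(t)$, and that intersection point are all on the quadric and $x_i \notin S$, either $\gamma(t) \in S$ (excluded because $S \in \cC$), or the line carries three distinct points of the quadric and hence lies entirely in it, which forces $\langle x_i, \gamma(t)\rangle_{\widehat Q} = 0$, i.e.\ $\gamma(t) \in \mathscr L(x_i)$ (excluded by hypothesis). So the barrier associated to the pair $(x_i, S)$ is $S \cup \mathscr L(x_i)$ --- the light cone of the specific vertex being tested, not a generic ``asymptotic part of the wall $H_\sigma$.'' This short quadric-geometry argument (a line meeting a quadric in three points lies in it) is what closes the proof; without it, or the computation you allude to but omit, your write-up establishes only the framework of the argument.
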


The proof of Theorem \ref{Mob} uses a simple statement in projective geometry. 
We use the notation of Section \ref{sec:Hpq}.

\begin{lemma} \label{lem:Mob}
Let $S$ be a $Q$-sphere in $\partial \mathbb X$ realized as the (transverse) intersection of a hyperplane $P\subset \mathbb{RP}^{d+1}$ with $\partial \X$. Let $P_0=P\cap \X$. Then a point $x \in \RR^d \setminus S$ lies inside the $Q$-sphere $S$ if and only if the line in $\mathbb{RP}^{d+1}$ passing through $x$ and $\infty$ crosses the hyperplane $P$ inside of $P_0$.
\end{lemma}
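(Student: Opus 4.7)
The plan is to work in the explicit coordinates from Section~\ref{sec:Hpq} and reduce the statement to a one-line computation involving the defining inequality of a $Q$--ball. In the affine chart $x_{d+2}=1$, the stereographic lift identifies $x \in \RR^d$ with $[x,Q(x),1] \in \partial \mathbb X$, and the distinguished point at infinity is $\infty = [0,\ldots,0,1,0]$. By Definition~\ref{def:Qball}, the $Q$--sphere $S$ arises from a defining inequality of the form $Q(\cdot) \leq \varphi(\cdot) + D$, so in homogeneous coordinates the hyperplane $P$ is cut out by the linear equation $x_{d+1} - \varphi(x_1,\ldots,x_d) - D\, x_{d+2} = 0$. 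Plugging in $\infty$ gives $1 \neq 0$, so $\infty \notin P$; consequently the projective line $\ell$ joining $[x,Q(x),1]$ to $\infty$ is not contained in $P$ and meets $P$ at a unique point, which automatically lies in the affine chart $x_{d+2}=1$.

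Next I would parametrize $\ell$ as $[x,\, Q(x)+t,\, 1]$ for $t \in \RR$ and solve $Q(x)+t = \varphi(x)+D$ to find the intersection point $y = [x,\, \varphi(x) + D,\, 1]$. Evaluating the ambient quadratic form at $y$ gives
\[ \widehat Q(y) = Q(x) - \bigl(\varphi(x) + D\bigr), \]
so $y$ lies in $P_0 = P \cap \mathbb X$ exactly when $Q(x) < \varphi(x) + D$, which is precisely the condition that $x$ lies in the interior of the $Q$--ball bounded by $S$.

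I do not expect any serious obstacle: the main conceptual content is simply recognizing that the ``vertical'' projective line through $\infty$ is the projective tool that encodes the affine slicing used to define $Q$--balls in Definition~\ref{def:Qball}. The only point requiring a brief sanity check is that the hypothesis $x \notin S$ ensures $\ell$ meets $P$ transversely at $y$ (rather than tangentially, or with $\ell$ contained in $P$), which is immediate from $\infty \notin P$ together with the transversality of $S = P \cap \partial \mathbb X$ assumed in the statement.
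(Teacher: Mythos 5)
Your proof is correct and follows essentially the same route as the paper's: lift $x$ to $[x:Q(x):1]$, write $P$ as $x_{d+1}-\varphi(x)-Dx_{d+2}=0$, intersect with the line through $\infty=[0_d:1:0]$, and check the sign of $\widehat Q$ at the intersection point, obtaining $Q(x)-(\varphi(x)+D)$. The only cosmetic difference is your affine parametrization $[x:Q(x)+t:1]$ versus the paper's homogeneous one $[tx:tQ(x)+s:t]$, and your explicit (welcome) observation that $\infty\notin P$.
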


\begin{proof}
Let $S$ be defined by the equation:
$$Q(x) = \varphi(x) + D$$
for $x \in \RR^d$, where $\varphi: \RR^d \to \RR$ is a linear functional and $D \in \RR$ is a constant. Recall the explicit embedding $x \mapsto [x: Q(x): 1] \in \RP^{d+1}$. 
Then $P$ is the set of $[y: a: b] \in \RP^{d+1}$ such that $\varphi(y) = a - Db,$ where $y \in \RR^d$ and $a,b \in \RR$.
The point $\infty = [0_d:1:0]$ (where $0_d$ is the zero vector in $\RR^d$) and let $x \in \RR^d \setminus S$. Then the line $\ell$ in $\RP^{d+1}$ passing through $\infty$ and $x$is given by $[tx: tQ(x) + s: t]$. Then $\ell$ passes through $P$ at the point $p$ where $s = t(\varphi(x) - Q(x) + D)$. Let us evaluate (the sign of) $\widehat Q$ at this point:
\begin{align*}
\widehat Q(tx, tQ(x) +s, t) &= \widehat Q(tx, tQ(x) + t(\varphi(x) - Q(x) + D), t)\\ &= \widehat Q(tx, t\varphi(x) + tD, t) \\ &= Q(tx) - (t\varphi(x) + tD)t  = t^2(Q(x) - (\varphi(x) + D)). 
\end{align*}
Hence $p \in \mathbb X$ if and only if $Q(x) < \varphi(x) + D$ if and only if $x$ is inside the $Q$-sphere~$S$.
\end{proof}

\begin{proof}[Proof of Theorem~\ref{Mob}]
Let $f_t \in \widehat G$ be a family of M\"obius transformations with $f_0 = I$ such that for all $t$, $f_t^{-1} (\infty)$ does not lie on the light cone of any point of $X$ not on any empty or full $Q$-sphere in $\mathcal C$. We must first show that $f(X)$ is space-like position, so that $\Del_Q(f(X))$ is defined. Consider the spacelike segment $\sigma_{ij}$ connecting $x_i$ to $x_j$ in $\RR^d$. Then, since $f_t$ is a conformal deformation, $f_t(\sigma_{ij})$ is a spacelike segment in the conformal metric on $\partial \mathbb X$ for all $t$. The only way $f_t(x_i)$ and $f_t(x_j)$ could fail to have spacelike relative position is if $f_t(\sigma_{ij})$ is not contained in $\RR^d$, equivalently if $f_t(\sigma_{ij})$ crosses $\mathscr L(\infty)$, equivalently if $f_t^{-1}\mathscr L(\infty) = \mathscr L(f_t^{-1} (\infty))$ crosses $\sigma_{ij}$. If this happens, then for some (possibly smaller) value of $t$, $\mathscr L(f_t^{-1} (\infty))$ contains the point $x_i$ or $x_j$, equivalently $f_t^{-1}\infty$ lies in $\mathscr L(x_i)$ or $\mathscr L(x_j)$. This does not happen by assumption.

Now, to show that $f_t(\Del_Q(X)) \sim \Del_Q(f_t(X))$, we argue as in the Euclidean case. Let $\Delta$ be a cell of $\Del_Q(f_t(X))$. Then $\Delta$ is the projection to $\RR^d$ of a face of the convex hull in $\mathbb X$ of the points $f_t(X) \subset \partial \mathbb X$. This convex hull is precisely $f_t(\mathscr C)$, hence the vertices of $\Delta$ are the images under $f_t$ of a subset of $X$ which span a top or bottom face of $\mathscr C$, and therefore define one of the empty or full spheres with respect to $X$. So it remains to show that for each empty (respectively full) sphere $S$ in $\mathcal C$, $f_t(S)$ remains empty (respectively full) with respect to $f_t(X)$. 

Consider an empty or full $Q$-sphere $S$ in $\mathcal C(X)$. We now show that $f_t(S)$ is an empty (resp. full) sphere for $f_t(X)$
for all $t$. Let $P\subset \mathbb{RP}^{d+1}$ be the hyperplane as in Lemma~\ref{lem:Mob} such that $P \cap \partial \mathbb X = S$. Let $x_i \in X$ be a point not on $S$. By the Lemma, the point $f_t(x_i)$ lies strictly inside the $Q$-sphere $f_t(S)$ if and only if the line $\ell_t$ passing through $f_t(x_i)$ and $\infty$ intersects $f_t(P)$ in the interior of $\mathbb X$. If $f_{t'}(x_i)$ is at some time $t'$ on the wrong side of $f_{t'}(S)$, then there is a time $t$ for which $\ell_t \cap f_t(P) \in \partial \mathbb X \cap f_t(P) = f_t(S)$, or equivalently the line $f_t^{-1} \ell_t$ in $\RP^{d+1}$ passing through $x_i$ and $f_t^{-1} (\infty)$ intersects $S$. Hence, since $x_i \notin S$, then either $f_t^{-1} (\infty) \in S$, which we assume does not happen, or the entire line $f_t^{-1} \ell_t$ lies in $\partial \mathbb X$ which implies that $f_t^{-1} (\infty) \in \mathscr L(x_i)$, which we also assume does not happen. Thus for any $t$, $f_t(x_i)$ lies strictly inside $f_t(S)$ if and only if $x_i$ lies strictly inside $S$. 
\end{proof}

\section{Higher signature Delaunay decompositions in dimension $d = 2$}\label{d2}

In dimension $d = 2$, there are two interesting geometric settings to consider beyond the classical case of the Euclidean plane (i.e. $Q = Q_{2,0}$ a positive definite form).
The first is the \emph{Minkowski plane}, $\RR^{1,1}$, equipped with quadratic form $Q_{1,1}(x_1, x_2) = x_1^2 - x_2^2$. See Section~\ref{sec:non-degen}.
The second is the degenerate geometry $\RR^{1,0,1}$ defined by the degenerate quadratic form $Q_{1,0,1}(x_1, x_2) = x_1^2$ as in Sections~\ref{sec:degen} and Section~\ref{sec:Hpq}.

There are two collections of angles associated to a $Q$--Delaunay triangulation $\Del_Q(X)$, the collection of \emph{interior angles} (Definition~\ref{def:interior-angles}) of the triangles and the collection of \emph{edge angles}, which are angles formed by the $Q$-circles circumscribing adjacent triangles (Definition~\ref{def:edge-angles}), assigned to the edges of $\Del_Q(X)$. We investigate the behavior of the interior angles and edge angles generalizing several important results from the Euclidean setting.

\subsection{Interior angles and edge angles of higher signature Delaunay triangulations} First, we define a notion of angle in each of the two geometries of interest.

\subsubsection{Angles in the Minkowski plane $\RR^{1,1}$}
\label{ssc:angles11}

In the Minkowski plane $\RR^{1,1}$, there are two connected components of spacelike directions. The angle formed by two spacelike unit vectors $v, w$ in the same component is defined to be the positive number $\varphi > 0$ satisfying the equation $$\cosh \varphi = \langle v, w \rangle_{Q_{1,1}}.$$
The angle formed by two spacelike unit vectors $v, w$ in opposite components is defined to be the negative number $\varphi < 0$ satisfying $$\cosh \varphi = - \langle v, w \rangle_{Q_{1,1}}.$$

\subsubsection{Angles in the degenerate plane $\RR^{1,0,1}$}

Following the notation and ideas of Section~\ref{sec:degen},
let $Q_{1,0,1} = Q_V$ and $Q_{2,0} = Q_V + Q_U$ and $Q_{1,1} = Q_V - Q_U$, 
where $Q_V(x_1, x_2) = x_1^2$ and $Q_U(x_1, x_2) =  x_2^2$.
Here we focus on the case of $Q_{1,0,1}$. 
We say that a non-zero vector $v$ is {\em non-degenerate}
if $Q_{1,0,1}(v,v)>0$.

Let $v = (v_1, v_2), w = (w_1, w_2)$ be two unit vectors with respect to $Q_{1,0,1}$, i.e. $v_1^2 = w_1^2 = 1$. Recall the map $L_t: \mathbb R^2 \to \mathbb R^2$ defined by $L_t(x_1, x_2) = (x_1, \frac{x_2}{t})$ and consider any two paths $v_t = (a_t, b_t), w_t = (c_t, d_t)$ so that $L_tv_t \to v$ and $L_t w_t \to w$ as $t \to 0$, i.e. $a_t \to v_1, b_t \to 0$, $c_t \to w_1, d_t \to 0$ and $\dot b := \frac{\D}{\D t}\big|_{t=0} b_t= v_2$ and $\dot d := \frac{\D}{\D t}\big|_{t=0} d_t= w_2$.
Then, we define the angle $\dot \theta_{vw}$ between $v$ and $w$ in $\RR^{1,0,1}$ to be the derivative
$$\dot \theta_{vw} := \frac{\D}{\D t}\big|_{t= 0} \theta_{v_t w_t},$$
where $\theta_{v_t w_t}$ denotes the $\RR^{2,0}$ or $\RR^{1,1}$ angle between $v_t$ and $w_t$.
It is easy to check that this definition does not depend on the paths of vectors $v_t, w_t$ chosen, nor does it depend on whether $\theta_{v_t w_t}$ is measured in $\RR^{2,0}$ or $\RR^{1,1}$ because in all cases
\begin{equation}\label{eqn:dot-angle}
\dot \theta_{vw} = \left\{ \begin{matrix} |v_2 - w_2| & \text{ if } v_1w_1 = 1 \\ -|v_2 - w_2| & \text{ if } v_1w_1 = -1 \end{matrix}. \right. 
\end{equation}

\begin{Proposition}
The angle $\dot \theta_{vw}$ between two directions $v,w$ is invariant under the symmetry group $G$ of $\RR^{1,0,1}$ defined in Section~\ref{sec:degen-geom}.
\end{Proposition}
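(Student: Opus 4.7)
The plan is a direct verification combining the explicit description of $G$ in dimension two (from Proposition~\ref{pr:group}) with formula~\eqref{eqn:dot-angle}. In dimension two, under the $Q$-orthogonal splitting $\RR^2 = V \oplus U$, both summands are one-dimensional and $\OO(Q_V) = \OO(Q_U) = \{\pm 1\}$. Hence by Proposition~\ref{pr:group}, the linear part of any element $g \in G$ takes the form
\[
\begin{pmatrix} \epsilon_1 & 0 \\ b & \epsilon_2 \end{pmatrix}, \qquad \epsilon_1, \epsilon_2 \in \{\pm 1\}, \ b \in \RR.
\]
Since $\dot\theta_{vw}$ is a function of the tangent vectors $v, w$ only, the translational part of $g$ plays no role and it suffices to verify invariance under this linear action.

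I would then carry out two elementary checks. First, $(gv)_1(gw)_1 = \epsilon_1^2 v_1 w_1 = v_1 w_1$, so whether $(v, w)$ falls into the case $v_1 w_1 = 1$ or $v_1 w_1 = -1$ of~\eqref{eqn:dot-angle} is preserved by the action. Second, writing $(gv)_2 = b v_1 + \epsilon_2 v_2$ and analogously for $gw$, the shear parameter $b$ enters the relevant combination of second coordinates from~\eqref{eqn:dot-angle} only through a term of the form $b(v_1 \pm w_1)$. The key observation is that in each case the appropriate sign of $\pm$ is exactly the one for which the case hypothesis on $v_1 w_1 = \pm 1$ forces $v_1 \pm w_1 = 0$; concretely, when $v_1 = w_1$ one has $(gv)_2 - (gw)_2 = b(v_1 - w_1) + \epsilon_2(v_2 - w_2) = \epsilon_2(v_2 - w_2)$, and when $v_1 = -w_1$ one has $(gv)_2 + (gw)_2 = b(v_1 + w_1) + \epsilon_2(v_2 + w_2) = \epsilon_2(v_2 + w_2)$. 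Passing to absolute values then absorbs $\epsilon_2 \in \{\pm 1\}$ and, read through~\eqref{eqn:dot-angle}, yields $\dot\theta_{gv, gw} = \dot\theta_{vw}$.

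There is no substantive obstacle: the entire content is the algebraic coincidence that the shear parameter $b$ of a general element of $G$ couples to precisely the combination of $v_1, w_1$ killed by the relevant case hypothesis. A more conceptual alternative would be to exploit the fact that $G$ is, by construction, the Chabauty limit of the non-degenerate isometry groups $G_t = \Isom(Q_t)$, each of which preserves its own $Q_t$-angle; however, transferring this invariance through the rescaling $L_t$ to the degenerate limit $t \to 0$ is somewhat subtle, since for $g = \lim_{t \to 0} g_t$ with $g_t \in G_t$ the naive choice of path $\tilde v_t = g_t v_t$ does not satisfy $L_t \tilde v_t \to gv$ (the shear contribution blows up as $1/t$). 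The direct matrix computation above is therefore preferable.
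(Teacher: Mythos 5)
Your proof is correct in substance, but it takes a genuinely different route from the paper's. The paper proves the proposition exactly by the ``conceptual alternative'' you describe and then set aside: it picks $g_t \in G_{t^2}$ with $g_t \to g$ and observes that $\mathcal L_t := L_{1/t}\, g_t\, L_t$ is an honest isometry of the non-degenerate form $Q$, hence preserves $\theta_{v_t,w_t}$ exactly; since $L_t(\mathcal L_t v_t) = g_t L_t v_t \to gv$, the paths $\mathcal L_t v_t, \mathcal L_t w_t$ are admissible for $gv, gw$, and differentiating at $t=0$ gives the invariance. So the subtlety you correctly identify (the naive path $g_t v_t$ fails because the shear blows up under $L_t$) is precisely what the conjugation by $L_t$ is there to fix; the limit argument is not actually problematic once the right path is chosen. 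Your direct matrix computation is a legitimate and more elementary substitute, and it has the merit of making visible exactly which algebraic feature of $G$ (the shear coupling only to $v_1 \mp w_1$) is responsible for the invariance, whereas the paper's argument buys uniformity: it never needs the explicit form of $G$ or of $\dot\theta$, and would work verbatim in higher dimensions.

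One point you must make explicit: your verification does not go through with equation~\eqref{eqn:dot-angle} as literally printed. In the case $v_1 w_1 = -1$ the printed formula reads $-|v_2 - w_2|$, but the quantity your shear computation shows to be invariant is $|v_2 + w_2|$ --- and indeed a direct expansion (e.g.\ $1+\cos\theta_{v_tw_t} \approx \tfrac12(b_t+d_t)^2$ for Euclidean unit vectors with $a_t \to 1$, $c_t \to -1$) shows the correct value is $\dot\theta_{vw} = -|v_2+w_2|$ in that case; with the printed $-|v_2-w_2|$ the statement would actually be false, since $(gv)_2-(gw)_2 = 2bv_1 + \epsilon_2(v_2-w_2)$ depends on the shear $b$. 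So either state and use the corrected formula, or note the typo; as written, your phrase ``read through~\eqref{eqn:dot-angle}'' papers over the discrepancy.
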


\begin{proof}
Let $v,w\in \RR^{1,0,1}$, let $g\in G$, and let $v'=gv, w'=gw'$. Let $(v_t)_{t\in (0,1)}$ and 
$(w_t)_{t\in (0,1)}$ be chosen such that $L_tv_t\to v$ and $L_t w_t\to w$ as $t\to 0$, as above. 
Let $(g_t)_{t\in (0,1)}$, $g_t\in G_{t^2}$, be such that $g_t\to g$, where $G_{t^2}$ was defined in Section \ref{sec:degen-geom}. Then $g_tL_tv_t\to v', g_tL_tw_t\to w'$.
Hence since $\mathcal{L} = L_{1/t}g_tL_t$ preserves $Q$, we have:
$$ \theta_{v'w'} = \frac d{dt}_{|t=0} \theta_{\mathcal{L}v_t,\mathcal{L}w_t} = \frac d{dt}_{|t=0} \theta_{v_t,w_t}
=\theta_{vw}~. $$
\end{proof}

\begin{Proposition}
Consider a non-degenerate triangle $Q_{1,0,1}$. Then the sum of the $Q_{1,0,1}$ interior angles of the triangle is zero.
\end{Proposition}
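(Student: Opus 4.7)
The plan is to reduce the identity to the Euclidean fact that the interior angles of a planar triangle sum to $\pi$, by means of a rescaling limit. The underlying observation is that, by the definition given just above the proposition, the $\RR^{1,0,1}$ interior angles are derivatives at $t=0$ of Euclidean (or $\RR^{1,1}$-) angles along one-parameter families of vectors whose rescaling $L_t$ converges to the prescribed $\RR^{1,0,1}$-unit vectors.

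First I would choose coordinates and label the vertices so that they read $P_i = (a_i, c_i) \in \RR^{1,0,1}$, $i = 1, 2, 3$, with $a_1 < a_2 < a_3$. The non-degeneracy hypothesis (every edge vector has non-zero first coordinate) forces the three numbers $a_i$ to be pairwise distinct, so this ordering is available after a relabeling. I would then introduce the one-parameter family of Euclidean triangles $\Delta_t$ with vertices $P_i(t) := (a_i, t c_i)$, $t \in [0, \varepsilon)$. For every $t > 0$, $\Delta_t$ is a non-degenerate triangle in $\RR^{2,0}$, and its three Euclidean interior angles $\theta_1(t), \theta_2(t), \theta_3(t)$ satisfy $\theta_1(t) + \theta_2(t) + \theta_3(t) = \pi$ identically in $t$.

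The crux is to show that each $\theta_i$ extends smoothly across $t = 0$ and that $\theta_i'(0)$ is exactly the $\RR^{1,0,1}$ interior angle $\dot\theta_i$ at $P_i$. Indeed, the two edge vectors at $P_i(t)$ are $v_t^{(j)} := (a_j - a_i,\, t(c_j - c_i))$ for the two choices $j \neq i$; after dividing each by the positive number $|a_j - a_i|$, we obtain a path of vectors whose image under $L_t$ converges exactly to the $\RR^{1,0,1}$-unit edge vector from $P_i$ to $P_j$. Since the Euclidean angle is scale-invariant, $\theta_i(t)$ coincides with the Euclidean angle between these two normalized paths, and so by the definition of $\dot\theta$ as a derivative, $\theta_i'(0) = \dot\theta_i$. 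Smoothness at $t=0$ follows from a routine Taylor expansion of $\cos\theta_i(t)$: at the outer vertices $P_1, P_3$ one gets $\cos \theta_i(t) = 1 - O(t^2)$, while at the middle vertex $P_2$ one gets $\cos \theta_2(t) = -1 + O(t^2)$, each with an explicit quadratic coefficient.

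Differentiating the identity $\theta_1(t) + \theta_2(t) + \theta_3(t) = \pi$ in $t$ at $t = 0$ then immediately yields $\dot\theta_1 + \dot\theta_2 + \dot\theta_3 = 0$. The main technical point requiring care is the smoothness of $\theta_2$ at $t = 0$, since the middle angle tends to $\pi$ as the triangle degenerates; the Taylor expansion above handles this and shows $\theta_2'(0) \leq 0$, consistent with the sign convention of $\eqref{eqn:dot-angle}$ when the two unit vectors point in opposite horizontal directions. A completely algebraic alternative proof is to substitute the three $\RR^{1,0,1}$-unit edge vectors into $\eqref{eqn:dot-angle}$ directly and verify the cancellation using the convex-combination identity $(a_3 - a_1)m_{13} = (a_2 - a_1)m_{12} + (a_3 - a_2)m_{23}$ (with $m_{ij} := (c_j - c_i)/(a_j - a_i)$), which forces $m_{13}$ to lie between $m_{12}$ and $m_{23}$ and produces the telescoping relation that closes the proof.
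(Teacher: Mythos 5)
Your proposal is correct and follows the paper's own (one-line) argument: the paper simply says to differentiate the Euclidean/Minkowski angle-sum formula, and you carry out exactly that rescaling-limit computation, usefully supplying the details (the explicit family $P_i(t)=(a_i,tc_i)$ and the one-sided smoothness of the degenerating middle angle) that the paper leaves implicit.
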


\begin{proof}
Take the derivative of the analogous formula in Euclidean or Minkowski geometry.
\end{proof}

\begin{Proposition}\label{prop:strict-angles}
Let $p_1, p_2$ be two points with non-degenerate displacement and let $p_3, p_4$ be points in $\RR^{1,0,1}$ lying in the same connected component of the set of points which are non-degenerate with respect to both $p_1$ and $p_2$. Suppose that $p_4$ is in the interior of the triangle $\Delta p_1 p_2 p_3$.
Then the $\RR^{1,0,1}$ angles $\measuredangle p_1 p_3 p_2, \measuredangle p_1 p_4 p_2$ satisfy that 
$$\measuredangle p_1 p_3 p_2 < \measuredangle p_1 p_4 p_2.$$ 
\end{Proposition}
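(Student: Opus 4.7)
The strategy is to reduce the claim to a statement about parabolas in the plane, which will then follow from convexity. By the invariance of the $\RR^{1,0,1}$-angle under the group $G$ from Section~\ref{sec:degen-geom}, I will first apply a shearing transformation so that $p_1 = (x_1,0)$ and $p_2 = (x_2,0)$ both lie on the $x$-axis, with $x_1 < x_2$ (the case $x_1 = x_2$ is excluded because $p_1-p_2$ is non-degenerate). The components of points non-degenerate with respect to both $p_1,p_2$ are then $\{x<x_1\}$, $\{x_1<x<x_2\}$, and $\{x>x_2\}$. Swapping the roles of $p_1,p_2$ interchanges the two outer components, so it suffices to treat Case A in which $p_3,p_4$ both lie in $\{x>x_2\}$, and Case B in which they both lie in $\{x_1<x<x_2\}$. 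Finally, reflecting through the $x$-axis (another $\RR^{1,0,1}$-isometry) lets me assume $y_3 > 0$, and then $y_4 > 0$ since $p_4$ is in the interior of $\Delta p_1p_2p_3$.

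For any non-degenerate $p = (x,y)$ write $a(p) = y/[(x-x_1)(x-x_2)]$ for the leading coefficient of the unique parabola $y = a(x-x_1)(x-x_2)$ through $p_1, p_2, p$. Using the definition of the $\RR^{1,0,1}$-angle as the $t$-derivative at $t=0$ of the Euclidean angle in the rescaled configuration, a direct computation (equivalently applying formula \eqref{eqn:dot-angle} to the unit vectors $\frac{1}{|x-x_1|}(p-p_1)$ and $\frac{1}{|x-x_2|}(p-p_2)$) gives
$$\measuredangle p_1\,p\,p_2 \;=\; \sigma(p)\,|a(p)|\,(x_2-x_1),$$
where $\sigma(p) = +1$ in Case A and $\sigma(p) = -1$ in Case B. Since $p_3,p_4$ share a component, $\sigma(p_3) = \sigma(p_4) =: \sigma$, so the desired inequality $\measuredangle p_1p_3p_2 < \measuredangle p_1p_4p_2$ reduces to showing $|a(p_4)| > |a(p_3)|$ in Case A and $|a(p_4)| < |a(p_3)|$ in Case B.

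Let $\mathscr P$ denote the parabola through $p_1, p_2, p_3$. In Case A, $a(p_3) > 0$ so $\mathscr P$ is convex; between any two of its points $\mathscr P$ lies strictly below the corresponding chord, so on $(x_1,x_3)$ it lies strictly below the chord $p_1p_3$ and on $(x_2,x_3)$ strictly below $p_2p_3$. Hence $\mathscr P$ lies strictly below the lower boundary $p_2p_3$ of the triangle throughout $(x_2,x_3)$, which is precisely the $x$-range of interior points in the component of $p_3$. Any such $p_4$ therefore satisfies $y_4 > a(p_3)(x_4-x_1)(x_4-x_2)$, and dividing by the positive quantity $(x_4-x_1)(x_4-x_2)$ yields $a(p_4) > a(p_3) > 0$, so $|a(p_4)| > |a(p_3)|$ strictly. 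In Case B, $a(p_3) < 0$ so $\mathscr P$ is concave and lies strictly above the chords $p_1p_3,p_2p_3$, hence strictly above the upper boundary of the triangle throughout $(x_1,x_2)$; an interior $p_4$ gives $y_4 < a(p_3)(x_4-x_1)(x_4-x_2)$, and since $(x_4-x_1)(x_4-x_2) < 0$ the division flips the inequality to $a(p_4) > a(p_3)$ with both negative, so $|a(p_4)| < |a(p_3)|$ strictly. The main technical point is really just the sign bookkeeping in the two cases; a softer approach via rescaling $p_i(t) = (x_i, ty_i)$ and differentiating the classical Euclidean inequality $\angle p_1(t)\,p_4(t)\,p_2(t) > \angle p_1(t)\,p_3(t)\,p_2(t)$ at $t = 0$ yields only a weak inequality between the $\RR^{1,0,1}$-angles, so the parabola argument above is needed to extract strictness.
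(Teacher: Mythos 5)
Your proof is correct, but it takes a genuinely different route from the paper's. The paper disposes of this proposition in one line: realize the $\RR^{1,0,1}$ angles as $t$-derivatives of Euclidean (or Minkowski) angles in a rescaled configuration and ``differentiate'' the corresponding Euclidean inequality. As you rightly point out, that soft argument only yields $\measuredangle p_1p_3p_2 \leq \measuredangle p_1p_4p_2$ a priori, since both angles vanish (or equal $\pi$) at $t=0$ and differentiating a strict inequality at a point of equality gives a weak one; your closed-form expression $\measuredangle p_1\,p\,p_2 = \sigma(p)\,|a(p)|\,(x_2-x_1)$ in terms of the leading coefficient of the parabola through $p_1,p_2,p$ is what delivers strictness, and it is essentially an infinitesimal inscribed-angle theorem that also immediately yields the $Q_{1,0,1}$ case of Proposition~\ref{prop:Thales}. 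The normalization steps (shear, reflection, swap of $p_1,p_2$) are all legitimately carried out by elements of $G$, which preserve the angle. One caution: your parenthetical claim that the formula follows from ``applying \eqref{eqn:dot-angle}'' is only true after correcting a sign in that displayed equation --- as printed, the opposite-component case reads $-|v_2-w_2|$, whereas the derivative definition (and the requirement that the angles of a triangle sum to zero) forces $-|v_2+w_2|$; your direct computation produces the correct value, so nothing in your argument is affected, but a literal application of the printed formula in your Case B would not reproduce it.
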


\begin{proof}
This follows from the definition of the $\RR^{1,0,1}$ angles in terms of limits of angles
between triples of points in $\RR^2$ (or $\RR^{1,1}$), and from the corresponding inequality in 
$\RR^2$.
\end{proof}

\subsection{Interior angles and edge angles}

Let $Q = Q_{1,1}$ or $Q_{1,0,1}$.
There are two collections of angles naturally assigned to a $Q$-Delaunay triangulation $\Del_{Q_{1,1}}(X)$. Each triangle has three \emph{interior angles}, two positive and one negative, which sum to zero.
\begin{Definition}\label{def:interior-angles}
Let $Q = Q_{1,1}$ or $Q = Q_{1,0,1}$ and let $X$ be a finite set in $\RR^2$ in spacelike and generic position with respect to $Q$. The \emph{interior angles} of $\Del_Q(X)$ is the collection of interior angles of all triangles of $\Del_Q(X)$, listed in increasing order.
\end{Definition}

Each edge $e$ of $\Del_{Q_{1,1}}(X)$ is assigned an \emph{edge angle} defined in terms of the intersection between the empty $Q$-circles circumscribing the triangles adjacent to that edge. To make a precise definition, consider two
regions $R_1, R_2$ in the plane, each with smooth boundary $S_1 = \partial R_1$, $S_2 = \partial R_2$. 
The angle formed by $R_1$ and $R_2$ at a point $p$ in the intersection $S_1 \cap S_2$ of their boundaries is defined to be the angle $\varphi$ formed by the two unit vectors $v_1, v_2$ tangent to $S_1$ and $S_2$ at $p$ and in the direction of traversal placing $R_1$ and $R_2$ on the left with respect to some (any) fixed ambient orientation. In the case that $R_1 = B_1$, $R_2 = B_2$ are $Q$--balls, the corresponding $Q$--circles $S_1$ and $S_2$ will typically intersect at two points. It is easy to check that the angle formed by $B_1$ and $B_2$ at both points is the same, making the notion of intersection angle between $Q$--circles well defined.

\begin{figure}[h]
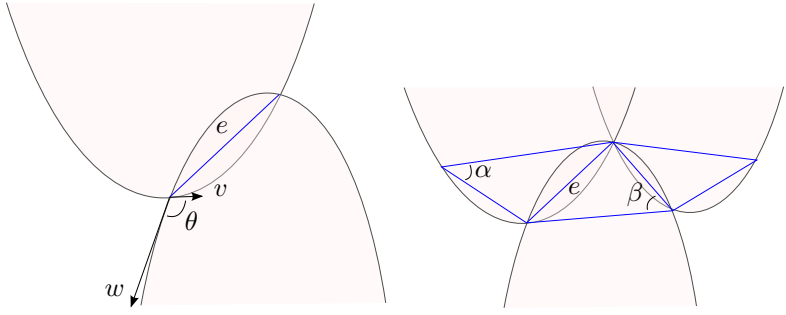

{
\centering

\def\svgwidth{5.0cm}
\input{Delaunay-edge-angles.pdf_tex}
\def\svgwidth{5.0cm}
\input{Delaunay-edge-angles-2.pdf_tex}

}
\caption{The edge angle $\theta$ associated to $e$ is the angle between the unit vectors $v$ and $w$ tangent to the two empty $Q$-circles circumscribing the triangles adjacent to $e$. Alternatively, $\theta$ is the sum of the interior angles $\alpha, \beta$ opposite to $e$.}
\end{figure}

\begin{Definition}\label{def:dihedral-angles-triangulation}\label{def:edge-angles}
Let $X$ be a finite set of points in $\RR^2$ in spacelike and general position with respect to a quadratic form $Q$ and let $\mathcal T$ be any triangulation of the convex hull of $X$. Consider an edge $e$ of $\mathcal T$. If $e$ is an interior edge, then it is adjacent to two triangles $t_1, t_2$. In this case, the \emph{edge angle} at $e$ of $\mathcal T$ is the angle $\varphi$ formed by the $Q$ balls $B_1$ and $B_2$ circumscribing $t_1$ and $t_2$ respectively. 
If $e$ is an exterior edge, then $e$ is adjacent to one triangle $t_1$. In this case, the \emph{edge angle} at $e$ is the angle between the ball $B_1$ circumscribing $t_1$ and the half-plane whose intersection with the convex hull of $X$ is precisely $e$. Typically we will denote by $\Phi: E(\mathcal T) \to \RR$ the function which assigns to each edge $e$ in the set of edges $E(\mathcal T)$ the dihedral angle $\Phi(e)$ of $\mathcal T$ at $e$.
\end{Definition}

As for Euclidean Delaunay decompositions, there is a simple relation between the interior angles
and the edge angles:

\begin{Proposition}
Let $e$ be an edge of $\Del_Q(X)$. Then the edge angle at $e$ is the sum of the interior angles opposite to $e$ in the triangles adjacent to $e$. If $e$ is an interior edge, then there are two such triangles, whereas if $e$ is an exterior edge, than there is one such triangle.
\end{Proposition}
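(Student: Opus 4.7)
The plan is to deduce the statement from a higher-signature version of the classical tangent-chord (inscribed angle) theorem, which we call the \emph{generalized Thales' theorem}. Concretely, suppose $e=AB$ is an interior edge adjacent to triangles $t_1 = \triangle ABC_1$ and $t_2=\triangle ABC_2$, with circumscribing $Q$-balls $B_1$ and $B_2$ respectively, and let $\alpha,\beta$ be the interior angles at $C_1,C_2$. I claim that the $Q$-angle at $A$ between the unit tangent $v_1$ to $\partial B_1$ (oriented so $B_1$ lies on the left, which places $v_1$ on the $C_2$-side of the line $AB$) and the vector $\overrightarrow{AB}$ equals $\alpha$, and symmetrically the angle between $v_2$ and $\overrightarrow{BA}$ equals $\beta$. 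Adding these two tangent-chord angles gives the edge angle at $e$, which is the $Q$-angle between $v_1$ and $v_2$.

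The first step is to establish this tangent-chord theorem for $Q$-circles. In the Euclidean case $Q=Q_{2,0}$ it is classical. In the Minkowski case $Q=Q_{1,1}$, the $Q$-circle $\partial B_1$ is a single orbit of the stabilizer of its center $O$ inside $\Isom(\RR^{1,1})$, namely the one-parameter group of hyperbolic rotations. Hence the triangles $\triangle OAC_1$, $\triangle OAB$, $\triangle OBC_1$ are each isoceles in the Minkowski sense, and the classical proof of the inscribed angle theorem (splitting the inscribed triangle by the radii to the three vertices and using the base-angle equality in each isoceles piece, together with the fact that $Q_{1,1}$-angles in a triangle sum to zero as for $Q_{1,0,1}$ via the same derivative argument, up to the usual sign convention of Section~\ref{ssc:angles11}) goes through verbatim with the hyperbolic trigonometric functions in place of the circular ones.

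For the degenerate case $Q=Q_{1,0,1}$, the tangent-chord theorem cannot be proved directly because the parabolas $\partial B$ have no finite center. Here I would deduce it from the Euclidean or Minkowski case by passing to a rescaled limit: apply Theorem~\ref{thm:rescaled} together with Lemma~\ref{lem:converging-balls} to realize $A,B,C_1$ and $\partial B_1$ as the rescaled limits under $L_t$ of data $A_t,B_t,C_{1,t}, \partial B_{1,t}$ living in $(\RR^2,Q_t)$ with $Q_t = Q_V + t Q_U$. The $Q_t$ tangent-chord theorem from the previous paragraph yields that the $Q_t$-angle between the $Q_t$-tangent of $\partial B_{1,t}$ at $A_t$ and the chord direction equals the interior $Q_t$-angle of $\triangle A_tB_tC_{1,t}$ at $C_{1,t}$; taking $\frac{d}{dt}\big|_{t=0}$ of both sides, the definition of $\dot\theta$ via formula~\eqref{eqn:dot-angle} delivers the desired identity in $\RR^{1,0,1}$.

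Once the tangent-chord theorem is available, the proof of the proposition is a one-line sum: at the intersection point $A\in\partial B_1\cap\partial B_2$, the two unit tangent vectors $v_1,v_2$ lie on opposite sides of the line $AB$ (this is where the condition that $B_1,B_2$ lie on opposite sides of $e$ enters, and it follows from the empty $Q$-ball condition applied to $C_2\notin B_1$ and $C_1\notin B_2$), so the $Q$-angle between $v_1$ and $v_2$ decomposes additively through the chord direction into $\alpha+\beta$. For an exterior edge $e$, the second circle $\partial B_2$ is replaced by the supporting line $\ell$ through $e$; since the tangent vector to $\ell$ at $A$ is parallel to $\overrightarrow{AB}$, the edge angle reduces to a single tangent-chord angle, namely $\alpha$.

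The main obstacle is the degenerate case of the tangent-chord step; all the remaining arguments are either classical or reduce to checking signs and orientations. The rescaling argument is the cleanest way to package this, since it avoids having to define a notion of ``center'' or ``radius'' for parabolas in $\RR^{1,0,1}$ and instead leverages the invariance already encoded in Theorem~\ref{thm:rescaled}.
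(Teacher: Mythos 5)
Your proposal is correct and follows essentially the same route as the paper, which disposes of the statement in two sentences: the Minkowski inscribed-angle identity $\measuredangle(bac)=\measuredangle(b0c)/2$ proved via isoceles triangles at the center (your tangent-chord formulation is the standard equivalent variant, and is in fact the version that connects most directly to the tangent-vector definition of the edge angle), followed by a rescaled limit for $Q_{1,0,1}$. Your write-up supplies the bridging details and sign/orientation bookkeeping that the paper leaves implicit, but the underlying argument is the same.
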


In the case $Q = Q_{1,1}$, the proof of this property follows from an elementary geometric remark: given three points 
$a,b,c$ at time-distance $r>0$ from $0$ in $\RR^{1,1}$, then the Minkowski angles satisfy 
$\measuredangle(bac)=\measuredangle(b0c)/2$. In the case $Q = Q_{1,0,1}$, the proof proceeds by taking limits of the Minkowski or Euclidean case as usual.

\subsection{Prescribing the edge angles}\label{sec:dihedral}

The edge angles of any triangulation $\mathcal T$ as in Definition~\ref{def:edge-angles} determines a weighted planar triangulated graph. In the case that the quadratic form $Q$ is not positive definite, the following theorem characterizes exactly which weighted graphs occur as the edge angles of a $Q$--Delaunay triangulation. 

\begin{theorem}\label{thm:prescribe-angles}
Let $\Gamma$ be a triangulated graph in the plane and let $w: E(\Gamma) \to \RR \setminus\{0\}$ be a weight function on the set $E(\Gamma)$ of edges of $\Gamma$. Let $Q$ be either $Q_{1,1}$ or $Q_{1,0,1}$. Then there exists a finite set $X$ in $\RR^2$ in spacelike and general position with respect to $Q$ and an isomorphism of $\Gamma$ with the one-skeleton of the Delaunay triangulation $\Del_{Q}(X)$ which takes the weights $w$ to the edge angles of $\Del_Q(X)$ if and only if the following conditions hold:
\begin{enumerate}
\item For each interior vertex $v$ of $\Gamma$, the vertex sum $\sum_{e \sim v} w(e) = 0$.
\item The edges $e$ for which $w(e) < 0$ form a Hamiltonian path whose endpoints $v_1, v_2$ are exterior vertices.
\item The vertex sum $\sum_{e \sim v} w(e)$ associated to an exterior vertex $v$ is positive if $v = v_1$ or $v= v_2$ and is negative otherwise.
\item If $c$ is a Jordan curve in $\RR^2$ transverse to $\Gamma$ which either crosses two edges of negative weight but does not enclose $v_1$ or $v_2$, encloses $v_1$ and $v_2$ but does not cross any edge of negative weight, or crosses exactly one edge of negative weight and encloses exactly one of $v_1 $ and $v_2$, then $$\sum_{e \in E_c} w(e) - \sum_{v \in V_c } \sum_{e \sim v} w(e) \geq 0$$
where $E_c$ are the edges of $\Gamma$ crossed by $c$ and $V_c$ are the exterior vertices of $\Gamma$ enclosed inside of $c$. Equality occurs if and only if $c$ encircles a single vertex or $c$ encircles all vertices.
\end{enumerate}
\end{theorem}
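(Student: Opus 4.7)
The plan is to deduce Theorem~\ref{thm:prescribe-angles} from Theorem~\ref{thm:DMS} via the convex hull construction of Section~\ref{sec:Hpq}. A finite set $X\subset \RR^2$ in spacelike general position with respect to $Q$ corresponds to a convex ideal polyhedron $\mathscr C$ in $\mathbb X$, where $\mathbb X = \AdS^3$ when $Q = Q_{1,1}$ and $\mathbb X = \HP^3$ when $Q = Q_{1,0,1}$. The 1-skeleton $\Gamma'$ of $\mathscr C$ is a triangulation of the topological 2-sphere $\partial \mathscr C$, which splits along the equator $\partial \CH(X)$ into the ``bottom'' 1-skeleton $\Gamma$ of $\Del_Q(X)$ and a ``top'' 1-skeleton $\Gamma^{\mathrm{top}}$ corresponding to the full $Q$-ball decomposition of Remark~\ref{full}. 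The dihedral angles of $\mathscr C$ at interior edges of $\Gamma$ coincide with the edge angles of $\Del_Q(X)$ from Definition~\ref{def:edge-angles}; those at equator edges give the exterior edge angles; and those at interior edges of $\Gamma^{\mathrm{top}}$ play the analogous role for the dual decomposition.

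For the necessity direction, given $\Del_Q(X) = (\Gamma, w)$, I would consider $(\Gamma', w')$ as above and read off conditions (1)--(4) from conditions (i)--(iii) of Theorem~\ref{thm:DMS}. Condition (1) follows from (i) at interior vertices of $\Gamma$, which have no incident top edges. Since all vertices of $\Gamma'$ lie in $\Gamma$, the Hamiltonian cycle of negative edges provided by (ii) can transition between the bottom and top disks only at exterior vertices; its restriction to $\Gamma$ is therefore a Hamiltonian path, with endpoints $v_1, v_2$ identified as the two transition points. This gives condition (2). Condition (3) follows from (i) applied at exterior vertices, with the signs determined by whether the vertex sits at an endpoint or in the interior of the top arc of the negative cycle. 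Condition (4) translates (iii) by closing up Jordan curves in $\RR^2$ through the top hemisphere: a curve on $S^2$ crossing two negative edges of $\Gamma'$ corresponds precisely to one of the three cases in (4), depending on whether both crossings are in $\Gamma$, both are in $\Gamma^{\mathrm{top}}$, or one is in each.

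For sufficiency, given $(\Gamma, w)$ satisfying (1)--(4), I would construct $(\Gamma', w')$ by attaching to $\Gamma$ along $\partial \CH(X)$ a triangulated top disk $\Gamma^{\mathrm{top}}$ with vertex set equal to the exterior vertices of $\Gamma$ and containing a distinguished edge $e^\ast$ from $v_1$ to $v_2$. One assigns $e^\ast$ the negative weight that closes the Hamiltonian path of negative edges in $\Gamma$ into a Hamiltonian cycle in $\Gamma'$, and assigns weights to the remaining top edges so that the vertex sum at each exterior vertex of $\Gamma'$ vanishes. One then verifies that $(\Gamma', w')$ satisfies conditions (i)--(iii) of Theorem~\ref{thm:DMS}, yielding a convex ideal polyhedron $\mathscr C \subset \mathbb X$ unique up to isometry. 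The bottom faces of $\mathscr C$ project to $\RR^2$ via the appropriate affine chart (selected to realize the prescribed bottom/top splitting), producing the point set $X$ with $\Del_Q(X) = (\Gamma, w)$.

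The principal difficulty lies in the sufficiency direction: one must show the extension $(\Gamma', w')$ can always be constructed consistently, which in particular requires that $v_1$ and $v_2$ be non-adjacent on $\partial \CH(X)$ (a property that should follow from the combined conditions (2) and (3), since a shared boundary edge between $v_1$ and $v_2$ cannot simultaneously complete the negative cycle in $\Gamma'$ and leave the path in $\Gamma$ non-closed), and that the remaining top weights give sign patterns compatible with (iii) of Theorem~\ref{thm:DMS} as a consequence of (4). The combinatorial choice of $\Gamma^{\mathrm{top}}$ beyond the edge $e^\ast$ is not uniquely determined by $(\Gamma, w)$, but a consistency argument should show that every valid choice yields the same bottom data. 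Finally, while the argument applies directly in both $\mathbb X = \AdS^3$ and $\mathbb X = \HP^3$ via the corresponding case of Theorem~\ref{thm:DMS}, the degenerate case $Q = Q_{1,0,1}$ admits an alternative route via the rescaled-limit interpretation of Section~\ref{sec:Hpq}, taking limits of $\AdS^3$-realizations from the $Q_{1,1}$ case.
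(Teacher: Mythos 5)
Your overall strategy---reduce to Theorem~\ref{thm:DMS} via the ideal polyhedron in $\AdS^3$ or $\HP^3$---is the right one and is the paper's, but the specific reduction you propose has a genuine gap. You work with the convex hull $\mathscr C$ of $X'$ alone and try to match conditions (1)--(4) for the bottom graph $\Gamma$ against conditions (i)--(iii) for the full sphere graph $\partial\mathscr C$, whose top hemisphere is the ``full $Q$-ball'' decomposition. This fails in two places. First, the dihedral angle of $\mathscr C$ at an equator edge is the intersection angle of the circumscribed $Q$-circles of the adjacent bottom and top triangles, which is \emph{not} the exterior edge angle of Definition~\ref{def:edge-angles} (that angle is measured against the supporting half-plane of $\CH(X)$, not against the top face's circumscribed ball); so the weights $w$ on exterior edges of $\Gamma$ do not appear as dihedral angles of $\mathscr C$. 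Second, and more seriously for the sufficiency direction, your extension $(\Gamma',w')$ is badly overdetermined: a triangulated top disk on the $k$ exterior vertices has only $k-3$ diagonals, while the vertex-sum condition (i) imposes $k$ constraints at the exterior vertices, so for $k\geq 4$ there is generically no admissible assignment of top weights at all. Adding interior vertices to the top disk does not help, since the Hamiltonian cycle of negative edges would then have to visit them. The ``consistency argument that every valid choice yields the same bottom data'' cannot be carried out because valid choices generally do not exist.

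The paper's device, which repairs both problems at once, is to adjoin a single ideal vertex $v_\infty=[0:0:1:0]$ at infinity and take $\mathscr C_2=\CH(X'\cup\{v_\infty\})$, whose top is the cone over the boundary polygon. The vertical walls through $v_\infty$ project exactly to the half-planes appearing in Definition~\ref{def:edge-angles}, so \emph{all} edge angles of $\Del_Q(X)$, exterior ones included, become dihedral angles of $\mathscr C_2$; and since the new graph $\Gamma'$ has exactly $k$ new edges (one cone edge per exterior vertex), the $k$ vertex-sum conditions determine the new weights uniquely, namely $w'(vv_\infty)=-\sum_{e\sim v}w(e)$. With this canonical, bijective correspondence $(\Gamma,w)\leftrightarrow(\Gamma',w')$, condition (3) says precisely that the two negative cone edges are $v_1v_\infty$ and $v_2v_\infty$, closing your Hamiltonian path into the Hamiltonian cycle of (ii); condition (i) at $v_\infty$ becomes the ``$c$ encircles everything'' case of (4); and a Jordan curve on the sphere crossing a cone edge corresponds to a planar curve enclosing that exterior vertex, which is how the correction term $-\sum_{v\in V_c}\sum_{e\sim v}w(e)$ in (4) arises. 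I recommend you replace your top-hemisphere extension with this cone construction; the rest of your translation of (i)--(iii) into (1)--(4) then goes through essentially as you describe.
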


\begin{proof}
First, consider a set $X$ of finitely many points in spacelike and generic position with respect to $Q$. As in the proof of Theorem~\ref{thm:Del-gen}, we lift $X$ to a point set $X'$ lying on the graph of $Q$ in $\RR^{3}$. The Delaunay triangulation $\Del_Q(X)$ is seen on the bottom of the convex hull $\mathscr C$ of $X'$. As described in Section~\ref{sec:Hpq}, the graph of $Q$ in $\RR^{3}$ is naturally an open dense subset of the ideal boundary $\partial \mathbb X$ of either $\mathbb X = \AdS^3 \subset \RP^3$ if $Q = Q_{1,1}$ or $\mathbb X = \HP^3 \subset \RP^3$ if $Q = Q_{1,0,1}$. 

Let us now add an extra vertex $v_\infty = [0:0:1:0]$ and denote the convex hull of $X'$ and $v_\infty$ by $\mathscr C_2$. Then the faces of $\mathscr C_2$ consist of the bottom faces of $\mathscr C$ (corresponding to the triangles of $\Del_Q(X)$) as well as infinite vertical walls, as viewed in $\RR^3 \subset \RP^3$ along the edges corresponding to exterior edges of $\Del_Q(X)$. It is straightforward to show that the edge angle at an edge $e$ of $\Del_Q(X)$ (Definition~\ref{def:dihedral-angles-triangulation}) is exactly the same as the dihedral angle of the corresponding edge of $\mathscr C_2$ as measured in $\mathbb X$. Indeed, the conformal semi-Riemannian structure on $\partial \mathbb X$ is compatible with the semi-Riemannian structure on $\mathbb X$. Further, since the sum of the dihedral angles around an ideal vertex of $\mathscr C_2$ are equal to zero (see \cite{dan_pol}), the dihedral angle along one of the new vertical edges $e$ of $\mathscr C_2$ is exactly minus the sum of the edge angles of the edges of $\Del_Q(X)$ incident to the vertex $v$ to which $e$ projects.

\begin{figure}[h]
{
\centering
\def\svgwidth{7.0cm}
\input{convex-hull-with-point-at-infinity.pdf_tex}
}
\caption{I.}
\end{figure}

Hence, Theorem~\ref{thm:prescribe-angles} follows directly from Theorem~\ref{thm:DMS}: Let $\Gamma'$ denote the triangulated graph on the sphere obtained from $\Gamma$ by first adding a single vertex $v_\infty$ at infinity and then connecting all exterior vertices of $\Gamma$ to $v_\infty$ with an edge. There is a one-to-one correspondence between weight functions on $\Gamma$ satisfying the conditions of Theorem~\ref{thm:prescribe-angles} and weight functions on $\Gamma'$ satisfying the conditions of Theorem~\ref{thm:DMS}. The weights on the new edges of $\Gamma'$ are determined by Condition~(i). Condition (i) for the new vertex $v_\infty$ corresponds to the case that $c$ encircles the entire polygon in Condition (4). Condition (ii) corresponds to Conditions (2) and (3). Finally, in Condition (iii), that $c$ crosses a new edge of $\Gamma'$ is equivalent to $c$ encircling the corresponding exterior vertex of $\Gamma$.
\end{proof}

\subsection{Interior angle optimization}

Consider a triangulation $\mathcal T$ of a finite point set $X$ in spacelike and general position with respect to $Q$, where $Q = Q_{1,1}, Q_{1,0,1},$ or $Q = Q_{2,0}$. The \emph{angle sequence} of $\mathcal T$ is the ordered list, sorted from smallest to largest with repetition, of the interior angles of all the triangles in $\mathcal T$. Given two triangulations $\mathcal T_1, \mathcal T_2$, we say that $\mathcal T_1$ is \emph{fatter} than $\mathcal T_2$ if the angle sequence of $\mathcal T_1$ is greater than the angle sequence of $\mathcal T_2$ with respect to the lexicographic ordering, i.e. if the first angle for which the angles sequences of $\mathcal T_1$ and $\mathcal T_2$ disagree is larger for $\mathcal T_1$.

\begin{theorem}\label{thm:angle-optimization}
Let $X$ be a finite set of points in $\RR^2$ in spacelike and general position with respect to $Q$,
where $Q = Q_{1,1}, Q_{1,0,1},$ or $Q_{2,0}$.
Then the $Q$--Delaunay triangulation $\Del_Q(X)$ is the unique fattest triangulation of $X$: the angle sequence of $\Del_Q(X)$ is larger, with respect to the lexicographic ordering, than the angle sequence of any other triangulation of $X$.
\end{theorem}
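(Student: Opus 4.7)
The plan is to carry out the classical Lawson edge-flip argument in the higher signature setting, using a generalized Thales theorem as the key geometric input. Since any two triangulations of a fixed finite point set (with the same boundary) are connected by a finite sequence of edge flips, each exchanging one diagonal of a convex quadrilateral for the other, it suffices to prove that each flip from a non-Delaunay to a Delaunay configuration strictly increases the lex-sorted angle sequence. Iterating, since $\Del_Q(X)$ is the unique triangulation with no locally non-Delaunay interior edge, the result follows.

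The central lemma, a generalized Thales theorem, asserts: if four points lie on a common $Q$-circle, then the $Q$-angles they subtend from the circle are equal when the apex points lie on the same arc, and are negatives of each other when they lie on opposite arcs; in particular, the sum of opposite interior angles of a cyclic quadrilateral is zero, consistent with the fact that every triangle has angle sum zero in $\RR^{1,1}$ and in $\RR^{1,0,1}$. For $Q = Q_{2,0}$ this is classical. For $Q = Q_{1,1}$ it follows because constant-$Q$-distance loci are orbits of the Lorentz rotation subgroup and $\RR^{1,1}$-angles (as defined in Section~\ref{ssc:angles11}) are invariant under those rotations. For $Q = Q_{1,0,1}$ it follows by applying Theorem~\ref{thm:rescaled} to a family $Q_t$ degenerating onto $Q_{1,0,1}$, together with the definition of the degenerate angle as the $t$-derivative of the nondegenerate angles. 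Combining this with Proposition~\ref{prop:strict-angles} (and its entirely analogous $\RR^{1,1}$-version, which is proved by the same Euclidean limit argument), one obtains the local Delaunay criterion: the diagonal $AC$ of a convex quadrilateral $ABCD$ is $Q$-Delaunay if and only if $\measuredangle ABC + \measuredangle ADC > 0$, equivalently $D$ lies strictly outside the $Q$-circle through $A, B, C$; and then $BD$ is necessarily not Delaunay.

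Given this, the lexicographic comparison of the two triangulations' six-tuples of interior angles proceeds exactly as in the Euclidean case, vertex by vertex: at each of $A, B, C, D$ the total interior angle is split differently by the two diagonals, and the Thales/monotonicity statements show that each of the six angles of the Delaunay triangulation is strictly larger than the smallest of the six angles of the non-Delaunay triangulation. Summing over a flip sequence from an arbitrary triangulation $\mathcal T$ to $\Del_Q(X)$ then gives the strict lexicographic domination and the uniqueness. The main obstacle is the sign bookkeeping: in $\RR^{1,1}$ and $\RR^{1,0,1}$ each triangle has one negative interior angle, so ``fattening'' amounts to raising the most negative angle first, and ``inside'' vs.\ ``outside'' a $Q$-circle (a hyperbola or parabola) must be defined consistently with the empty $Q$-ball condition of Definition~\ref{def:Del-gen} rather than by the Euclidean complement of the conic; both issues are resolved cleanly using the convex hull interpretation from Section~\ref{sec:Hpq}, where an edge flip corresponds to switching the diagonal of a coplanar quadruple of ideal vertices of the polyhedron~$\mathscr C$, and ``inside'' the circle translates to ``below'' the supporting plane of the corresponding face.
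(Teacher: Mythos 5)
Your overall strategy --- Lawson's flip algorithm driven by a generalized Thales theorem --- is a legitimate alternative to the paper's argument, and the two share the same geometric core (Proposition~\ref{prop:Thales} and the convex-hull lift). The paper, however, avoids flips entirely: it takes a triangulation $\mathcal T$ that is lexicographically optimal (one exists by finiteness), observes that optimality restricted to each convex quadrilateral of $\mathcal T$ forces the local empty-$Q$-circle condition (Lemma~\ref{lem:quads}), and then concludes that the lift of $\mathcal T$ to the graph of $Q$ is locally convex along every edge, hence convex, hence equal to $\partial \mathscr C_-$, i.e.\ $\mathcal T = \Del_Q(X)$. This buys two things your route must pay for separately: it never needs termination or connectivity of the flip process (only the local-to-global convexity step, which you also need in order to know that a triangulation with no locally non-Delaunay edge is $\Del_Q(X)$), and it only needs the weak quadrilateral statement that the Delaunay diagonal gives the fatter of the two triangulations, not a comparison of each of the six new angles against the old minimum. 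You would also need to spell out the routine multiset lemma showing that raising the minimum of the six affected angles raises the whole sorted angle sequence lexicographically.

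The concrete gap is your ``local Delaunay criterion.'' The claim that the diagonal $AC$ of a convex quadrilateral $ABCD$ is Delaunay if and only if $\measuredangle ABC + \measuredangle ADC > 0$ cannot be correct for $Q = Q_{1,1}$ or $Q_{1,0,1}$: that sum is exactly the edge angle of $AC$ (Definition~\ref{def:edge-angles} and the proposition following it), and Theorem~\ref{thm:prescribe-angles} shows that in a $Q$--Delaunay triangulation the edges of negative edge angle form a Hamiltonian path through all vertices, so interior Delaunay edges occur with \emph{both} signs and no inequality on the sign of the opposite-angle sum characterizes local Delaunayness. The error lives precisely in the ``sign bookkeeping'' you defer: the Euclidean cyclic-quadrilateral identity and the inside/outside monotonicity do not combine here as they do in the plane, because ``inside'' a $Q$--ball (the sublevel set of Definition~\ref{def:Del-gen}, not a Euclidean complementary region of the conic) interacts with the connected components of the spacelike locus, which is exactly what the same-component hypothesis of Proposition~\ref{prop:Thales} is designed to control. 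The repair is to argue as the paper does in Lemma~\ref{lem:quads}: work directly with the empty-ball condition, locate the minimal angle among the twelve angles of the quadrilateral, and apply Thales only in a configuration where the component hypothesis is verified. With that fix your flip argument can be pushed through, but it ends up reproving the paper's quadrilateral lemma and its convexity step while adding the flip machinery on top.
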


Of course, this theorem is well-known in the case $Q = Q_{2,0}$ of Euclidean geometry. As in that setting, to prove the theorem, we must first consider the simple case of a quadrilateral. To analyze that case, we need a generalization of the classical Thales' Theorem. The proof in this general setting is essentially the same.

\begin{Proposition}[Thales' Theorem]\label{prop:Thales}
Let $p_1, p_2, p_3$ be three distinct points in spacelike position lying on a $Q$--circle $S$ and let $p_4$ and $p_5$ be points on the same side of the line $\overline{p_1 p_2}$ as $p_3$ and which lie inside and outside of $S$ respectively. Further assume that $p_3, p_4, p_5$ all lie in the same connected component of the set of points that are in spacelike position with respect to $p_1$ and $p_2$. Then $\measuredangle p_1 p_4 p_2 > \measuredangle p_1 p_3 p_2 > \measuredangle p_1 p_5 p_2$.
\end{Proposition}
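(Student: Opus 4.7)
The plan is to adapt the classical Euclidean proof, which factors as (a) the inscribed angle equality for points on the same arc of the $Q$-circle $S$, followed by (b) a strict monotonicity statement reducing to the fact that $\measuredangle p_1 q p_2$ strictly increases as $q$ moves from the boundary of a triangle $\Delta p_1 p_2 p$ into its interior. Step (b) will be supplied by Proposition~\ref{prop:strict-angles} in the case $Q = Q_{1,0,1}$, by its direct analogue in the case $Q = Q_{1,1}$, and by the classical Euclidean statement in the case $Q = Q_{2,0}$.

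For step (a), we show that for any two points $q, q'$ lying on the same arc of $S$ (that is, in the same connected component of $S \setminus \{p_1, p_2\}$ lying on the side of $\overline{p_1 p_2}$ containing $p_3$) one has $\measuredangle p_1 q p_2 = \measuredangle p_1 q' p_2$. For $Q = Q_{2,0}$ this is classical. For $Q = Q_{1,1}$, we invoke the elementary Minkowski identity recorded in the paper just after Definition~\ref{def:edge-angles}: for three points $a,b,c$ at common $Q$-distance $r$ from a point $O$ in $\RR^{1,1}$, the Minkowski angles satisfy $\measuredangle bac = \tfrac{1}{2}\measuredangle bOc$. Applied with $O$ the $Q$-center of $S$ and $(a,b,c) = (q, p_1, p_2)$ or $(q', p_1, p_2)$, this gives $\measuredangle p_1 q p_2 = \tfrac{1}{2} \measuredangle p_1 O p_2 = \measuredangle p_1 q' p_2$. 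For $Q = Q_{1,0,1}$, the equality then follows either by direct computation from formula~\eqref{eqn:dot-angle}, or by realizing $Q_{1,0,1}$ as the limit of non-degenerate quadratic forms $Q_t$ as in Section~\ref{sec:degen} and passing to the limit, using Lemma~\ref{lem:converging-balls} and the continuity of angles under the rescalings $L_t$.

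For step (b), to establish $\measuredangle p_1 p_4 p_2 > \measuredangle p_1 p_3 p_2$, consider the ray from $p_1$ through $p_4$. The connected-component hypothesis guarantees that this ray remains in the spacelike locus of $\{p_1,p_2\}$ until it meets $S$ at a point $q_0$ lying on the same arc as $p_3$. Since $p_4$ lies strictly inside $S$, we may perturb $q_0$ a short distance along the arc away from $\overline{p_1 p_2}$ to a new point $q$ on the same arc, so that $p_4$ is displaced from the edge $[p_1, q_0]$ into the interior of $\Delta p_1 p_2 q$. Proposition~\ref{prop:strict-angles} (or its analogue in the cases $Q = Q_{1,1}$ and $Q = Q_{2,0}$) yields $\measuredangle p_1 p_4 p_2 > \measuredangle p_1 q p_2$, and step (a) identifies the right-hand side with $\measuredangle p_1 p_3 p_2$. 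The second inequality $\measuredangle p_1 p_3 p_2 > \measuredangle p_1 p_5 p_2$ is entirely symmetric: the segment $[p_1, p_5]$ crosses $S$ at a point $q_0'$ on the correct arc, and a small perturbation along the arc produces a point $q'$ on the arc lying in the interior of $\Delta p_1 p_2 p_5$; the monotonicity statement then yields $\measuredangle p_1 q' p_2 > \measuredangle p_1 p_5 p_2$, and step (a) identifies the left-hand side with $\measuredangle p_1 p_3 p_2$.

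The principal technical obstacle is handling the connected-component hypothesis correctly: in $\RR^{1,1}$ and $\RR^{1,0,1}$ the $Q$-sphere $S$ is a pair of disjoint hyperbola branches (respectively parabolas), and the spacelike locus of $\{p_1,p_2\}$ has several connected components, so a naive extension of a ray or segment from $p_1$ could leave this component before reaching the arc of $S$ containing $p_3$. Controlling this requires a careful description of how $S$ sits inside the relevant component, which is star-shaped at both $p_1$ and $p_2$ thanks to the spacelike position of all the points and their common component. Once this picture is in hand, the perturbation arguments in step (b) amount to routine transversality.
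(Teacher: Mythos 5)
Your proposal is correct and follows essentially the same route as the paper's proof: both reduce the statement to (a) constancy of the inscribed angle $\measuredangle p_1 q p_2$ as $q$ varies along the arc of $S$ (via the center/isosceles or central-angle argument in the non-degenerate cases, and rescaled limits as in Lemma~\ref{lem:converging-balls} for $Q_{1,0,1}$), and (b) strict monotonicity of the angle as the apex moves into the interior of the triangle, supplied by Proposition~\ref{prop:strict-angles} and its analogues. Your step (b) is merely a more explicit version of the paper's ``without loss of generality we may assume $\Delta p_1p_2p_4 \subset \Delta p_1p_2p_3 \subset \Delta p_1p_2p_5$,'' and you correctly flag the connected-component subtlety that the paper leaves implicit.
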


\begin{figure}[h]
{
\centering

\def\svgwidth{6.0cm}
\input{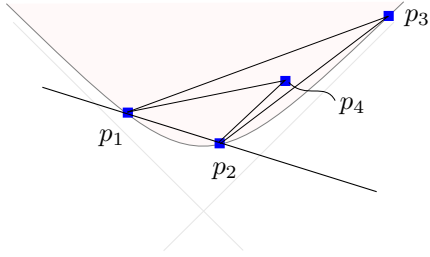}

}
\caption{In the proof of Proposition~\ref{prop:Thales}, the angle $\measuredangle p_1 p_3 p_2$ is locally constant as $p_3$ is varied, so WLOG we may assume $\Delta p_1 p_3 p_2$ contains $\Delta p_1 p_4 p_2$.}
\end{figure}

\begin{proof}
In the case that $Q$ is non-degenerate (so $Q = Q_{2,0}$ or $Q = Q_{1,1}$), the $Q$--circle has a center $O$. Observe that the triangles $\Delta O p_1 p_3$ and $\Delta O p_2 p_3$ are isosceles, from which it easily follows that the angle $\measuredangle p_1 p_3 p_2$ is locally constant as $p_3$ is varied along $S$. So without loss in generality we may assume that $\Delta p_1 p_2 p_4 \subset \Delta p_1 p_2 p_3 \subset \Delta p_1 p_2 p_5$ and the result follows.

In the case $Q = Q_{1,0,1}$, the argument is almost the same. However, in this case a $Q$--circle $S$ does not have a center (the center is at $\infty$). 
Nonetheless, the claim that the angle $\measuredangle p_1 p_3 p_2$ is locally constant as $p_3$ is varied along $S$ still holds by taking the derivatives of the same fact for $Q = Q_{2,0}$ as rescaled $Q_{2,0}$-circles converge to $S$ as in Lemma~\ref{lem:converging-balls}. So again without loss in generality we may assume that $\Delta p_1 p_2 p_4 \subset \Delta p_1 p_2 p_3 \subset \Delta p_1 p_2 p_5$, and the result follows using the strict inequality of Proposition~\ref{prop:strict-angles}.
\end{proof}

\begin{Lemma}[Angle optimization for quadrilaterals]\label{lem:quads}
Consider a convex quadrilateral $R$ whose vertices $X$ are in spacelike general position with respect to $Q$. Then, of the two triangulations of $X$, the $Q$--Delaunay triangulation $\Del_Q(X)$ is the fatter triangulation.
\end{Lemma}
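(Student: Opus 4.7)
Label the cyclically ordered vertices of the convex quadrilateral as $p_1, p_2, p_3, p_4$, and assume without loss of generality that $\Del_Q(X)$ uses the diagonal $p_1 p_3$, producing triangles $T_1 = \Delta p_1 p_2 p_3$ and $T_2 = \Delta p_1 p_3 p_4$. The other triangulation $\mathcal T'$ uses the diagonal $p_2 p_4$ and consists of the triangles $T_1' = \Delta p_1 p_2 p_4$ and $T_2' = \Delta p_2 p_3 p_4$. By the empty $Q$--ball condition, $p_4$ lies strictly outside the circumscribed $Q$--ball $S_1$ of $T_1$, and $p_2$ lies strictly outside the circumscribed $Q$--ball $S_2$ of $T_2$. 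The plan is to generalize Lawson's classical edge-flip comparison from the Euclidean setting.

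The key step is to apply Proposition~\ref{prop:Thales} (generalized Thales) to the four pairs (chord, circle) given by $(p_1 p_2, S_1)$, $(p_2 p_3, S_1)$, $(p_1 p_4, S_2)$, and $(p_3 p_4, S_2)$. In each case, convexity of the quadrilateral places the two ``third vertices'' on the same side of the chord, and spacelike position of $X$ (together with a short check) places them in the same non-degeneracy component relative to the chord endpoints, so the hypotheses of Proposition~\ref{prop:Thales} apply. The resulting four strict inequalities $\measuredangle p_1 p_4 p_2 < \measuredangle p_1 p_3 p_2$, $\measuredangle p_2 p_4 p_3 < \measuredangle p_2 p_1 p_3$, $\measuredangle p_1 p_2 p_4 < \measuredangle p_1 p_3 p_4$, and $\measuredangle p_3 p_2 p_4 < \measuredangle p_3 p_1 p_4$ pair each of the four ``split'' angles of $\mathcal T'$ at $p_2, p_4$ with a strictly larger split angle of $\Del_Q(X)$ at $p_1, p_3$.

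To finish, I would pair the two remaining Delaunay angles (at $p_2, p_4$) with two of the four $\mathcal T'$ angles on the opposite side, using the additivity relations at each vertex of the quadrilateral (e.g.\ $\measuredangle p_1 p_2 p_3 = \measuredangle p_1 p_2 p_4 + \measuredangle p_4 p_2 p_3$). In the Euclidean case, all angles are positive so each such sum strictly exceeds each of its summands, and combining this with the Thales inequalities yields a bijection between the six angles of $\Del_Q(X)$ and the six of $\mathcal T'$ under which every Delaunay angle strictly exceeds its partner; a standard sorting lemma then promotes this pointwise strict matching to the strict lexicographic inequality of the sorted angle sequences. The main obstacle is the last step in the higher-signature geometries $Q_{1,1}$ and $Q_{1,0,1}$, where some angles are negative and the naive ``sum exceeds summand'' inequality can fail. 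In this case I would carry out a case analysis using the spacelike position of $X$ and Proposition~\ref{prop:strict-angles} (monotonicity of the angle under inclusion of the viewing vertex) to determine the relevant signs and complete the pairing; an alternative route in the degenerate case $Q_{1,0,1}$ is to pass to the limit of Euclidean or Minkowski Delaunay triangulations via Theorem~\ref{thm:rescaled}.
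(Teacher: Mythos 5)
Your four Thales inequalities are correct and point in the right direction, but the step that assembles them into a conclusion fails: the claimed bijection between the six angles of $\Del_Q(X)$ and the six angles of $\mathcal T'$ under which every Delaunay angle strictly exceeds its partner does not exist in general, already in the Euclidean case. Your Thales inequalities pair all four split angles of $\mathcal T'$ (at $p_2,p_4$) with the four split angles of $\Del_Q(X)$ (at $p_1,p_3$), so any such bijection would have to match the two remaining $\mathcal T'$ angles --- the full quadrilateral angles $\measuredangle p_2p_1p_4$ and $\measuredangle p_2p_3p_4$ --- against the two remaining Delaunay angles $\measuredangle p_1p_2p_3$ and $\measuredangle p_3p_4p_1$, and no inequality holds there. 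Concretely, for the thin kite $p_1=(0,0.1)$, $p_2=(-1,0)$, $p_3=(0,-0.1)$, $p_4=(1,0)$ with $Q=Q_{2,0}$, the Delaunay diagonal is $p_1p_3$, the sorted Delaunay angles are approximately $(0.20,0.20,1.47,1.47,1.47,1.47)$ while those of $\mathcal T'$ are approximately $(0.10,0.10,0.10,0.10,2.94,2.94)$: no Delaunay angle exceeds $2.94$, so no domination bijection exists, and the sorted Delaunay sequence does not dominate the other positionwise --- the ``standard sorting lemma'' you invoke would prove something false. The lemma survives only because lexicographic comparison is decided at the first discrepancy.

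The repair is to aim solely at the minimum angle, which is what the paper does: without loss of generality the minimum of all twelve angles determined by triples of $X$ is $\measuredangle p_1p_2p_4$, an angle subtending a \emph{side} of $R$ (this reduction uses the additivity of angles at a vertex together with a sign analysis in the non-Euclidean cases, where the minimum is negative); Thales applied to the chord $p_1p_4$ and the $Q$-circle through $p_1,p_2,p_4$ then forces $p_3$ to lie inside that circle, so the triangulation containing $\Delta p_1p_2p_4$ is simultaneously non-Delaunay and the one containing the strictly smallest of the twelve angles. Hence $\Del_Q(X)$ has strictly larger minimum angle and wins lexicographically in the first position. Your inequalities can be rerouted to the same end (they show the smallest split angle of $\mathcal T'$ is strictly less than every split Delaunay angle, and additivity handles the full angles), but note that the sign bookkeeping you defer for $Q_{1,1}$ and $Q_{1,0,1}$ --- where $A=B+C$ need not exceed $B$ --- is exactly the nontrivial content, and multiplying the number of pairings to be checked makes it harder, not easier.
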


\begin{proof}
The proof is nearly identical to the proof in the classical Euclidean case $Q = Q_{2,0}$.
Let $p_1, p_2, p_3, p_4$ denote the points of $X$, cyclically ordered around the boundary of $R$. Without loss in generality, the minimum (which a priori might not be unique) of the twelve angles formed by any three of the four points is $\measuredangle p_1 p_2 p_4$. Note that if $Q = Q_{1,1}$ or $Q = Q_{1,0,1}$, then $\measuredangle p_1 p_2 p_4$ will be negative. In that case, $p_3$ and $p_4$ must lie in the same connected component of the set of points in spacelike position with respect to $p_1$ and $p_2$. Since $p_3$ does not lie on the $Q$--circle $S$ passing through $p_1, p_2, p_4$, it lies either inside or outside $S$. By Proposition~\ref{prop:Thales}, $p_3$ must lie inside $S$. Hence, of the two triangulations of $R$, the one containing $\Delta p_1 p_2 p_4$ both fails to be the fatter of the two and fails to be the $Q$--Delaunay triangulation.
\end{proof}

\begin{proof}[Proof of Theorem~\ref{thm:angle-optimization}]
Let $\mathcal T$ be a triangulation of $X$ with optimal angle sequence. In other words $\mathcal T$ is the fattest triangulation or possibly tied for fattest.
Consider any quadrilateral $R$ in $\mathcal T$. We will show that the empty $Q$--circle condition holds for~$R$, meaning that each of the two triangles of $R$ has the property that its circumscribed $Q$--circle does not contain the fourth point of $R$. This is true automatically if $R$ is not convex. So assume $R$ is convex. Then the optimality of $\mathcal T$ implies that the triangulation of $R$ also optimizes the angle sequence (six angles) for $R$. Hence, by Lemma~\ref{lem:quads}, the empty $Q$--circle condition holds for $R$. 

Since each quadrilateral satisfies the empty circle condition, it follows that the polygonal surface obtained by lifting the triangles of $\mathcal T$ up to the graph of $Q$ in $\RR^3$ is locally convex along all edges. 
It is therefore a convex surface. It now follows that $\mathcal T$ must be the $Q$--Delaunay triangulation. 
\end{proof}

\section{Outline of applications}
\label{sc:5}

We now briefly outline some possible applications of higher signature Delaunay triangulations.

\subsection{Triangulations in Minkowski space}

The most direct application is to define triangulations with a given vertex
sets in spacelike position in Minkowski space, or in other constant curvature 
Lorentz spaces. 

Suppose for instance that $\Sigma\subset \R^{d-1,1}$ is a spacelike
hypersurface in the $d$-dimensional Minkowski space, and let $X$ be a 
finite set of points of $\Sigma$. The $Q$-Delaunay decomposition with 
vertex set $X$ --- where $Q$ denotes the Minkowski scalar product --- 
provides a triangulation of the convex hull of $X$ which we expect to
have good properties with respect to the ambient geometric structure.

\subsection{Analysis of discrete functions}

There are possible applications of the higher signature Delaunay decomposition not only
for sets of points in Minkoski space, but also in other situations where one
dimension (or more) plays a role fundamentally different from the others.

Consider for instance a finite set $X_0\in \R^{d-1}$ and a $k$-Lipschitz function 
$u:X_0\to \R$ obtained by sampling a function $v:\R^{d-1}\to \R$
at the points of $X_0$. After multiplying $u$ by a constant, we can suppose
that $u$ is $k$-Lipschitz, for some $k<1$. Let $X\subset \R^d$ be the graph of 
$u$. We can then consider the $Q$-Delaunay decomposition with vertex set $X$,
where 
$$ Q = dx_1^2 + \cdots + dx_{d-1}^2- dx_d^2~. $$

We expect that the combinatorics of $\Del_Q(X)$ can be useful in analysing
the points of $X_0$ where $u$ takes particularly interesting or relevant
values. Typically, points of $X$ which are vertices of ``large'' simplices
of $\Del_Q(X)$ could be particularly relevant.

For functions which are not {\em a priori} Lipschitz, it might be more 
relevant to consider the $Q'$-Delaunay decomposition with vertex set $X$,
where $Q'$ is the degenerate quadric
$$ Q' = dx_1^2 + \cdots + dx_{d-1}^2~. $$

\subsection{Proximity Graphs} 

Proximity graphs are undirected graphs in which the points spanning an edge are close in some defined sense. 

\subsubsection{Minimum Spanning Tree and Traveling Salesperson Cycle} 

Given a finite set $X \subset \R^d$, a {\em Euclidean Minimum Spanning Tree} $\mathrm{MST}(X)$ of $X$ is an (undirected) graph with 
minimum summed Euclidean edge lengths that connects all points in $X$ and which has only 
the points of $X$ as vertices. It is easy to see that it has no cycle 
(that is, that it is really a tree).
We generalize this notion to the setting of points $X$ in spacelike and general position with respect to a quadratic form $Q$ on $\RR^d$. We may measure distance between two points $x_1, x_2 \in X$ by the function $d_Q(x_1, x_2) = \sqrt{Q(x_1-x_2)}$. As in the Euclidean case, we define a \emph{$Q$-minimum spanning 
tree} $\mathrm{MST}_Q(X)$  of $X$ as a tree with spacelike edges of minimum total length 
having as vertices exactly the points of $X$. 
Any such tree satisfies the following remarkable property, well-known in the case $Q$ is the Euclidean norm.

\begin{Proposition}\label{prop:mst}
Let $X$ be in spacelike and general position with respect to the quadratic form $Q$. Then a $Q$-minimum spanning tree
$\mathrm{MST}_Q(X)$ is a sub-graph of the one-skeleton of the $Q$-Delaunay triangulation $\Del_Q(X)$.
\end{Proposition}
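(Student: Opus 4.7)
The plan is to mimic the classical Euclidean argument: assume for contradiction that $e = \{x,y\}$ is an edge of $\mathrm{MST}_Q(X)$ that is not a $Q$--Delaunay edge, produce a third point $z \in X$ that is closer in $d_Q$ to both $x$ and $y$ than they are to each other, and then execute a tree-swap to obtain a spanning tree of strictly smaller total length. The subtlety is to identify a single canonical $Q$--ball through $x$ and $y$ whose non-emptiness forces the required inequality, since in indefinite or degenerate signature there is no longer a ``smallest'' ball through $x$ and $y$ to replace the Euclidean diameter ball.

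By the convex-hull characterization in the proof of Theorem~\ref{thm:Del-gen}, the edge $\{x,y\}$ lies in $\Del_Q(X)$ if and only if some empty $Q$--ball has both $x$ and $y$ on its boundary $Q$--sphere; under the contradiction hypothesis, every such ball must contain a further point of $X$ in its interior. I would apply this to the explicit ball
$$ B_m := \{w \in \RR^d : Q(w - m) \leq D\}, \qquad m := \tfrac{1}{2}(x+y), \quad D := \tfrac{1}{4} Q(x-y), $$
after rewriting its defining inequality in the form~\eqref{eqn:Qball2} (with $\varphi(w) = 2 \langle w, m\rangle_Q$ and constant $D - Q(m)$) to confirm that $B_m$ is a $Q$--ball in the sense of Definition~\ref{def:Qball} even when $Q$ is degenerate. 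A direct calculation gives $Q(x-m) = Q(y-m) = D$, so $x,y \in \partial B_m$, and hence there exists $z \in X \setminus \{x,y\}$ with $Q(z-m) < D$.

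The key algebraic step, valid for any quadratic form $Q$, follows by setting $u := z - m$ and writing $x - z = \tfrac{1}{2}(x-y) - u$, $y - z = -\tfrac{1}{2}(x-y) - u$: the cross-terms cancel in the sum and give the identity
$$ Q(x-z) + Q(y-z) \;=\; \tfrac{1}{2}Q(x-y) \,+\, 2Q(u) \;<\; Q(x-y), $$
the strict inequality coming from $Q(u) < D$. Since $X$ is in spacelike position, both summands on the left are strictly positive, so each is strictly less than $Q(x-y)$; taking positive square roots yields $d_Q(x,z) < d_Q(x,y)$ and $d_Q(y,z) < d_Q(x,y)$.

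The argument then closes with the standard spanning-tree swap: removing $e$ from $\mathrm{MST}_Q(X)$ splits it into two subtrees $T_x \ni x$ and $T_y \ni y$, and $z$ lies in exactly one of them, say $T_x$; adding $\{z,y\}$ reconnects the two pieces into a spanning tree whose total $d_Q$--length has strictly decreased by $d_Q(x,y) - d_Q(z,y) > 0$, contradicting minimality. The main obstacle is conceptual rather than computational: in the Euclidean case the ``Thales ball'' is usually singled out as the unique smallest ball through $x$ and $y$, whereas in signature $(p,q)$ with $q > 0$, or for degenerate $Q$, the centers of $Q$--balls through $x$ and $y$ form an affine hyperplane on which $Q$ has no minimum. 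Replacing ``smallest ball'' by the midpoint-centered ball $B_m$, and using the symmetric algebraic identity above in place of a metric inequality, is exactly what makes the argument work uniformly across all signatures, definite, indefinite, or degenerate.
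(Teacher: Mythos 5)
Your proof is correct and follows essentially the same route as the paper's: the same midpoint-centered ball $B_m$ of radius-squared $\tfrac14 Q(x-y)$, the same algebraic computation showing $Q(x-z)+Q(y-z)<Q(x-y)$ (the paper expands the inner products directly rather than using your symmetric substitution $u=z-m$, but it is the same identity), the same use of the spacelike hypothesis to make both summands positive, and the same tree swap. The only genuine difference is that you handle degenerate $Q$ directly by observing that $B_m$ is still a $Q$--ball in the sense of Definition~\ref{def:Qball}, whereas the paper reduces that case to the non-degenerate one via Theorem~\ref{thm:rescaled}; your uniform treatment is a slight simplification.
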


\begin{proof}
We need only to treat the case that $Q$ is non-degenerate. The case of $Q$ degenerate follows from Theorem \ref{thm:rescaled} by approximating $Q$ by a one-parameter family of non-degenerate quadratic forms.

Consider an edge $[x, y]$ of $\mathrm{MST}_Q(X)$ spanned by two vertices $x, y \in X$. Let $c \in [x, y]$ be the center point of the interval and consider the $Q$-ball $B$ centered at $c$ and of radius $d_Q(x, y)/2$. Specifically, $B$ is defined by the inequality $Q(z-c) < Q(y-c) = Q(x-c)$.
We show that all other points $z \in X$ lie outside of $B$. Suppose by contradiction that there is $z \in X$ such that $Q(z-c) < Q(y-c) = Q(x-c)$. We first show that $Q(z-x), Q(z-y) < Q(x-y)$. Indeed, if $Q$ is Euclidean, then this follows easily from the triangle inequality. However, we note that if $Q$ is not positive definite, the triangle inequality fails even for points in spacelike position.  Set $D^2 = Q(x-y)$. Then
\begin{align*}
D^2 &= \langle x-y, x-y \rangle_Q \\ &= Q(x) + Q(y) +2\langle x,y\rangle_Q~,
\end{align*}
and therefore 
$$ 2 \langle x, y \rangle_Q = - D^2 + Q(x) + Q(y)~. $$
Next, using that $Q\left(z-\frac{x}{2} - \frac{y}{2}\right) < \left(\frac{D}{2}\right)^2$ and the above expression for $\langle x, y\rangle_Q$, we have:
$$ Q(z) +\frac{1}{4}Q(x)+ \frac{1}{4}Q(y)-\langle z,x\rangle_Q -\langle z, y\rangle_Q + 
\frac{1}{2}\langle x, y\rangle_Q < \frac{D^2}{4} $$
so that
$$ Q(z) + \frac{1}{2}Q(x) + \frac{1}{2}Q(y) - \langle z,x\rangle_Q - \langle z,y\rangle_Q < 
\frac{D^2}{2} $$ 
and thus
$$ Q(z-x) + Q(z-y) < D^2~.$$

Hence, since both $Q(z-x), Q(z-y)$ are positive, we have that $Q(z-x), Q(z-y) < D^2 $ and so $d_Q(x,z), d_Q(y,z) < D= d_Q(x,y)$.
Next, there is some path in $\mathrm{MST}_Q(X)$ that connects $z$ to either $x$ or $y$ not passing through $[x,y]$. Without loss in generality, $z$ is connected to $y$ by such a path. Replacing the segment $[x,y]$ with $[z,x]$ yields a spanning tree with smaller total $Q$-length which contradicts that $\mathrm{MST}_Q(X)$ was minimal. 
Hence there are no points of $X$ inside the ball $B$. Hence if $X'$ denotes the lift of $X$ to the graph of $Q$, as in the proof of Theorem~\ref{thm:Del-gen}, then the plane whose intersection with the graph of $Q$ projects to $\partial B$ is a support plane to the bottom of the convex hull $\mathscr C$ of $X'$ which contains the lift $[x',y']$ of the edge $[x,y]$. Hence $[x',y']$ is an edge of the convex hull of $X'$ and so $[x,y]$ is an edge of $\Del_Q(X)$. 
\end{proof}

In the Euclidean plane, Proposition~\ref{prop:mst} has been used to provide algorithms 
calculating $\mathrm{MST}(X)$ in time $O(n\log n)$, where $n = |X|$, 
while algorithms not using the Delaunay decomposition of $X$ run in $O(n^2)$. In higher dimension, that is, if $d\geq 3$, finding an optimal algorithm remains an open problem.
For higher signatures, finding an optimal algorithm to construct a minimum spanning tree for a set
$X$ of points in spacelike position appears to be an open question.
We did not investigate whether
algorithms that apply in the Euclidean case extend to higher signatures. It might be relatively easy in the Minkowski plane, since the spacelike condition is then 
quite strong, but it is conceivable that Proposition~\ref{prop:mst} can be used for this question 
in the 3-dimensional Minkowski space.

\subsubsection{Relative neighborhood and Gabriel graph} 

There are also natural higher signature analogues of the relative neighborhood graph, introduced and studied by Toussaint~\cite{tou_the} in 1980 in the setting of the Euclidean plane, and the Gabriel graph, introduced by Gabriel--Sokal~\cite{gab_ane} in 1969 in the setting of Euclidean space of any dimension.
Let $X$ be a finite set in $\RR^d$ which is in spacelike position. Then
the {\em relative neighborhood graph} $\mathrm{RNG}_Q(X)$ of $X$ is the 
(undirected) graph which has an edge connecting $x$ to $y$ if $d_Q(x,y) < \max\{d_Q(z,x), d_Q(z,y)\}$ for all $z \in X \setminus\{x,y\}$.  

Next assume $Q$ is non-degenerate. The \emph{Gabriel graph} $\mathrm{GG}_Q(X)$ of $X$ has an edge $[x,y]$ for $x,y \in X$ if the open $Q$-ball whose diameter is the segment $[x,y]$ is empty of all other points of $X$. This notion does not seem to have an analogue for $Q$ degenerate. It is straightforward from the proof of Proposition~\ref{prop:mst} to show:
\begin{Proposition} In the case $Q$ non-degenerate,
$$\mathrm{MST}_Q(X) \subset  \mathrm{RNG}_Q(X) \subset \mathrm{GG}_Q(X) \subset \Del_Q(X).$$
In the case $Q$ degenerate,
$$\mathrm{MST}_Q(X) \subset  \mathrm{RNG}_Q(X) \subset \Del_Q(X).$$
\end{Proposition}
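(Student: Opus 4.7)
The plan is to establish the claimed inclusions one at a time, in each case adapting arguments from the proof of Proposition~\ref{prop:mst}.  First I would prove $\mathrm{MST}_Q(X) \subset \mathrm{RNG}_Q(X)$ (in both cases) by the classical tree-swap argument: if some edge $[x,y]$ of the MST were not in the RNG, then by genericity there would exist $z \in X\setminus\{x,y\}$ with $d_Q(z,x), d_Q(z,y) < d_Q(x,y)$ strictly. Deleting $[x,y]$ splits the MST into subtrees $T_x \ni x$ and $T_y \ni y$, and assuming without loss of generality that $z \in T_y$, the tree $T_x \cup T_y \cup \{[x,z]\}$ is a spanning tree of strictly smaller total $Q$-length, contradicting minimality.

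Next, for $\mathrm{RNG}_Q(X) \subset \mathrm{GG}_Q(X)$ in the non-degenerate case, I would take $[x,y] \in \mathrm{RNG}_Q(X)$ and let $B$ denote the open Gabriel ball with diameter $[x,y]$. The key inequality extracted from the proof of Proposition~\ref{prop:mst} says that $z \in B$ implies
$$Q(z-x) + Q(z-y) < Q(x-y).$$
Combined with the spacelike hypothesis (which makes both left-hand summands positive), this forces $d_Q(z,x), d_Q(z,y) < d_Q(x,y)$, contradicting $[x,y] \in \mathrm{RNG}_Q(X)$; hence $B$ is empty of points of $X$ and $[x,y] \in \mathrm{GG}_Q(X)$. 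For $\mathrm{GG}_Q(X) \subset \Del_Q(X)$, let $[x,y] \in \mathrm{GG}_Q(X)$ and write the empty Gabriel ball $B$ in the form~\eqref{eqn:Qball2}, say $\{z: Q(z) \leq \varphi(z) + D\}$. The graph of the affine function $\varphi + D$ is then a hyperplane $H$ in $\RR^{d+1}$ passing through the lifts $x' = (x,Q(x))$ and $y' = (y,Q(y))$ with all other lifts of points of $X$ lying strictly above; hence $H$ is a supporting hyperplane to the bottom of the convex hull $\mathscr C$ meeting $\mathscr C$ exactly in $[x',y']$, so $[x',y']$ is a $1$-face of $\mathscr C$ and $[x,y]$ is an edge of $\Del_Q(X)$.

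Finally, for the degenerate case, I would observe that the set $B = \{z : Q(z - (x+y)/2) < Q(x-y)/4\}$ is still a bona fide $Q$-ball in the sense of Definition~\ref{def:Qball}, since the expansion $Q(z - (x+y)/2) = Q(z) - 2\langle z, (x+y)/2\rangle_Q + Q((x+y)/2)$ always puts its defining inequality in the form~\eqref{eqn:Qball2} regardless of whether $Q$ is degenerate. The two preceding arguments (``$\mathrm{RNG}$ forces $B$ empty'' and ``empty $Q$-ball with two boundary points gives an edge of $\Del_Q(X)$'') then apply verbatim to yield $\mathrm{RNG}_Q(X) \subset \Del_Q(X)$. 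Alternatively, one could invoke Theorem~\ref{thm:rescaled} and approximate $\mathscr Q$ by non-degenerate forms: pairwise distances with respect to the approximating form converge to the $\mathscr Q$-distances on $X$, so the $\mathrm{RNG}$ condition is inherited for small $t > 0$ and the non-degenerate conclusions transfer via the combinatorial identification $\Del_{Q'}(X_t) \cong \Del_{\mathscr Q}(X)$. The only subtle point in the whole proof is that the displayed inequality produces the individual bounds $d_Q(z,x), d_Q(z,y) < d_Q(x,y)$ precisely because spacelike position makes both left-hand summands strictly positive; this is the one essential use of the spacelike hypothesis and must be tracked carefully throughout.
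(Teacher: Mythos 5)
Your argument is correct and is, for the most part, the natural fleshing-out of what the paper leaves implicit (the paper's entire proof is the remark that the proposition is ``straightforward from the proof of Proposition~\ref{prop:mst}''): you reuse exactly the key inequality $Q(z-x)+Q(z-y)<Q(x-y)$ for $z$ in the diametral ball, together with the supporting-hyperplane argument at the end of that proof. The one genuinely different element is your treatment of the degenerate case. The paper handles degenerate $Q$ in Proposition~\ref{prop:mst} by a limiting argument via Theorem~\ref{thm:rescaled}, and declines to define the Gabriel graph at all when $Q$ is degenerate; you instead observe that the set $\{z: Q(z-\tfrac{x+y}{2})<\tfrac14 Q(x-y)\}$ is still a legitimate $Q$--ball in the sense of Definition~\ref{def:Qball} (the expansion is purely bilinear and never requires inverting $A$), so the non-degenerate argument runs verbatim. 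This is cleaner than the limiting route: your alternative via Theorem~\ref{thm:rescaled} is actually the more fragile of your two options, since $\mathscr Q$-distances see only the base components and ties are then unavoidable, so the ``RNG condition is inherited for small $t$'' step would need strict inequalities that may not hold. One caveat applies to both you and the paper: in the step $\mathrm{MST}_Q(X)\subset\mathrm{RNG}_Q(X)$ you invoke ``genericity'' to upgrade $d_Q(z,x),d_Q(z,y)\le d_Q(x,y)$ to strict inequalities, but generic position as defined in the paper (no $d+1$ points on a hyperplane, no $d+2$ on a $Q$--sphere) does not exclude equal pairwise distances; with the paper's strict-inequality definition of $\mathrm{RNG}_Q$, an equilateral configuration gives an empty relative neighborhood graph and the inclusion literally fails. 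This is repaired either by additionally assuming distinct pairwise $Q$-distances or by relaxing the $\mathrm{RNG}_Q$ inequality to non-strict, and is worth a sentence; it does not affect the remaining inclusions, where your use of the spacelike hypothesis to split the summed inequality into the two individual bounds is exactly right.
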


\subsection{Minimizing interpolation error} 

The Delaunay triangulation $\Del_{Q}(X)$ can also be characterized from a functional approximation point of view. For example, in the Euclidean setting, Lambert \cite{lam_the} showed that the classical Delaunay triangulation maximizes the arithmetic mean of the radius of inscribed circles of the triangles, while Rippa \cite{rip_min} proved that it minimizes the integral of the squared gradient.

Let $\Omega \subset \R^d$ be a bounded domain, let $T$ be a triangulation of $\Omega$, and let $f$ be a function defined on $\Omega$. We define 
$$\mathcal{Q}(T, f, p) = \|f-\hat{f}_{T}\|_{Q, L^p(\Omega)}~,$$ 
where $\hat{f}_{T}$ is the linear interpolation of $f$ based on the triangulation $T$ of $\Omega$, and where we use the quadratic form $Q$. 
The following result, due to Chen--Xu \cite{che_opt} in the Euclidean setting, characterizes the $Q$-Delaunay triangulation as the optimal triangulation for the piecewise linear interpolation over a given point set $X$ of the function $Q(x)$. The proof of Chen--Xu, which uses the convex hull interpretation of the Delaunay triangulation,  is easily generalized to the higher signature setting under the additional hypothesis that the points are in spacelike position.

\begin{Theorem}
Let $X\subset \R^d$ be a finite set in spacelike position with respect to $Q$, then
$$\mathcal{Q}(\Del_{Q}(X), Q(\cdot), p) = \min_{T\in \mathcal{P}(X)}\mathcal{Q}(T, Q(\cdot), p),\text{ for } 1\leq p\leq \infty~,$$ 
where $\mathcal{P}(X)$ is the set of all triangulations that have $X$ as vertices and $\Omega$ is the convex hull of $X$.
\end{Theorem}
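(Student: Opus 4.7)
The plan is to adapt the argument of Chen--Xu~\cite{che_opt}, whose Euclidean proof is built on the convex hull interpretation of the Delaunay triangulation; by the proof of Theorem~\ref{thm:Del-gen}, this interpretation transfers to the higher signature setting precisely under the spacelike hypothesis. The key observation is that if $T$ is any triangulation of $\Omega = \CH(X)$ with vertex set $X$, then the piecewise linear interpolant $\hat{f}_T$ of $Q$ on $T$ has graph equal to the polyhedral surface $\Sigma_T \subset \RR^{d+1}$ obtained by lifting each simplex of $T$ to have vertices in $X' = \{(x, Q(x)) : x \in X\}$. For $T = \Del_Q(X)$, this surface is exactly the lower envelope $\partial \mathscr{C}_-$ of the convex hull $\mathscr{C} = \CH(X')$ constructed in the proof of Theorem~\ref{thm:Del-gen}.

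The whole theorem will follow from the pointwise sandwich
\[
Q(y) \;\leq\; \hat{f}_{\Del_Q(X)}(y) \;\leq\; \hat{f}_T(y) \qquad \text{for all } y \in \Omega.
\]
The right-hand inequality is immediate from convexity of $\mathscr{C}$: each simplex of $\Sigma_T$ has its vertices in $X' \subset \mathscr{C}$, hence lies inside $\mathscr{C}$, hence lies weakly above the lower envelope $\partial \mathscr{C}_-$. The left-hand inequality is exactly the spacelike computation at the end of the proof of Theorem~\ref{thm:Del-gen}: for any simplex of $T$ with vertices $x_1,\dots,x_m \in X$, the affine function $\varphi(y)+D$ interpolating $Q$ at the $x_i$ satisfies $Q(y) \leq \varphi(y) + D = \hat{f}_T(y)$ for every convex combination $y$ of the $x_i$, by~\eqref{eqn:useful}, with equality only at the vertices. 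Applying this to $T = \Del_Q(X)$ yields the first inequality, and in fact shows more generally that every such lifted simplicial surface lies weakly above the graph of $Q$ over $\Omega$.

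Granted the sandwich, one has $|Q(y) - \hat{f}_{\Del_Q(X)}(y)| \leq |Q(y) - \hat{f}_T(y)|$ pointwise on $\Omega$, so integrating (or taking essential supremum for $p=\infty$) gives $\mathcal{Q}(\Del_Q(X), Q, p) \leq \mathcal{Q}(T, Q, p)$ for every $1 \leq p \leq \infty$, independently of which $Q$-compatible measure is used to define the norm. The one conceptual obstacle, compared with the Euclidean argument, is that for non-definite $Q$ the graph of $Q$ is not a convex hypersurface, so the usual picture of $\partial \mathscr{C}_-$ sitting above a convex paraboloid is unavailable; the spacelike hypothesis, via the inequality~\eqref{eqn:useful}, is exactly what replaces convexity in securing the left side of the sandwich. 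Once this observation is in place, the remainder is a direct transcription of the Chen--Xu argument.
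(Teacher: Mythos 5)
Your proof is correct and is exactly the argument the paper has in mind: it invokes the Chen--Xu convex-hull proof, which is precisely your sandwich $Q \leq \hat f_{\Del_Q(X)} \leq \hat f_T$, with the spacelike computation from the proof of Theorem~\ref{thm:Del-gen} (via~\eqref{eqn:useful}) supplying the lower bound in place of the convexity of the Euclidean paraboloid. Nothing further is needed.
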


\subsection{Other applications} 

There are many more other well-known and important applications/generalizations of the classical 
Delaunay triangulation, for example the notion of $\alpha$--shapes \cite{ede_ont} or 
$\beta$--skeletons \cite{kir_afr}, which have important applications in pattern recognition, 
digital shape sampling and processing, and structural molecular biology, among others. 
We encourage the reader to investigate generalizations of such constructions to the 
higher signature setting as needed.

\subsection{Implementation}

There are several efficient algorithms that can be used to compute the Delaunay triangulation
of a set of points in the Euclidean plane or in higher-dimensional space, for instance
an incremental flipping algorithm \cite{ede_sha} or a divide-and-conquer algorithm
\cite{cig_mon}, both 
running in $O(n\log(n))$ in the plane (see \cite{boi_dev} for efficient implementations).
We have not investigated the extent to which these algorithms generalize readily to the setting of $Q$--Delaunay triangulations, nor have we investigated whether similar efficiency can be achieved. Of course any convex hull algorithm may applied to directly calculate $Q$--Delaunay decompositions as in the proof of Theorem~\ref{thm:Del-gen}.

In an online appendix to this 
paper\footnote{\texttt{http://math.uni.lu/schlenker/programs/highersign/highersign.html}},
we provide a crude implementation of the computation of Delaunay decomposition for the 
Euclidean, Minkowski and degenerate bilinear forms in $\R^3$. This implementation is 
provided for illustration only, and is not intended to be computationally efficient.

\providecommand{\bysame}{\leavevmode\hbox to3em{\hrulefill}\thinspace}
\providecommand{\MR}{\relax\ifhmode\unskip\space\fi MR }
\providecommand{\MRhref}[2]{
  \href{http://www.ams.org/mathscinet-getitem?mr=#1}{#2}
}
\providecommand{\href}[2]{#2}


\begin{thebibliography}{10}

\bibitem{boi_dev}
Jean-Daniel Boissonnat, Olivier Devillers, Sylvain Pion, Monique Teillaud, and
  Mariette Yvinec, \emph{Triangulations in {CGAL}}, Comput. Geom. \textbf{22}
  (2002), no.~1-3, 5--19, 16th ACM Symposium on Computational Geometry (Hong
  Kong, 2000). \MR{1893651}

\bibitem{bro_vor}
Kevin~Q Brown, \emph{Voronoi diagrams from convex hulls}, Inf. Process. Lett.
  \textbf{9} (1979), no.~5, 223--228.

\bibitem{che_opt}
Long Chen and Jin-chao Xu, \emph{Optimal {D}elaunay triangulations}, J. Comput.
  Math. \textbf{22} (2004), no.~2, 299--308, Special issue dedicated to the
  70th birthday of Professor Zhong-Ci Shi. \MR{2058939 (2005c:41042)}

\bibitem{cig_mon}
Paolo Cignoni, Claudio Montani, and Roberto Scopigno, \emph{Dewall: A fast
  divide and conquer delaunay triangulation algorithm in $e^d$}, Computer-Aided
  Design \textbf{30} (1998), no.~5, 333--341.

\bibitem{coo_lim}
Daryl Cooper, Jeffrey Danciger, and Anna Wienhard, \emph{Limits of geometries},
  preprint arXiv:1408.4109 (2014).

\bibitem{dan_age}
Jeffrey Danciger, \emph{A geometric transition from hyperbolic to anti-de
  {S}itter geometry}, Geom. Topol. \textbf{17} (2013), no.~5, 3077--3134.
  \MR{3190306}

\bibitem{dan_pol}
Jeffrey Danciger, Sara Maloni, and Jean-Marc Schlenker, \emph{Polyhedra
  inscribed in a quadric}, preprint arXiv:1410.3774 (2014).

\bibitem{del_sur}
Boris Delaunay, \emph{Sur la sph\`ere vide. {A} la m\'emoire de {G}eorge
  {V}oronoi.}, Bull. Acad. Sc. USSR, Ph. Ser. \textbf{6} (1934), 793--800.

\bibitem{ede_sha}
H.~Edelsbrunner and N.~R. Shah, \emph{Incremental topological flipping works
  for regular triangulations}, Algorithmica \textbf{15} (1996), no.~3,
  223--241. \MR{1368251 (96j:65161)}

\bibitem{ede_ont}
Herbert Edelsbrunner, David~G. Kirkpatrick, and Raimund Seidel, \emph{On the
  shape of a set of points in the plane}, IEEE Trans. Inform. Theory
  \textbf{29} (1983), no.~4, 551--559. \MR{713690 (84m:52016)}

\bibitem{gab_ane}
K~Ruben Gabriel and Robert~R Sokal, \emph{A new statistical approach to
  geographic variation analysis}, Syst. Zool. \textbf{18} (1969), no.~3,
  259--278.

\bibitem{kir_afr}
David~G Kirkpatrick and John~D Radke, \emph{A framework for computational
  morphology}, Computational Geometry, Machine Intelligence and Pattern
  Recognition 2, 1985.

\bibitem{lam_the}
Timothy Lambert, \emph{The {Delaunay} triangulation maximizes the mean
  inradius}, Proc. 6th Canad. Conf. Comput. Geom., 1994, pp.~201--206.

\bibitem{rip_min}
Samuel Rippa, \emph{Minimal roughness property of the {D}elaunay
  triangulation}, Comput. Aided Geom. Design \textbf{7} (1990), no.~6,
  489--497. \MR{1079398 (91j:65018)}

\bibitem{riv_ach}
Igor Rivin, \emph{A characterization of ideal polyhedra in hyperbolic
  {$3$}-space}, Ann. of Math. (2) \textbf{143} (1996), no.~1, 51--70.
  \MR{1370757 (96i:52008)}

\bibitem{tou_the}
Godfried~T. Toussaint, \emph{The relative neighbourhood graph of a finite
  planar set}, Pattern Recognition \textbf{12} (1980), no.~4, 261--268.
  \MR{591314 (83h:52001)}

\end{thebibliography}
\end{document}